\theoremstyle{plain}
\newtheorem{theo}{Theorem}[section]
\newtheorem{prop}[theo]{Proposition}
\newtheorem{lemm}[theo]{Lemma}
\theoremstyle{definition}
\newtheorem{defi}[theo]{Definition}
\newtheorem{exam}[theo]{Example}
\newtheorem{rema}[theo]{Remark}
\newtheorem{ntt}{Notation}
\renewcommand{\leq}{\leqslant}
\renewcommand{\geq}{\geqslant}
\newcommand\vect[1]{\mathbf{#1}}
\newcommand{\Ms}{{{\bfM}_\sigma}}
\newcommand{\Msinv}{{{\bfM}^{-1}_\sigma}}
\newcommand{\MFSi}{{\bfM^{\textup{FS}}_1}}
\newcommand{\MFSii}{{\bfM^{\textup{FS}}_2}}
\newcommand{\MFSiii}{{\bfM^{\textup{FS}}_3}}
\newcommand{\MFSx}{{\bfM^{\textup{FS}}_i}}
\newcommand{\MFS}{\bfM^{\textup{FS}}}
\newcommand{\sFS}{\sigma^{\textup{FS}}}
\newcommand{\sFSi}{{\sigma^{\textup{FS}}_1}}
\newcommand{\sFSii}{{\sigma^{\textup{FS}}_2}}
\newcommand{\sFSiii}{{\sigma^{\textup{FS}}_3}}
\newcommand{\sFSx}{{\sigma^{\textup{FS}}_i}}
\newcommand{\SFS}{\Sigma^{\textup{FS}}}
\newcommand{\SFSi}{{\Sigma^{\textup{FS}}_1}}
\newcommand{\SFSii}{{\Sigma^{\textup{FS}}_2}}
\newcommand{\SFSiii}{{\Sigma^{\textup{FS}}_3}}
\newcommand{\SFSx}{{\Sigma^{\textup{FS}}_i}}
\newcommand{\bbN}{{\mathbb N}}
\newcommand{\bbZ}{{\mathbb Z}}
\newcommand{\bbQ}{{\mathbb Q}}
\newcommand{\bbR}{{\mathbb R}}
\newcommand{\mcA}{\mathcal A}
\newcommand{\mcU}{\mathcal U}
\newcommand{\mcL}{\mathcal L}
\newcommand{\bfF}{{\mathbf F}}
\newcommand{\bfFS}{{\mathbf{FS}}}
\newcommand{\bfM}{{\mathbf M}}
\newcommand{\bfP}{{\mathbf P}}
\newcommand{\bfT}{{\mathbf T}}
\newcommand{\bfGa}{{\mathbf \Gamma}}
\newcommand{\frakP}{{\mathfrak P}}
\newcommand{\Otp}{{\mathcal O^+_3}}
\newcommand{\Otpd}[1]{{\mathcal O^+_{#1}}}
\newcommand{\Fthree}{{\mathcal F_3}}
\newcommand{\LFS}{\mathcal L_\textup{FS}}
\newcommand{\Ledge}{\mathcal L_\textup{edge}}
\newcommand{\bfe}{{\mathbf e}}
\newcommand{\bfu}{{\mathbf u}}
\newcommand{\bfv}{{\mathbf v}}
\newcommand{\bfx}{{\mathbf x}}
\newcommand{\bfy}{{\mathbf y}}
\newcommand{\bfz}{{\mathbf z}}
\newcommand{\Gv}{{\mathbf\bfGa_{\bfv}}}
\newcommand{\transp}[1]{{{}^\textup{t} #1}}
\newcommand{\EOS}{{{\mathbf E}_1^\star}}
\newcommand{\EOSS}{{{\mathbf E}_1^\star(\sigma)}}
\newcommand{\svect}[3]{%
\big(\begin{smallmatrix}%
#1 \\ #2 \\ #3%
\end{smallmatrix}\big)}
\newcommand{\myvcenter}[1]{\ensuremath{\vcenter{\hbox{#1}}}}
\setlist{noitemsep}
\newcommand{\TODO}{\colorbox{yellow}{\textcolor{red}{\Large \textbf{TODO}}}}
\title{\textbf{Critical connectedness of \\ thin arithmetical discrete planes}}
\author[1]{Val\'erie Berth\'e}
\author[2]{Damien Jamet}
\author[1,3]{Timo Jolivet}
\author[4]{Xavier Proven\c cal}
\affil[1]{%
    LIAFA, CNRS, Universit\'e Paris Diderot, France
}
\affil[2]{%
    LORIA, Universit\'e de Lorraine, France
}
\affil[3]{%
    Department of Mathematics, University of Turku, Finland
}
\affil[4]{%
    LAMA, Universit\'e de Savoie, France
}
\date{}
\begin{document}

\maketitle

\begin{abstract}
An arithmetical discrete plane is said to have critical connecting thickness if its thickness
is equal to the infimum of the set of values that preserve its $2$-connectedness.
This infimum thickness can be computed thanks to the fully subtractive algorithm.
This multidimensional continued fraction algorithm consists, in its linear form,
in subtracting the smallest entry to the other ones.
We provide a characterization of the discrete planes with critical thickness that have zero intercept
and that are $2$-connected.
Our tools rely on the notion of dual substitution which is a geometric
version of the usual notion of substitution acting on words.
We associate with the fully subtractive algorithm a set of substitutions whose incidence matrix
is provided by the matrices of the algorithm,
and prove that their geometric counterparts generate arithmetic discrete planes.
\end{abstract}

%\tableofcontents
\section{Introduction}
This paper studies the connectedness of thin arithmetic discrete planes in the three-dimensional space.
We focus on the notion of $2$-connectedness, and we restrict ourselves to planes with zero intercept
that have critical thickness, that is, planes whose thickness
is the infimum of the set of all the $\omega \in \bbR_+ $ such that the plane of thickness $\omega$ is $2$-connected
(see Definitions~\ref{def:arith} and~\ref{def:connect}).
Let us recall that standard arithmetic discrete planes are known to be $2$-connected,
whereas naive ones are too thin to be $2$-connected.
We thus consider planes with a thickness that lies between the naive and the standard cases.

The problem of the computation of the critical thickness was raised in~\cite{BB04}.
It has been answered in~\cite{JT06,JT09,DJT09}, with an algorithm
(called the connecting thickness algorithm) that can be expressed
in terms of a multidimensional continued fraction algorithm,
namely the so-called fully subtractive algorithm.
The connecting thickness algorithm explicitly yields the value of the critical thickness when it halts,
and this value can be computed when the algorithm enters a loop (possibly infinite).
Furthermore, the set $\Fthree$ of vectors for which
the algorithm enters an infinite loop has Lebesgue measure zero,
as a consequence of results of~\cite{MN89} in the context of a percolation model
defined by rotations on the unit circle.
Our main goal is to provide a necessary and sufficient condition so that a
discrete plane with intercept zero and critical thickness
is $2$-connected when its normal vector belongs to $\Fthree$.

The tools we use here are combinatorial and are issued from numeration systems and combinatorics on words.
%We also prove that vectors in $\Fthree$ are the only ones
%for which critical arithmetical discrete planes are  $2$-connected.
Our methods rely on a combinatorial generation method based on the notion of substitution
for the planes under study (see Section~\ref{subsec:sub} for more details).
In Section~\ref{sec:trans}, we construct a sequence of finite patterns $(\bfT_n)_n$
of the planes with critical thickness,
and we prove that these patterns are all $2$-connected when the parameters belong to $\Fthree$.
We then relate these finite patterns $\bfT_n$ with thinner patterns $\bfP_n$
that belong to the naive discrete plane with same parameters.
These pattern are generated in terms of a geometric interpretation of the fully
subtractive algorithm via the geometric formalism of dual substitutions.
Finally, in Section~\ref{sec:gen}, the thinner patterns $\bfP_n$ are proved
to generate the full naive plane.
This yields the $2$-connectedness of the critical plane (see Section~\ref{sec:main}).
In other words, we use the fact that the underlying naive plane
provides a relatively dense skeleton of the critical plane.
Theorem~\ref{thm::main} highlights the limit behavior of discrete plane with
critical thickness which is reminiscent of similar phenomena occurring in
percolation theory~\cite{Mee89}.

Note that in~\cite{DomV2012}, Domenjoud and Vuillon also studied the same patterns from the viewpoint
of geometrical closure (an analogue of palindromic closure in word combinatorics),
and used symmetries to build them. From this
approach, they  deduced topological results (including connectedness) and
showed how these patterns generalize Christoffel words to higher dimensions.
The present use of substitutions in order to address the problem of connectedness
provides an original viewpoint on these objects.
This paper is an extended version of~\cite{BJJP2013}. It also extends the study of~\cite{Jam12}
devoted to particular planes the parameters of which belong to the cubic extension generated
by the Tribonacci number.
Observe also that similar results in higher dimension have been proved in~\cite{DomProVui14}.

\section{Basic notions and notation}

\subsection{Discrete and stepped planes}

Let $\left(\bfe_1,\bfe_2,\bfe_3\right)$ be the canonical basis of $\bbR^3$,
and let $\langle \cdot , \cdot \rangle $ stand for the usual scalar product on $\bbR^3$.
Given $\bfv \in \bbR^3$ and $i \in \{1,2,3\}$, we let $\bfv_i = \langle \bfv, \bfe_i \rangle$
denote the $i$th coordinate of $\bfv$ in the basis $\left(\bfe_1,\bfe_2,\bfe_3\right)$.

\begin{defi}[Arithmetical discrete plane~\cite{Rev91,And03}]
\label{def:arith}
Given $\bfv \in \bbR^3_+$ and $\omega \in \bbR_+$,
the \emph{arithmetical discrete plane} with \emph{normal vector} $\bfv$
%\emph{intercept} $\mu$,
and \emph{thickness} $\omega$ is the set
$\frakP(\bfv,\omega)$ defined as follows:
\[
\frakP(\bfv,\omega) = \left \{\bfx \in \bbZ^3 : 0 \leq \langle \bfx,\bfv\rangle < \omega \right\}.
\]
If $\omega = \| \bfv \|_\infty = \max \{|\bfv_1|,|\bfv_2|,|\bfv_3|\}$
(resp. $\omega = \| \bfv \|_1 = |\bfv_1| + |\bfv_2| + |\bfv_3|$),
then $\frakP(\bfv,\omega)$ is said to be a \emph{naive arithmetical discrete plane}
(resp. \emph{standard arithmetical discrete plane}).
\end{defi}

Note that we consider here only planes with zero intercept, where the intercept
$\mu$ allows the more  general definition
$\left \{\bfx \in \bbZ^3 : 0 \leq \langle \bfx,\bfv\rangle +\mu < \omega \right\}.$
Even if any finite subset of a digitized plane can be represented
as a subset of an arithmetical discrete plane with integer parameters (by taking a suitable $\omega$ large enough),
we do not restrict ourselves here to finite sets
and we consider general arithmetical discrete plane with possibly non-integer parameters.
%In fact, in~\cite{Jam12}, the authors exhibit a $2$-connected arithmetical discrete plane
%where the length of a path between two close points is not %bounded.

We will also deal with another discrete approximation of Euclidean planes, namely \emph{stepped planes}.
They can be considered as a more geometrical version,
in the sense that they consist of \emph{unit faces} instead of just points of $\bbZ^3$.

\definecolor{facecolor}{rgb}{0.8,0.8,0.8}
\begin{defi}[Unit faces, stepped planes]
\label{def:stepped}
A \emph{unit face} $[\bfx,i]^\star$ is defined as:
\[
\begin{array}{ccccc}
\lbrack \bfx, 1 \rbrack^\star & = & \{\bfx + \lambda \bfe_2 + \mu \bfe_3 : \lambda,\mu \in [0,1] \} & = &
    \myvcenter{\begin{tikzpicture}
    [x={(-0.216506cm,-0.125000cm)}, y={(0.216506cm,-0.125000cm)}, z={(0.000000cm,0.250000cm)}]
    \fill[fill=facecolor, draw=black, shift={(0,0,0)}]
    (0, 0, 0) -- (0, 1, 0) -- (0, 1, 1) -- (0, 0, 1) -- cycle;
    \node[circle,fill=black,draw=black,minimum size=1.2mm,inner sep=0pt] at (0,0,0) {};
    \end{tikzpicture}} \\
\lbrack \bfx, 2 \rbrack^\star & = & \{\bfx + \lambda \bfe_1 + \mu \bfe_3 : \lambda,\mu \in [0,1] \} & = &
    \myvcenter{\begin{tikzpicture}
    [x={(-0.216506cm,-0.125000cm)}, y={(0.216506cm,-0.125000cm)}, z={(0.000000cm,0.250000cm)}]
    \fill[fill=facecolor, draw=black, shift={(0,0,0)}]
    (0, 0, 0) -- (0, 0, 1) -- (1, 0, 1) -- (1, 0, 0) -- cycle;
    \node[circle,fill=black,draw=black,minimum size=1.2mm,inner sep=0pt] at (0,0,0) {};
    \end{tikzpicture}} \\
\lbrack \bfx, 3 \rbrack^\star & = & \{\bfx + \lambda \bfe_1 + \mu \bfe_2 : \lambda,\mu \in [0,1] \} & = &
    \myvcenter{\begin{tikzpicture}
    [x={(-0.216506cm,-0.125000cm)}, y={(0.216506cm,-0.125000cm)}, z={(0.000000cm,0.250000cm)}]
    \fill[fill=facecolor, draw=black, shift={(0,0,0)}]
    (0, 0, 0) -- (1, 0, 0) -- (1, 1, 0) -- (0, 1, 0) -- cycle;
    \node[circle,fill=black,draw=black,minimum size=1.2mm,inner sep=0pt] at (0,0,0) {};
    \end{tikzpicture}}
\end{array}
\]
where $i \in \{1,2,3\}$ is the \emph{type} of $[\bfx,i]^\star$,
and $\bfx \in \bbZ^3$ is the \emph{distinguished vertex} of $[\bfx,i]^\star$.
Let $\bfv \in \bbR^3_+$.
The \emph{stepped plane} $\Gv$ is the union of unit faces defined by:
\[
\Gv = \{ [\bfx,i]^\star : 0 \leq \langle \bfx,\bfv \rangle < \langle \bfe_i,\bfv \rangle \}.
\]
\end{defi}
The notation $\bfx+[\bfy,i]^\star$ stands for the unit face $[\bfx+\bfy,i]^\star$.
\begin{rema}
\label{rq:planes}
The set of distinguished vertices of $\Gv$ is the naive arithmetical discrete plane $ \frakP(\bfv,\|\bfv\|_\infty)$,
whereas the set of all vertices of the faces of $\Gv$
is the standard arithmetical discrete plane $\frakP(\bfv,\|\bfv\|_1)$ (see~\cite{ABI02,ABS04}).
\end{rema}

%\begin{figure}
%\caption{An arithmetical discrete plane -- A stepped plane}
%\end{figure}

%%%%%%%%%%%%%%%%%%%%
\subsection{Connecting thickness and the fully subtractive algorithm}
\begin{defi}[Adjacency, connectedness]
Two distinct elements $\bfx$ and $\bfy$ of $\bbZ^3$
are \emph{$2$-adjacent} if $\|\bfx-\bfy\|_1 = 1$.
A subset $A \subseteq \bbZ^3$ is $2$-connected
if it is not empty and if for every $\bfx,\bfy \in A$, there exist
$\bfx_1, \dots, \bfx_n \in A$
such that $\bfx_i$ and $\bfx_{i+1}$ are $2$-adjacent for all $i \in \{1,\dots,n-1\}$,
with $\bfx_1=\bfx$ and $\bfx_n=\bfy$.
\end{defi}
We stress the fact that a $2$-connected set is assumed here to be non-empty.
This will prove to be more convenient for the notion of connecting thickness.

\begin{defi}[Connecting thickness~\cite{JT06}]
\label{def:connect}
Given $\bfv \in \bbR^3_+$, the \emph{connecting thickness} $\Omega(\bfv)$ is defined by:
\begin{equation*}
\Omega(\bfv) = \inf \left\{\omega \in \bbR_+ : \frakP(\bfv,\omega) \text{ is $2$-connected} \right\}.
\end{equation*}
\end{defi}

The two above definitions are illustrated in Figure~\ref{fig:thicknesses}.
Note that these definitions focus on $2$-connectedness,
but similar definitions are also possible for $\kappa$-connectedness with $\kappa = 0$ or $1$ (in ${\mathbb Z}^3$).
However, the value of $\Omega(\bfv)$ for such alternative definitions
can be directly deduced from the value of $\Omega(\bfv)$ for $2$-connectedness (see~\cite{JT06}),
so it is natural to restrict to $2$-connectedness.

\begin{figure}[ht]%
\centering
\subfloat[][$\omega = 1$]{%
    \label{subfig:omega1}%
    \includegraphics[height=28mm]{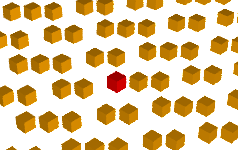}
} \hfil
\subfloat[][$\omega = 2.5$]{%
    \label{subfig:omega2}%
    \includegraphics[height=28mm]{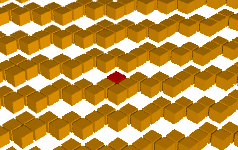}
}\\
\subfloat[][$\omega = 4$]{%
    \label{subfig:omega3}%
    \includegraphics[height=28mm]{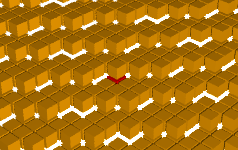}
} \hfil
\subfloat[][$\omega = 6$]{%
    \label{subfig:omega4}%
    \includegraphics[height=28mm]{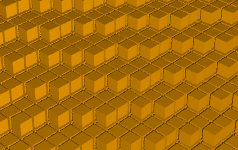}
}
\caption[]{
The arithmetical discrete plane $\frakP(\bfv,\omega)$
with $\bfv = (1,\sqrt{2},\pi)$ and varying thicknesses~$\omega$.
In \subref{subfig:omega1} and \subref{subfig:omega2},
$\frakP(\bfv,1)$ and $\frakP(\bfv,2.5)$ are not $2$-connected,
but in \subref{subfig:omega3} and \subref{subfig:omega4},
$\frakP(\bfv,4)$ and $\frakP(\bfv,6)$ are $2$-connected.
It follows that $2.5 \leq \Omega(\bfv) \leq 4$.
}%
\label{fig:thicknesses}%
\end{figure}

\paragraph{Computing the connecting thickness}
%A first observation which follows easily from the definitions is that for every $\varepsilon > 0$, the discrete plane $\frakP\left(\bfv,\Omega(\bfv) - \varepsilon\right)$ is not $2$-connected and $\frakP\left(\bfv,\Omega(\bfv) + \varepsilon\right)$ is $2$-connected. Moreover, a first approximation of $\Omega(\bfv) $ is provided by $\| \bfv \|_\infty \leq \Omega(\bfv) \leq \|\bfv\|_1$ (see~\cite[Lemma~10]{AAS97}).

A first observation which follows easily from the definitions is that for every $\varepsilon > 0$,
the discrete plane $\frakP\left(\bfv,\Omega(\bfv) - \varepsilon\right)$ is not $2$-connected.
Moreover, a first approximation of $\Omega(\bfv) $ is provided by
$\| \bfv \|_\infty \leq \Omega(\bfv) \leq \|\bfv\|_1$ (see~\cite[Lemma~10]{AAS97}) and implies that,
for every $\varepsilon > 0$, $\frakP\left(\bfv,\Omega(\bfv) + \varepsilon\right)$ is $2$-connected.
Consequently, one deduces that the set
$\left\{\omega \in \bbR_+ : \frakP(\bfv,\omega) \text{ is $2$-connected} \right\}$ is an interval.

It is shown in~\cite{JT09} how to compute $\Omega(\bfv)$
from the expansion of the vector $\bfv$ by using the \emph{fully subtractive
algorithm}. Given a vector $\bfv = ( \bfv_1, \bfv_2, \bfv_3 )$ with $\bfv_1 =
\min(\bfv_1, \bfv_2, \bfv_3)$, this algorithm is defined as
$\bfFS(v) = (\bfv_1, \bfv_2 - \bfv_1, \bfv_3 - \bfv_1)$.
The fully subtractive algorithm is one of the many possible generalizations of the Euclid algorithm;
it consists in subtracting the smallest coordinate to all the other coordinates.
It yields a well-studied multidimensional continued fraction algorithm (see e.g.~\cite{KM95,FokKraaiNak,SCH,TijZam}).

In order to compute $\Omega(\bfv)$ we may assume without loss of generality that
$0 \leq \bfv_1 \leq \bfv_2 \leq \bfv_3$.
%Indeed, given a signed permutation $3\times3$-matrix $\bfM$, we easily check that $\Omega(\bfv) = \Omega (\bf M \cdot \bfv)$.
We thus restrict ourselves in the sequel to the set of parameters $\Otp =
\left\{\bfv \in \bbR^3 : 0 \leq \bfv_1 \leq \bfv_2 \leq \bfv_3 \right\}$ and
consider the \emph{ordered fully subtractive algorithm} $\bfF : \Otp \rightarrow
\Otp$ defined by:
\[
\bfF(\bfv) \ = \
\left\{
    \begin{array}{l}
        %\left[\begin{smallmatrix} \bfv_1 \\ \bfv_2 -\bfv_1 \\ \bfv_3-\bfv_1 \end{smallmatrix}\right]
        (\bfv_1, \bfv_2-\bfv_1, \bfv_3-\bfv_1)
        \quad \text{if} \ \bfv_1 \leq \bfv_2-\bfv_1 \leq \bfv_3-\bfv_1 \\
        %\left[\begin{smallmatrix} \bfv_2-\bfv_1 \\ \bfv_1 \\ \bfv_3-\bfv_1 \end{smallmatrix}\right]
        (\bfv_2-\bfv_1, \bfv_1, \bfv_3-\bfv_1)
        \quad \text{if} \ \bfv_2-\bfv_1 < \bfv_1 \leq \bfv_3-\bfv_1 \\
        %\left[\begin{smallmatrix} \bfv_2-\bfv_1 \\ \bfv_3-\bfv_1 \\ \bfv_1 \end{smallmatrix}\right]
        (\bfv_2-\bfv_1, \bfv_3-\bfv_1, \bfv_1)
        \quad \text{if} \ \bfv_2-\bfv_1 \leq \bfv_3-\bfv_1 < \bfv_1.
    \end{array}
\right.
\]
Iterating $\bfF$ on a vector $\bfv \in \Otp$ yields an infinite sequence of vectors
$(\bfv^{(n)})_{n\geq 0}$ defined by $\bfv^{(n)} = \bfF^n(\bfv)$ and $\bfv^{(0)} = \bfv$.
This can be rewritten in matrix form by
$\bfv = \MFS_{i_1} \dots \MFS_{i_n} \bfv^{(n)}$,
where the matrices $\MFSi$, $\MFSii$ and $\MFSiii$ are defined by
\[
\MFSi = \begin{bmatrix} 1 & 0 & 0 \\ 1 & 1 & 0 \\ 1 & 0 & 1 \end{bmatrix}, \quad
\MFSii  = \begin{bmatrix} 0 & 1 & 0 \\ 1 & 1 & 0 \\ 0 & 1 & 1 \end{bmatrix}, \quad
\MFSiii = \begin{bmatrix} 0 & 0 & 1 \\ 1 & 0 & 1 \\ 0 & 1 & 1 \end{bmatrix}.
\]
The \emph{$\bfF$-expansion} of $\bfv$
is the sequence $(i_n)_{n\geq 0} \in \{1,2,3\}^{\mathbb N}$ defined above,
that is, the sequence $(i_n)_{n\geq 0}$ such that
$\smash{\bfv^{(n)} = {\MFS_{i_n}}^{-1}\bfv^{(n-1)}}$ for all $n \geq 1$.
%Observe that when $\smash{\bfv_1^{(n)}=0}$ for some $n$, then $i_m=1$ for all $m \geq n$.

%The link between the connecting thickness and the fully subtractive algorithm $\bfF$ is given in the next theorem.

The link between the connecting thickness and the fully subtractive algorithm
$\bfF$ is provided by the following algorithm  that   we call the \emph{connecting thickness algorithm}:
\begin{quote}
\textup{\texttt{%
    def connecting\_thickness($\bfv$): \\
    \phantom{.} \quad if $\bfv_1 + \bfv_2 \leq \bfv_3$: return $\bfv_3$ \\
    \phantom{.} \quad else: return $\bfv_1$ + connecting\_thickness($\bfF(\bfv)$)
}}
\end{quote}

For some input vectors, the above algorithm never stops. Let $\Fthree$ be this
set of vectors:
\[
\Fthree = \left\{\bfv \in \Otp : \bfv^{(n)}_1 + \bfv^{(n)}_2 > \bfv^{(n)}_3 \text{ for all $n \in \bbN$} \right\}.
\]
This set has been studied in~\cite{Mee89},
and its properties are similar to that of the Rauzy gasket~\cite{ASt13}.
It will play a crucial role in the characterization stated in Theorem~\ref{thm::main}.
%$\frakP(\bfv,\Omega(\bfv))$ is $2$-connected if and only if $\bfv \in \Fthree$.

\begin{theo}[\cite{JT09},\cite{DJT09}]
\label{theo:algo_cc}
Let $\bfv \in \Otp$.
The arithmetical discrete plane $\frakP(\bfv,\omega)$ is $2$-connected
if and only if so is $\frakP(\bfF(\bfv),\omega-\bfv_1)$.
Consequently,  $\frakP(\bfv,\Omega(\bfv))$ is $2$-connected
if and only if so is $\frakP(\bfF(\bfv),\Omega(\bfF(\bfv))$,
\[
\Omega(\bfv) = \Omega(\bfF(\bfv))+\bfv_1,
\]
and,
\[
  \Omega(\bfv) = \left\{ \begin{array}{ll}
    \textup{\texttt{connecting\_thichness}}(\bfv) & \textrm{ if } \bfv \not\in \Fthree,\\
    \sum^\infty_{n=0}{\bfv^{(n)}_1} & \textrm{ if } \bfv \in \Fthree.
  \end{array} \right.
\]
Moreover, if $\bfv \in \Fthree$ then
\[
\Omega(\bfv) =\sum^\infty_{n=0}{\bfv^{(n)}_1} = \|\bfv\|_1/2.
\]
\end{theo}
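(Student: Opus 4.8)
The plan is to split the statement into four layers and treat them in increasing order of difficulty: (i) the one-step equivalence ``$\frakP(\bfv,\omega)$ is $2$-connected $\iff$ $\frakP(\bfF(\bfv),\omega-\bfv_1)$ is $2$-connected''; (ii) the recursion $\Omega(\bfv)=\Omega(\bfF(\bfv))+\bfv_1$ together with the transport of the equivalence to $\omega=\Omega(\bfv)$; (iii) the closed form obtained by unfolding the recursion down to the halting test $\bfv_1+\bfv_2\le\bfv_3$; and (iv) the identity $\Omega(\bfv)=\|\bfv\|_1/2$ on $\Fthree$. Layer (i) is the combinatorial core and the main obstacle; the rest follows by taking infima, by telescoping, and by an elementary convergence argument. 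Throughout I keep $\bfv$ sorted, so $\bfv_1=\min$, and I organise the lattice into \emph{columns along $\bfe_1$}: for fixed $(x_2,x_3)$ the fibre meets $\frakP(\bfv,\omega)$ in a set of consecutive integers $x_1$ (an interval of scalar-product window $[0,\omega)$ sampled with step $\bfv_1$), nonempty once $\omega\ge\bfv_1$ and internally $2$-connected via $\pm\bfe_1$-steps.

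\textbf{Layer (i): the one-step equivalence.} Since every column is internally connected, $\frakP(\bfv,\omega)$ is $2$-connected iff the \emph{column-adjacency graph} on $\bbZ^2_{(x_2,x_3)}$ is connected, where neighbours $(x_2,x_3)$ and $(x_2{+}1,x_3)$ (resp.\ $(x_2,x_3{+}1)$) are joined exactly when their $x_1$-intervals share a common integer. A direct computation shows this edge is present iff an interval of length $\omega-\bfv_2$ (resp.\ $\omega-\bfv_3$), placed at the residue $-(x_2\bfv_2+x_3\bfv_3)\bmod\bfv_1$, contains a multiple of $\bfv_1$; so the graph is determined by the two overlap-lengths $\omega-\bfv_2,\omega-\bfv_3$, the step $\bfv_1$, and the offsets $x_2\bfv_2+x_3\bfv_3$ taken $\bmod\,\bfv_1$. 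The crucial observation is that the \emph{pure subtraction} $\bfv\mapsto(\bfv_1,\bfv_2-\bfv_1,\bfv_3-\bfv_1)$ with $\omega\mapsto\omega-\bfv_1$ leaves all of this invariant: each overlap-length becomes $(\omega-\bfv_1)-(\bfv_i-\bfv_1)=\omega-\bfv_i$, the step stays $\bfv_1$, and the offsets change by integer multiples of $\bfv_1$, hence are unchanged $\bmod\,\bfv_1$. Thus the two labelled column-graphs coincide and the equivalence holds verbatim when $\bfF$ is a pure subtraction ($\MFSi$). In the remaining cases $\bfF$ is this pure subtraction \emph{followed by a coordinate permutation} (the permutations recorded by $\MFSii$ and $\MFSiii$) that re-sorts the output; since permuting $\bfe_1,\bfe_2,\bfe_3$ is an $\ell^1$-isometry of $\bbZ^3$, it preserves $2$-connectedness, so cases $2$ and $3$ reduce to case $1$. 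The genuine obstacle is the boundary bookkeeping: verifying the overlap criterion with half-open windows, and handling small $\omega$ where some columns may be empty (the pure subtraction can delete the top point of a short column), so that no link is spuriously created or destroyed; this finicky but elementary case analysis is the substance of~\cite{JT09,DJT09}.

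\textbf{Layers (ii) and (iii).} Granting (i), the set of admissible thicknesses for $\bfv$ is exactly that of $\bfF(\bfv)$ translated by $\bfv_1$; taking infima yields $\Omega(\bfv)=\Omega(\bfF(\bfv))+\bfv_1$, and because (i) is an \emph{equivalence} it also transports attainment: putting $\omega=\Omega(\bfv)$ gives $\omega-\bfv_1=\Omega(\bfF(\bfv))$, so $\frakP(\bfv,\Omega(\bfv))$ is $2$-connected iff $\frakP(\bfF(\bfv),\Omega(\bfF(\bfv)))$ is. For the closed form I identify the base case: when $\bfv_1+\bfv_2\le\bfv_3$ one has $\Omega(\bfv)=\bfv_3=\|\bfv\|_\infty$ — the lower bound is the quoted $\Omega(\bfv)\ge\|\bfv\|_\infty$, and for every $\omega>\bfv_3$ the column-graph is connected (each row becomes fully linked, and a positive proportion of transverse edges appear), while at $\omega=\bfv_3$ itself the plane may well be disconnected, so the infimum need not be attained. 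Iterating $\Omega(\bfv)=\bfv_1+\Omega(\bfF(\bfv))$ then telescopes: if $\bfv\notin\Fthree$ the iteration meets the base case at some step $N$ and gives $\Omega(\bfv)=\sum_{k=0}^{N-1}\bfv^{(k)}_1+\bfv^{(N)}_3$, exactly \texttt{connecting\_thickness}$(\bfv)$; if $\bfv\in\Fthree$ it never halts and $\Omega(\bfv)=\sum_{k=0}^{n-1}\bfv^{(k)}_1+\Omega(\bfv^{(n)})$ for all $n$.

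\textbf{Layer (iv): the value on $\Fthree$.} Each step of $\bfF$ subtracts $\bfv_1$ from two coordinates, whence $\|\bfv^{(n)}\|_1=\|\bfv\|_1-2\sum_{k=0}^{n-1}\bfv^{(k)}_1$; the partial sums are bounded by $\tfrac12\|\bfv\|_1$, so $\sum_k\bfv^{(k)}_1$ converges and $\bfv^{(n)}_1\to0$. Using $0\le\Omega(\bfv^{(n)})\le\|\bfv^{(n)}\|_1$ in the telescoped identity gives $\Omega(\bfv)=\sum_{k=0}^\infty\bfv^{(k)}_1=\tfrac12\bigl(\|\bfv\|_1-\lim_n\|\bfv^{(n)}\|_1\bigr)$, so it remains to prove $\|\bfv^{(n)}\|_1\to0$, i.e.\ $\bfv^{(n)}_3\to0$. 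The maximal coordinate is non-increasing (one checks $\bfv^{(n+1)}_3=\bfv^{(n)}_3-\bfv^{(n)}_1$ in cases $1,2$ and $\bfv^{(n+1)}_3=\bfv^{(n)}_1$ in case $3$), hence converges to some $M\ge0$; suppose $M>0$. Case $3$ occurs only finitely often, else $\bfv^{(n)}_3=\bfv^{(n-1)}_1\to0$ along a subsequence forces $M=0$; and we cannot be eventually always in case $1$, where $\bfv^{(n)}_1$ would be constant and positive (note $\bfv\in\Fthree$ forces $\bfv^{(n)}_1>0$, since $\bfv^{(n)}_1=0$ would trigger the halting test), contradicting $\bfv^{(n)}_1\to0$. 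Hence case $2$ occurs infinitely often; but $\bfv^{(n)}_1+\bfv^{(n)}_2>\bfv^{(n)}_3$ together with $\bfv^{(n)}_1\to0$ and $\bfv^{(n)}_2\le\bfv^{(n)}_3\to M$ forces $\bfv^{(n)}_2\to M$, so along the case-$2$ indices $\bfv^{(n+1)}_1=\bfv^{(n)}_2-\bfv^{(n)}_1\to M>0$, again contradicting $\bfv^{(n)}_1\to0$. Therefore $M=0$, $\|\bfv^{(n)}\|_1\to0$, and $\Omega(\bfv)=\|\bfv\|_1/2$.
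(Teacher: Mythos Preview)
Your proposal is correct, and for the only part the paper actually proves here --- the ``Moreover'' clause $\Omega(\bfv)=\|\bfv\|_1/2$ for $\bfv\in\Fthree$ --- you take the same route as the paper: combine the recursion $\Omega(\bfv)=\Omega(\bfF(\bfv))+\bfv_1$ with $\|\bfF(\bfv)\|_1=\|\bfv\|_1-2\bfv_1$ to see that $\|\bfv\|_1-2\Omega(\bfv)$ is invariant along the $\bfF$-orbit, then pass to the limit using $0\le\Omega(\bfv^{(n)})\le\|\bfv^{(n)}\|_1$.

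The one substantive difference is that the paper's short proof simply \emph{asserts} $\lim_n\bfv^{(n)}=\mathbf 0$ at this point, while you supply a self-contained argument for it. Your case analysis (if $M:=\lim_n\bfv^{(n)}_3>0$ then case~$3$ eventually stops, case~$1$ cannot persist since $\bfv^{(n)}_1$ would stay positive and constant, and along the infinitely many remaining case-$2$ steps the squeeze $\bfv^{(n)}_3-\bfv^{(n)}_1<\bfv^{(n)}_2\le\bfv^{(n)}_3$ forces $\bfv^{(n+1)}_1=\bfv^{(n)}_2-\bfv^{(n)}_1\to M>0$, a contradiction) is correct and fills a gap the paper leaves open here; a related but differently oriented argument appears only later, inside the proof of Lemma~\ref{lem:3infty}.

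One caution on your Layer~(iii) sketch: the base-case claim ``$\bfv_1+\bfv_2\le\bfv_3\Rightarrow\Omega(\bfv)=\bfv_3$'' and the supporting heuristic ``for every $\omega>\bfv_3$ the column-graph is connected'' are not valid in full generality. The paper's own Lemma~\ref{cor::dim2} exhibits the failure: when $\dim_\bbQ(\bfv_1,\bfv_2)=1$ (so the offsets modulo $\bfv_1$ are arithmetic rather than dense) one actually gets $\Omega(\bfv)=\bfv_3+\gcd(\bfv_1,\bfv_2)>\bfv_3$. This does not affect your Layer~(iv), and since Layers~(i)--(iii) are in any case deferred to \cite{JT09,DJT09}, it is a caveat on how to read the cited formula rather than a defect in your proof of the part the paper establishes.
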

\begin{proof}
According to  Theorem~\ref{theo:algo_cc}, we have:
$\bfv_1+\bfv_2+\bfv_3-2\Omega(\bfv) = \bfv^{(i)}_1+\bfv^{(i)}_2+\bfv^{(i)}_3-2\Omega(\bfv^{(i)})$
for all $i \in \{1,\ldots,n\}$.
Since $\Omega(\bfv^{(n)}) \leq \|\bfv^{(n)}\|_1$
and $\lim_{n\to\infty}{\bfv^{(n)}} = \mathbf 0$, then $\lim_{n\to\infty} \Omega(\bfv^{(n)}) = 0$
and the result follows.
\end{proof}
\begin{exam}
Let $\bfv = (1, \sqrt{13}, \sqrt{17})$.
Iterating $\bfF$ yields
\begin{align*}
\bfv^{(1)} &= (1, \sqrt{13} - 1, \sqrt{17} - 1) \\
\bfv^{(2)} &= (1, \sqrt{13} - 2, \sqrt{17} - 2) \\
\bfv^{(3)} &= (\sqrt{13} - 3, 1, \sqrt{17} - 3) \\
\bfv^{(4)} &= (4 - \sqrt{13}, \sqrt{17} - \sqrt{13}, \sqrt{13} - 3) \\
\bfv^{(5)} &= (\sqrt{17} - 4, 2\sqrt{13} - 7, 4 - \sqrt{13}).
\end{align*}
The connecting thickness algorithm stops because $\bfv^{(5)}_1 + \bfv^{(5)}_2 \leq \bfv^{(5)}_3$, so
$\Omega(\bfv) =
1 +
1 +
1 +
\sqrt{13} - 3 +
4 - \sqrt{13} +
4 - \sqrt{13}
= 8 - \sqrt{13}$.
Similarly, if $\bfv = (1, \sqrt[3]{10}, \pi)$,
then the algorithm stops after $19$ steps
and $\Omega(\bfv) = 2\pi - 98\sqrt[3]{10} + 208$.
It is also possible to exhibit some examples where the algorithm never stops,
for instance by choosing a right eigenvector of  a finite product  of the matrices
${{\bfM^{\textup{FS}}_i}}$, for $i=1,2,3$, of $\bfF$.
This is the case for example with the vector
$\bfv = (1, \alpha + 1, \alpha^2 + \alpha + 1) = (1, 1.54\ldots, 1.84\ldots)$,
where $\alpha = 0.54\ldots$ is the real root of $x^3 + x^2 + x + 1$.
\end{exam}

The next property  provides a characterization  with respect to  the $\bfF$-expansion of a vector $\bfv$
of  its belonging to $\Fthree$.

\begin{lemm}\label{lem:3infty}
We have $\bfv \in \Fthree$ if and only if the $\bfF$-expansion $(i_n)_n$ of $\bfv$  under the ordered
fully subtractive algorithm contains infinitely many occurrences of $3$.
\end{lemm}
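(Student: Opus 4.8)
The plan is to translate both the defining condition of $\Fthree$ and the occurrence of the letter $3$ into elementary inequalities between the sorted coordinates of the iterates $\bfv^{(n)}$, and then to play these two off against the closed form for $\Omega$ already furnished by Theorem~\ref{theo:algo_cc}. First I would record the dictionary. Reading off the case distinction in the definition of $\bfF$, the step applied to $\bfv^{(n)}$ is of type $3$ exactly when $\bfv^{(n)}_3 - \bfv^{(n)}_1 < \bfv^{(n)}_1$, that is, $\bfv^{(n)}_3 < 2\bfv^{(n)}_1$; while the connecting thickness algorithm halts at step $n$ exactly when $\bfv^{(n)}_1 + \bfv^{(n)}_2 \leq \bfv^{(n)}_3$, so that $\bfv \in \Fthree$ means this never happens. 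The immediate observation is that the halting inequality forces $\bfv^{(n)}_3 \geq \bfv^{(n)}_1 + \bfv^{(n)}_2 \geq 2\bfv^{(n)}_1$ (using $\bfv^{(n)}_1 \leq \bfv^{(n)}_2$), so a halting step is never of type $3$.

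For the implication ``$\bfv \notin \Fthree \Rightarrow$ finitely many $3$'s'', suppose the halting inequality first holds at some index $n_0$. I would show by induction that it then persists: whenever $\bfv^{(n)}_1 + \bfv^{(n)}_2 \leq \bfv^{(n)}_3$, the step is of type $1$ or $2$, and in both cases one checks directly that the two smallest entries of $\bfF(\bfv^{(n)})$ sum to $\bfv^{(n)}_2$ while its largest entry is $\bfv^{(n)}_3 - \bfv^{(n)}_1$; hence the halting inequality at index $n+1$ is again equivalent to $\bfv^{(n)}_1 + \bfv^{(n)}_2 \leq \bfv^{(n)}_3$ and holds. Since no type-$3$ step can occur once the inequality holds, the expansion contains only finitely many $3$'s.

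For the converse ``$\bfv \in \Fthree \Rightarrow$ infinitely many $3$'s'', I would argue by contradiction, assuming no $3$ occurs after some index $N$. For $n \geq N$ the step is of type $1$ or $2$, and the computation above shows that the largest coordinate evolves by $\bfv^{(n+1)}_3 = \bfv^{(n)}_3 - \bfv^{(n)}_1$; telescoping and using $\bfv^{(M)}_3 \geq 0$ yields $\sum_{n \geq N} \bfv^{(n)}_1 \leq \bfv^{(N)}_3$. On the other hand $\bfv^{(N)} \in \Fthree$ (the defining condition is inherited by every iterate), so Theorem~\ref{theo:algo_cc} gives $\sum_{n \geq N} \bfv^{(n)}_1 = \Omega(\bfv^{(N)}) = \|\bfv^{(N)}\|_1/2$, which strictly exceeds $\bfv^{(N)}_3$ precisely because $\bfv^{(N)}_1 + \bfv^{(N)}_2 > \bfv^{(N)}_3$. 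These two estimates contradict each other, completing the argument.

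I expect the second implication to be the crux. The naive attempt to track the orbit directly runs into the difficulty that the largest coordinate need not visibly tend to $0$, and ruling out a spurious positive limit seems to require understanding the dynamics near the fixed rays $(0,t,t)$ (equivalently, the fate of the two smallest coordinates under the $2$-dimensional subtractive Euclidean algorithm that governs them while $\bfv_3$ stays largest). The clean way around this obstacle is precisely to avoid analysing the limit and instead compare the telescoped partial sums of the smallest coordinates against the exact value $\|\bfv\|_1/2$ of $\Omega$ on $\Fthree$ provided by Theorem~\ref{theo:algo_cc}, which converts the whole question into the single sharp inequality above.
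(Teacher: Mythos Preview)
Your proof is correct. The backward implication is argued exactly as in the paper: once the halting inequality $\bfv^{(n)}_1+\bfv^{(n)}_2\le\bfv^{(n)}_3$ holds it persists and forbids type-$3$ steps forever.

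For the forward implication you take a genuinely different (and shorter) route than the paper. The paper does precisely the ``naive tracking'' you chose to avoid: it observes that under types $1,2$ only, the pair $(\bfv_1^{(n)},\bfv_2^{(n)})$ evolves by the $2$-dimensional subtractive Euclid algorithm, hence tends to $(0,0)$; the $\Fthree$ inequality then forces $\bfv_3^{(n)}\to 0$ as well. With full convergence in hand it telescopes $\|\bfv^{(n)}\|_1$ to obtain $\sum_n\bfv_1^{(n)}=\|\bfv\|_1/2$, and compares with the auxiliary vector $(\bfv_1,\bfv_2,\bfv_1+\bfv_2)$ (same first two coordinates, hence same $\bfv_1^{(n)}$'s) to get $\sum_n\bfv_1^{(n)}=\bfv_1+\bfv_2$, forcing $\bfv_3=\bfv_1+\bfv_2$ and a contradiction. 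Your argument instead telescopes only $\bfv_3^{(n)}$ to get the one-sided bound $\sum_{n\ge N}\bfv_1^{(n)}\le\bfv_3^{(N)}$, and then cites Theorem~\ref{theo:algo_cc} for the opposite strict inequality $\sum_{n\ge N}\bfv_1^{(n)}=\|\bfv^{(N)}\|_1/2>\bfv_3^{(N)}$. This is more economical and avoids the $2$-dimensional Euclid analysis altogether; the trade-off is that it relies on the ``Moreover'' clause of Theorem~\ref{theo:algo_cc}, whose proof in the paper already uses $\bfv^{(n)}\to\mathbf 0$ (taken from the cited references), so the analytic content is not eliminated but merely relocated.
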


\begin{proof}
%If  $\bfv \in \Fthree$,
%then $\bfv_1^{(n)} \neq 0$ for all $n$, which  implies that  the expansion $(i_n)_n$ is infinite.
%Note that the  sequence  $(i_m)_{m \geq n}$ cannot  take only the value $1$.
Let $\bfv \in \Fthree$, and assume by contradiction  that  $(i_n)_{n \in {\mathbb N}}$
does not take the value $3$. One  thus checks that
$\lim_{n\to\infty}{\bfv_1^{(n)}}=\lim_{n\to\infty}{\bfv_2^{(n)}}= 0$, and hence,
$\lim_{n\to\infty}{\bfv_3^{(n)}}=0$.
Furthermore,
$\bfv_1^{(n+1)}+\bfv_2^{(n+1)}+\bfv_3^{(n+1)}+2\bfv_1^{(n)}=\bfv_1^{(n)}+\bfv_2^{(n)}+\bfv_3^{(n)}$, for  all $n$.
Hence
\[
\frac{\bfv_1+\bfv_2+\bfv_3}{2}=\sum _{n\geq 1} \bfv_1^{(n)}.
\]
Note that the expansion  of $(\bfv_1,\bfv_2,\bfv_1+\bfv_2)$ obtained by applying the
ordered fully subtractive  algorithm  $\bfF$ to  $(\bfv_1,\bfv_2,\bfv_1+\bfv_2)$
coincides on the first two  coordinates with   the   expansion of $(\bfv_1,\bfv_2, \bfv_3)$, that is,
$\bfF^n (\bfv_1,\bfv_2,\bfv_1+\bfv_2)=(\bfv_1^{(n)},\bfv_2^{(n)},\bfv_1^{(n)}+\bfv_2^{(n)})$ for all $n \geq 1$.
Consequently, here again  $\bfv_1+\bfv_2=\sum _{n\geq 1} \bfv_1^{(n)}$, which implies
$\bfv_3=\bfv_1+\bfv_2,$ a contradiction.
Hence, the sequence $(i_n)_n$ takes the value $3$ at least once,
and by repeating the argument, infinitely many times.

Conversely, assume that $\bfv \not \in \Fthree$.
If $\bfv^{(n)} _1 + \bfv^{(n)} _2 \leq \bfv^{(n)} _3$ for some $n$, then
$\bfv^{(m)} _1 + \bfv^{(m)} _2 \leq \bfv^{(m)} _3$ for all $m \geq n$, and in particular,
$\bfv^{(m)} _3-\bfv^{(m)} _1 \geq \bfv^{(m)} _2\geq \bfv^{(m)} _1$.
This implies that  the sequence $(i_m)_{m \geq n}$ will never take the value $3$.
%Observe furthermore  that $\lim_{n\to\infty}{\bfv^{(n)}}\neq \mathbf 0$.
\end{proof}

% \framed{ Verifier quand meme  sur la preuve de Timo que ce nouveau cas 3  avec $<$ correspond bien   a son cas 3.}

\begin{rema}
If $\bfv^{(n)} _1 + \bfv^{(n)} _2 < \bfv^{(n)} _3$ for some $n$, then
$\lim_{n\to\infty}{\bfv^{(n)}}\neq \mathbf 0$.
If $\bfv^{(n)} _1 + \bfv^{(n)} _2  = \bfv^{(n)} _3$ for some $n$,
we can say nothing concerning  the  fact that   $\lim_{n\to\infty}{\bfv^{(n)}}=\mathbf 0$.
Indeed, take $\bfv=(\bfv_1,\bfv_1,2 \bfv_1)$ for some $\bfv_1>0$.
Then $\lim_{n\to\infty}{\bfv^{(n)}}=(0,\bfv_1,\bfv_1)\neq \mathbf 0$.
Now take $\bfv=(1/\varphi^2,1/\varphi,1)$ with $1/\varphi+1/\varphi ^2 =1$ and $\varphi >0$.
One  checks that $\lim_{n\to\infty}{\bfv^{(n)}}=\mathbf 0$.
\end{rema}

\begin{prop}
\label{prop:F3}
%Note that $\bfv \in \Fthree$ if and only if $\bfv_3^{(n)}-\bfv^{(n)}_2 \leq \bfv^{(n)}_1$ occurs for infinitely many $n$.
If $\bfv \in \Fthree$, then $\dim_{\bbQ}(\bfv_1,\bfv_2,\bfv_3)=3$.
\end{prop}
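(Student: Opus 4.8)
The plan is to argue by contraposition: I will show that if $\dim_{\bbQ}(\bfv_1,\bfv_2,\bfv_3)\le 2$, then $\bfv\notin\Fthree$. The assumption $\dim_{\bbQ}\le 2$ is equivalent to the existence of a nonzero integer vector $\mathbf{a}\in\bbZ^3\setminus\{\mathbf 0\}$ with $\langle\mathbf{a},\bfv\rangle=0$. The first step is to propagate this relation along the orbit. Writing $\bfv=\MFS_{i_1}\cdots\MFS_{i_n}\bfv^{(n)}$ and setting $\mathbf{a}^{(n)}=\transp{(\MFS_{i_1}\cdots\MFS_{i_n})}\,\mathbf{a}$, the identity $\langle\mathbf{a}^{(n)},\bfv^{(n)}\rangle=\langle\mathbf{a},\bfv\rangle=0$ holds for every $n$; since each $\MFSx$ is unimodular, every $\mathbf{a}^{(n)}$ is again a nonzero (indeed primitive) integer vector, and $\mathbf{a}^{(n+1)}=\transp{\MFS_{i_{n+1}}}\mathbf{a}^{(n)}$. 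Two elementary observations drive the argument. First, if $\bfv\in\Fthree$ then all coordinates of every $\bfv^{(n)}$ are strictly positive: a vanishing coordinate would force $\bfv_1^{(n)}=0$, hence $\bfv_1^{(n)}+\bfv_2^{(n)}=\bfv_2^{(n)}\le\bfv_3^{(n)}$, contradicting the strict inequality defining $\Fthree$. Consequently the relation $\langle\mathbf{a}^{(n)},\bfv^{(n)}\rangle=0$ with $\bfv^{(n)}$ strictly positive forces each $\mathbf{a}^{(n)}$ to possess at least one strictly positive and one strictly negative entry. Second, each $\transp{\MFSx}$ is a nonnegative integer matrix which replaces one entry of $\mathbf{a}^{(n)}$ by the sum $a_1^{(n)}+a_2^{(n)}+a_3^{(n)}$ and leaves the other two unchanged; moreover, whenever $i_{n+1}=3$ the primal step takes place at a balanced configuration, namely $\bfv_3^{(n)}<2\bfv_1^{(n)}$, so all three coordinates of $\bfv^{(n)}$ then lie within a factor $2$ of one another.

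The heart of the proof is to show that these facts are incompatible when the $\bfF$-expansion contains infinitely many $3$'s, which holds for every $\bfv\in\Fthree$ by Lemma~\ref{lem:3infty}. The intended mechanism is a Perron--Frobenius contraction applied to the nonnegative matrices $\transp{\MFSx}$. The matrix $\transp{\MFSiii}$ is already primitive (its cube is strictly positive), with characteristic polynomial $\lambda^3-\lambda^2-\lambda-1$: its dominant eigenvalue is the Tribonacci constant, the dominant eigendirection is strictly positive, and the two subdominant eigenvalues have modulus strictly less than $1$. The aim is to use the infinitely many occurrences of the letter $3$ to decompose $\mathbf{a}^{(n)}$ into its dominant component and a complementary part, the former being expanded and the latter contracted along the orbit. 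If the dominant component stayed bounded away from zero, then $\mathbf{a}^{(n)}$ would become eventually aligned with the positive dominant eigenvector and hence sign-definite, contradicting the mixed-sign property; thus $\mathbf{a}^{(n)}$ must remain asymptotically in the contracting complement. But $\mathbf{a}^{(n)}$ is a nonzero integer vector, so $\|\mathbf{a}^{(n)}\|\ge 1$, whereas a vector confined to a strictly contracting subspace is driven to $\mathbf 0$ under the infinitely many balanced ($3$-)blocks --- the desired contradiction. Equivalently, one can phrase the endgame as forcing $\bfv_3^{(n)}=\bfv_1^{(n)}+\bfv_2^{(n)}$ for some $n$, which already contradicts the strict inequality defining $\Fthree$ (this is precisely what occurs for the golden-ratio vector of the preceding Remark, which lies just outside $\Fthree$).

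The main obstacle is making this contraction quantitative. The left dominant eigendirection of the tail products is asymptotically the direction of $\bfv$ itself, so the orthogonality $\mathbf{a}\perp\bfv$ means that the dominant component of $\mathbf{a}^{(n)}$ is genuinely small; the whole argument therefore hinges on proving that it cannot remain small forever, i.e. on a uniform exponential contraction of the non-dominant directions along orbits with infinitely many $3$'s. This is a strong-convergence statement for the fully subtractive algorithm restricted to $\Fthree$: the balancedness $\bfv_3^{(n)}<2\bfv_1^{(n)}$ at each of the infinitely many $3$-steps is what should supply a uniform contraction ratio, while the in-between blocks of letters $1$ and $2$ must be controlled so that they do not undo it. The metric and combinatorial structure of $\Fthree$ analysed in~\cite{Mee89,MN89} is the natural source for the required estimate. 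Once this uniform contraction is established, the incompatibility above follows at once, and hence so does the proposition.
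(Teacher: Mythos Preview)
Your setup is correct and matches the paper's: you take a nonzero integer vector $\mathbf a$ orthogonal to $\bfv$, propagate it via $\mathbf a^{(n)}=\transp{\MFS_{i_n}}\cdots\transp{\MFS_{i_1}}\mathbf a$, observe that each $\mathbf a^{(n)}$ stays integer, nonzero, orthogonal to $\bfv^{(n)}$, and hence of mixed sign. So far so good.

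The genuine gap is exactly the one you name yourself: the ``uniform contraction'' of the non-dominant directions along products containing infinitely many $3$'s. This is not a technicality you can outsource to~\cite{Mee89,MN89}. Note that $\transp{\MFSi}$ and $\transp{\MFSii}$ are unipotent (all eigenvalues equal to $1$), so they do not contract anything at all; long blocks of $1$'s and $2$'s can and do undo whatever contraction a single $\transp{\MFSiii}$ provides. What you would need is a strong-convergence theorem for the fully subtractive algorithm restricted to $\Fthree$, which is a substantial result in its own right and is not what those references supply. Your alternative ``endgame'' --- forcing $\bfv_3^{(n)}=\bfv_1^{(n)}+\bfv_2^{(n)}$ for some $n$ --- is not equivalent and does not follow from anything you have set up.

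The paper (following~\cite{Av-Del13}) sidesteps the whole contraction problem with one idea you are missing. Using the identity $\bfv=s_\bfv(\bfe_1+\bfe_2+\bfe_3)-\sum_i(s_\bfv-\bfv_i)\bfe_i$ with $s_\bfv=\|\bfv\|_1/2$, one constructs \emph{row-stochastic} matrices $\mathbf A^{\textup{FS}}_{i,\bfv}$ (entries $1-\bfv_j/s_\bfv\in(0,1)$ on one row, identity elsewhere) that coincide with $\transp{\MFSx}$ on the hyperplane $\{\bfx:\langle\bfv,\bfx\rangle=0\}$. Since products of stochastic matrices have operator norm $1$ in $\|\cdot\|_\infty$, the sequence $(\mathbf a^{(n)})$ is \emph{bounded} --- for free, with no contraction estimate. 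Boundedness plus integrality forces $\mathbf a^{(k)}=\mathbf a^{(\ell)}$ for some $k<\ell$; choosing $\ell-k$ large enough that the block $i_{k+1}\cdots i_\ell$ contains three $3$'s (possible by Lemma~\ref{lem:3infty}) makes the stochastic product $\mathbf A_\ell\cdots\mathbf A_{k+1}$ irreducible, and Perron--Frobenius then says its eigenspace for $1$ is spanned by $\bfe_1+\bfe_2+\bfe_3$. Hence $\mathbf a^{(k)}$ has all coordinates of one sign, contradicting $\langle\bfv^{(k)},\mathbf a^{(k)}\rangle=0$. The stochastic-matrix trick is the missing idea; once you have it the argument is short and elementary.
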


\begin{proof}
%TODO ESSAI 2.
The present  proof  is inspired by~\cite{Av-Del13}.
Let $\bfv \in \Fthree$  and  $\bfx \in {\mathbb Z}^3$ such that
$\langle \bfv, \bfx \rangle=0$.  We set $s_{\bfv}=(\sum_{i=1}^3 \bfv_i)/2 $.
One can rewrite  $\bfv$ as $\bfv=s_{\bfv}(\sum_{i=1}^3 \bfe_i)- \sum_{i=1}^3 (s_{\bfv}-\bfv_i) \bfe_i$,
which yields $\langle \bfv, \bfx \rangle=s_{\bfv}( \sum_{i=1}^3 \bfx_i) -\sum_{i=1}^3 (s_{\bfv}-\bfv_i) \bfx_i.$
The fact that $\langle \bfv, \bfx \rangle=0$ implies that
\[
\sum_{i=1}^3 \bfx_i =\sum_{i=1}^3 (1-\bfv_i/s_{\bfv}) \bfx_i.
\]

Furthermore, one checks that
$0 < 1- \bfv_i/s_{\bfv} < 1$, for $i=1,2,3$, since $\bfv \in \Fthree$.
The matrices
\[
{\mathbf A}^{\textup{FS}}_{1,\bfv} =
    \begin{bmatrix} 1-\bfv_1 /s_{\bfv}& 1-\bfv_2/s_{\bfv} & 1-\bfv_3/s_{\bfv} \\ 0 & 1 & 0 \\ 0& 0 & 1 \end{bmatrix},
    \quad
{\mathbf A}^{\textup{FS}}_{2,\bfv} =
    \begin{bmatrix} 0 & 1 & 0 \\ 1-\bfv_1/s_{\bfv} & 1 -\bfv_2/s_{\bfv}& 1-\bfv_3/s_{\bfv} \\ 0 & 0 & 1 \end{bmatrix},
\]
\[
{\mathbf A}^{\textup{FS}}_{3,\bfv} =
    \begin{bmatrix} 0 & 1 & 0 \\ 0 & 0 & 1 \\ 1-\bfv_1/s_{\bfv} & 1-\bfv_2/s_{\bfv} & 1-\bfv_3/s_{\bfv}\end{bmatrix}
\]
thus satisfy
\begin{equation}\label{eq:transp}
{\mathbf A}^{\textup{FS}}_{i,\bfv}\, \bfx =\transp{{\mathbf M}^{\textup{FS}}_{i}\, \bfx}, \mbox{ for all } i .
\end{equation}
for every $\bfx \in {\mathbb Z}^3$ such that $\langle \bfv, \bfx \rangle=0$.
We consider the matrix  norm induced  by  the  norm  $||\ ||_{\infty}$, that is,
$||{\mathbf A}||=\max _{i=1,2,3} \sum_{j=1}^{3} | a_{ij}|$.
The matrices ${\mathbf A}^{\textup{FS}}_{i,\bfv}$ are  nonnegative stochastic matrices.

Let  $(i_n)_n \in \{1,2,3\}^{\mathbb N}$ be the $\bfF$-expansion of $\bfv$.
One has
${\mathbf M}_n={\mathbf M}^{\textup{FS}}_{i_n}$ for all $n$.
One   sets  ${\mathbf A}_n={\mathbf A}^{\textup{FS}}_{i_n,\bfv^{(n-1)}}$, and
$\bfx ^{(n)}= \transp{{\mathbf M}_n} \cdots \transp{{\mathbf M}_1}\,\bfx$, for all $n$.
Note that the vectors  $\bfx^{(n)}$  take integer values.
Since  $\langle \bfv, \bfx \rangle=0$, we then have
\[
\langle \bfv^{(1)}, \bfx^{(1)} \rangle=\langle({\mathbf M}^{\textup{FS}}_{i_1,\bfv})^{-1} \, \bfv,
\transp{{\mathbf M}^{\textup{FS}}_{i_1,\bfv}}\, \bfx \rangle=\langle \bfv, \bfx \rangle= 0.
\]
More generally, one gets $\langle \bfv ^{(n)}, \bfx^{(n)} \rangle=0$, and (\ref{eq:transp}) extends to
\[
{\mathbf A}_{n+1}\,\bfx^{(n)}=\transp{{\mathbf M}_{n+1}}\,\bfx^{(n)}
\mbox{ and }
\bfx ^{(n)}={\mathbf A}_n\cdots  {\mathbf A}_1\bfx.
\]

The matrices   ${\mathbf A}_n\cdots  {\mathbf A}_1$  are stochastic matrices,
which yields that $|| \bfx^{(n)}||=|| {\mathbf A}^{(n)}\bfx||$ takes bounded  values.
Furthermore, the  vectors $\bfx^{(n)}$ take integer values.
There thus exist $k, \ell$ with $k < \ell$ such that $ \bfx^{(k)}= \bfx^{(\ell)}$.
Let us assume now $\bfx\neq 0$.
Then, $\bfx^{(k)}\neq 0$ (the matrices  ${\mathbf M}^{\textup{FS}}_{i,\bfv}$ are unimodular),
and $\bfx^{(k)}$ is  an eigenvector for ${\mathbf A}_{\ell}\cdots  {\mathbf A}_{k+1}$  for  the eigenvalue $1$.
%Let us recall that the matrix ${\mathbf A}_{\ell}\cdots  {\mathbf A}_{k+1}$ is  stochastic.

Furthermore,  we assume that $\ell-k $ is large enough for $i_{k+1}\cdots i_{\ell}$ to
contain three times the letter $3$.  We use here the assumption that  $\bfv $ belongs to
$\Fthree$ together with  Lemma~\ref{lem:3infty}. One then checks that the matrix
${\mathbf A}_{\ell}\cdots {\mathbf A}_{k+1}$ is irreducible, that is,  for any  index $(i,j)$,
there exists  a power of the matrix for which the corresponding entry is positive. Indeed, as soon as a
matrix of type $3$ occurs at least  three times, one checks that the product of nonnative matrices
${\mathbf A}_{\ell}\cdots {\mathbf A}_{k+1}$ is irreducible.

By applying  the Perron-Frobenius theorem to the nonnegative stochastic matrix
${\mathbf A}_{\ell}\cdots  {\mathbf A}_{k+1}$,
one deduces  that  $\bfx^{(k)}$
is equal up to some multiplicative constant
to the vector  $\bfe_1+\bfe_2+\bfe_3$.
We deduce that  all the coordinates of  $\bfx^{(k)}$  are nonzero and have the same sign,
which   gives a  contradiction with $\langle \bfv ^{(k)}, \bfx^{(k)} \rangle=0$.
\end{proof}
%We also assume that $\ell$  is large enough for  $i_{k+1}\cdots i_{\ell}$  to contain the three letters $1,2,3$.

%\TODO cette prop a été mergée avec le Theorème 2.6
%Let us illustrate the interest of working with this set of parameters.
%\begin{prop}\label{prop:omegaEstNormeUnSurDeux}
%If $\bfv \in \Fthree$, then
%$\Omega(\bfv) = \sum^\infty_{n=0}{\bfv^{(n)}_1}  = \dfrac{\|\bfv\|_1}{2}$.
%\end{prop}
%\begin{proof}
%According to  Theorem~\ref{theo:algo_cc}, we have:
%$\bfv_1+\bfv_2+\bfv_3-2\Omega(\bfv) = \bfv^{(i)}_1+\bfv^{(i)}_2+\bfv^{(i)}_3-2\Omega(\bfv^{(i)})$
%for all $i \in \{1,\ldots,n\}$.
%Since $\Omega(\bfv^{(n)}) \leq \|\bfv^{(n)}\|_1$
%and $\lim_{n\to\infty}{\bfv^{(n)}} = \mathbf 0$, then $\lim_{n\to\infty} \Omega(\bfv^{(n)}) = 0$ and the result follows.
%\end{proof}
%In particular, if $\bfv \in \Fthree$, then $\Omega(\bfv)$ does not depend on $\mu$. Hence, from now on,  we consider only $\bfv \in \Fthree$ and we refer to $\Omega(\bfv)$ as $%\Omega(\bfv)$.

%%%%%%%%%%%%%%%%%%%%
\subsection{Substitutions and dual substitutions}
\label{subsec:sub}

Let $\mcA=\{1,2,3\}$ be a finite alphabet and $\mcA^\star$ be the set of finite words over $\mcA$.

\begin{defi}[Substitution]
A substitution over $\mcA$ is a  morphism of the free monoid $\mcA^\star$,
\emph{i.e.}, a function $\sigma : \mcA^\star \rightarrow \mcA^\star$
with
 $\sigma(uv) = \sigma(u)\sigma(v)$ for all words $u, v \in \mcA^\star$.
%and such that the image of each letter of $\mcA$ is non-empty.
\end{defi}

Given a substitution $\sigma$ over $\mcA$,
the \emph{incidence matrix} $\Ms$ of $\sigma$ is the square matrix of size $3 \times 3$
defined by $\Ms = (m_{ij})$, where $m_{i,j}$ is the number of occurrences of the letter $i$ in $\sigma(j)$.
A substitution $\sigma$ is \emph{unimodular} if $\det \Ms = \pm 1$.
%A substitution $\sigma$ is \emph{primitive} if there exists an integer $n$ such that, for all letters $a,b \in \mcA$, $\sigma^n(a)$ contains at least one occurrence of the letter $b$.

\begin{defi}[Dual substitution~\cite{AI01}]
\label{dfn::DS}
Let $\sigma : \{1,2,3\}^\star \longrightarrow \{1,2,3\}^\star$ be a unimodular substitution.
The \emph{dual substitution} $\EOSS$ is defined as
\[
\EOSS([\bfx, i]^\star) \ = \
\Msinv \bfx + \bigcup_{(p,j,s) \in \mcA^\star \times \mcA \times \mcA^\star \ : \ \sigma(j) = pis} [\Msinv \ell(s), j]^\star,
\]
where $\ell : w \mapsto (|w|_1,|w|_2,|w|_3) \in \bbZ^3$
is the \emph{Parikh map} counting the occurrences of each letter in a word $w$.
We extend the above definition to any union of unit faces: $\EOSS(P_1\cup P_2) = \EOSS(P_1) \cup \EOSS(P_2)$.
\end{defi}
Note that
$\EOS(\sigma \circ \sigma') = \EOS(\sigma') \circ \EOS(\sigma)$
for unimodular $\sigma$ and $\sigma'$ (see~\cite{AI01}).

\begin{prop}[\cite{AI01,Fer06}]
\label{prop::imgplane}
We have $\EOSS(\Gv) = \bfGa_{\transp{\Ms} \bfv}$
for every stepped plane $\Gv$ and unimodular substitution $\sigma$.
Furthermore, the images of two distinct faces of $\Gv$ have no common unit face.
\end{prop}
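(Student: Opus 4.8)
The plan is to prove both assertions at once by determining, for the target normal vector $\mathbf{w} := \transp{\Ms}\bfv$, exactly which unit faces occur in $\EOSS(\Gv)$ and with which multiplicity. The whole argument rests on the two elementary identities $\langle \bfy, \mathbf{w}\rangle = \langle \Ms \bfy, \bfv\rangle$ and $\langle \bfe_j, \mathbf{w}\rangle = \langle \Ms\bfe_j, \bfv\rangle = \langle \ell(\sigma(j)), \bfv\rangle$, the last equality holding because the $j$th column of $\Ms$ is precisely the Parikh vector $\ell(\sigma(j))$. First I would record that a unit face contributed to $\EOSS([\bfx,i]^\star)$ by a decomposition $\sigma(j) = pis$ is $[\bfy,j]^\star$ with $\Ms\bfy = \bfx + \ell(s)$ (unimodularity of $\sigma$ guaranteeing $\bfy \in \bbZ^3$), so that $\langle \bfy, \mathbf{w}\rangle = \langle \bfx, \bfv\rangle + \langle \ell(s), \bfv\rangle$.

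For the inclusion $\EOSS(\Gv) \subseteq \bfGa_{\mathbf{w}}$, I would start from $[\bfx,i]^\star \in \Gv$, i.e.\ $0 \le \langle \bfx, \bfv\rangle < \bfv_i$, and from $\ell(\sigma(j)) = \ell(p) + \bfe_i + \ell(s)$, whence $\langle \bfe_j, \mathbf{w}\rangle = \langle \ell(p), \bfv\rangle + \bfv_i + \langle \ell(s), \bfv\rangle$. Using the nonnegativity of $\bfv$ and of the Parikh vectors $\ell(p), \ell(s)$, the chain $0 \le \langle \ell(s), \bfv\rangle \le \langle \bfy, \mathbf{w}\rangle < \bfv_i + \langle \ell(s), \bfv\rangle \le \langle \bfe_j, \mathbf{w}\rangle$ gives exactly $[\bfy,j]^\star \in \bfGa_{\mathbf{w}}$.

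The reverse inclusion and the disjointness then follow together from a single partition argument, which is the heart of the proof. I would fix a face $[\bfy,j]^\star \in \bfGa_{\mathbf{w}}$ and write $\sigma(j) = a_1 \cdots a_L$; the output type $j$ already forces the decomposed word to be $\sigma(j)$, so the only freedom is the position of the marked letter. Setting $S_k = \sum_{m \ge k} \bfv_{a_m}$ (with $S_{L+1} = 0$ and $S_1 = \langle \bfe_j, \mathbf{w}\rangle$), the half-open intervals $[S_{k+1}, S_k)$ for $k = 1, \dots, L$ tile $[0, \langle \bfe_j, \mathbf{w}\rangle)$. Since $h := \langle \bfy, \mathbf{w}\rangle \in [0, \langle \bfe_j, \mathbf{w}\rangle)$ by definition of $\bfGa_{\mathbf{w}}$, there is a unique index $k_0$ with $h \in [S_{k_0+1}, S_{k_0})$, and necessarily $\bfv_{a_{k_0}} > 0$. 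Reading off the decomposition $\sigma(j) = p\,a_{k_0}\,s$ at position $k_0$ and putting $\bfx = \Ms\bfy - \ell(s)$, one has $\langle \bfx, \bfv\rangle = h - S_{k_0+1}$, so the membership $h \in [S_{k_0+1}, S_{k_0})$ is exactly $0 \le \langle \bfx, \bfv\rangle < \bfv_{a_{k_0}}$, i.e.\ $[\bfx, a_{k_0}]^\star \in \Gv$, and this face maps to $[\bfy,j]^\star$. Uniqueness of $k_0$ shows simultaneously that $[\bfy,j]^\star$ is attained (reverse inclusion) and that it has a single preimage face in $\Gv$ (so the images of two distinct faces share no unit face).

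The main obstacle is to get this partition/uniqueness step entirely right, in particular the claim that the intervals $[S_{k+1},S_k)$ tile $[0,\langle\bfe_j,\mathbf{w}\rangle)$ and that the selected $k_0$ always satisfies $\bfv_{a_{k_0}}>0$. This is also where the degenerate directions $\bfv_i=0$ are handled: such positions produce empty intervals that are harmlessly skipped, consistently with the fact that $\Gv$ carries no face of type $i$ when $\bfv_i=0$. Everything else — the two displayed identities and the translation $\Ms\bfy=\bfx+\ell(s)$ — is routine bookkeeping with the Parikh map, so I would keep those verifications brief and concentrate the write-up on the tiling argument.
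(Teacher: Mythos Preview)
The paper does not give a proof of this proposition; it is quoted from \cite{AI01,Fer06} and used as a black box. Your argument is correct and is essentially the standard one: the identity $\langle \bfy,\mathbf w\rangle=\langle \bfx,\bfv\rangle+\langle \ell(s),\bfv\rangle$ reduces both assertions to the observation that the half-open intervals $[S_{k+1},S_k)$ associated with the suffix heights of $\sigma(j)$ tile $[0,\langle\bfe_j,\mathbf w\rangle)$, so each face of $\bfGa_{\mathbf w}$ has exactly one preimage face in $\Gv$. Your treatment of the degenerate case $\bfv_i=0$ (empty intervals, no faces of that type in $\Gv$) is also right.
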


We now introduce the substitutions associated with the ordered fully subtractive algorithm,
which will be our main tool.  Let
\begin{equation*}
\begin{array}{ccccc}
\sFSi = \left\{\begin{array}{l}
1 \mapsto 1 \\ 2 \mapsto 21 \\3 \mapsto 31
\end{array}
\right. & \quad &
\sFSii = \left\{\begin{array}{l}
1 \mapsto 2 \\ 2 \mapsto 12 \\3 \mapsto 32
\end{array}
\right. & \quad &
\sFSiii = \left\{\begin{array}{l}
1 \mapsto 3 \\ 2 \mapsto 13 \\3 \mapsto 23.
\end{array}
\right.
\end{array}
\end{equation*}
The matrices occurring in the expansion of $\bfv$ according to the ordered fully subtractive algorithm
are the transposes of the matrices of incidence of the $\sFSx$,
that is, $\bfM_{\sFSx} = \transp{\MFSx}$
for $i \in \{1,2,3\}$.

We denote by $\SFSx$ the three dual substitutions $\EOS(\sFSx)$ for $i \in \{1,2,3\}$.
They can be represented as follows,
where the black dot respectively stands for the distinguished vector of a face and of  its image.
\[
\renewcommand{\tabcolsep}{1.5mm}
\renewcommand{\arraystretch}{1.2}
\centering
\SFSi : \left\{
    \begin{tabular}{rcl}%
    \myvcenter{%
    \begin{tikzpicture}
    [x={(-0.216506cm,-0.125000cm)}, y={(0.216506cm,-0.125000cm)}, z={(0.000000cm,0.250000cm)}]
    \fill[fill=facecolor, draw=black, shift={(0,0,0)}]
    (0, 0, 0) -- (0, 1, 0) -- (0, 1, 1) -- (0, 0, 1) -- cycle;
    \node[circle,fill=black,draw=black,minimum size=1.2mm,inner sep=0pt] at (0,0,0) {};
    \end{tikzpicture}}%
     & \myvcenter{$\mapsto$} &
    \myvcenter{%
    \begin{tikzpicture}
    [x={(-0.216506cm,-0.125000cm)}, y={(0.216506cm,-0.125000cm)}, z={(0.000000cm,0.250000cm)}]
    \fill[fill=facecolor, draw=black, shift={(0,0,0)}]
    (0, 0, 0) -- (0, 1, 0) -- (0, 1, 1) -- (0, 0, 1) -- cycle;
    \fill[fill=facecolor, draw=black, shift={(0,0,0)}]
    (0, 0, 0) -- (0, 0, 1) -- (1, 0, 1) -- (1, 0, 0) -- cycle;
    \fill[fill=facecolor, draw=black, shift={(0,0,0)}]
    (0, 0, 0) -- (1, 0, 0) -- (1, 1, 0) -- (0, 1, 0) -- cycle;
    \node[circle,fill=black,draw=black,minimum size=1.2mm,inner sep=0pt] at (0,0,0) {};
    \end{tikzpicture}} \\
    \myvcenter{%
    \begin{tikzpicture}
    [x={(-0.216506cm,-0.125000cm)}, y={(0.216506cm,-0.125000cm)}, z={(0.000000cm,0.250000cm)}]
    \fill[fill=facecolor, draw=black, shift={(0,0,0)}]
    (0, 0, 0) -- (0, 0, 1) -- (1, 0, 1) -- (1, 0, 0) -- cycle;
    \node[circle,fill=black,draw=black,minimum size=1.2mm,inner sep=0pt] at (0,0,0) {};
    \end{tikzpicture}}%
     & \myvcenter{$\mapsto$} &
    \myvcenter{%
    \begin{tikzpicture}
    [x={(-0.216506cm,-0.125000cm)}, y={(0.216506cm,-0.125000cm)}, z={(0.000000cm,0.250000cm)}]
    \draw[thick, densely dotted] (0,0,0) -- (1,0,0);
    \fill[fill=facecolor, draw=black, shift={(1,0,0)}]
    (0, 0, 0) -- (0, 0, 1) -- (1, 0, 1) -- (1, 0, 0) -- cycle;
    \node[circle,fill=black,draw=black,minimum size=1.2mm,inner sep=0pt] at (0,0,0) {};
    \end{tikzpicture}} \\
    \myvcenter{%
    \begin{tikzpicture}
    [x={(-0.216506cm,-0.125000cm)}, y={(0.216506cm,-0.125000cm)}, z={(0.000000cm,0.250000cm)}]
    \fill[fill=facecolor, draw=black, shift={(0,0,0)}]
    (0, 0, 0) -- (1, 0, 0) -- (1, 1, 0) -- (0, 1, 0) -- cycle;
    \node[circle,fill=black,draw=black,minimum size=1.2mm,inner sep=0pt] at (0,0,0) {};
    \end{tikzpicture}}%
     & \myvcenter{$\mapsto$} &
    \myvcenter{%
    \begin{tikzpicture}
    [x={(-0.216506cm,-0.125000cm)}, y={(0.216506cm,-0.125000cm)}, z={(0.000000cm,0.250000cm)}]
    \draw[thick, densely dotted] (0,0,0) -- (1,0,0);
    \fill[fill=facecolor, draw=black, shift={(1,0,0)}]
    (0, 0, 0) -- (1, 0, 0) -- (1, 1, 0) -- (0, 1, 0) -- cycle;
    \node[circle,fill=black,draw=black,minimum size=1.2mm,inner sep=0pt] at (0,0,0) {};
    \end{tikzpicture}}
    \end{tabular}
\right.
\quad
\SFSii : \left\{
    \begin{tabular}{rcl}%
    \myvcenter{%
    \begin{tikzpicture}
    [x={(-0.216506cm,-0.125000cm)}, y={(0.216506cm,-0.125000cm)}, z={(0.000000cm,0.250000cm)}]
    \fill[fill=facecolor, draw=black, shift={(0,0,0)}]
    (0, 0, 0) -- (0, 1, 0) -- (0, 1, 1) -- (0, 0, 1) -- cycle;
    \node[circle,fill=black,draw=black,minimum size=1.2mm,inner sep=0pt] at (0,0,0) {};
    \end{tikzpicture}}%
     & \myvcenter{$\mapsto$} &
    \myvcenter{%
    \begin{tikzpicture}
    [x={(-0.216506cm,-0.125000cm)}, y={(0.216506cm,-0.125000cm)}, z={(0.000000cm,0.250000cm)}]
    \draw[thick, densely dotted] (0,0,0) -- (1,0,0);
    \fill[fill=facecolor, draw=black, shift={(1,0,0)}]
    (0, 0, 0) -- (1, 0, 0) -- (1, 0, 1) -- (0, 0, 1) -- cycle;
    \node[circle,fill=black,draw=black,minimum size=1.2mm,inner sep=0pt] at (0,0,0) {};
    \end{tikzpicture}} \\
    \myvcenter{%
    \begin{tikzpicture}
    [x={(-0.216506cm,-0.125000cm)}, y={(0.216506cm,-0.125000cm)}, z={(0.000000cm,0.250000cm)}]
    \fill[fill=facecolor, draw=black, shift={(0,0,0)}]
    (0, 0, 0) -- (0, 0, 1) -- (1, 0, 1) -- (1, 0, 0) -- cycle;
    \node[circle,fill=black,draw=black,minimum size=1.2mm,inner sep=0pt] at (0,0,0) {};
    \end{tikzpicture}}%
     & \myvcenter{$\mapsto$} &
    \myvcenter{%
    \begin{tikzpicture}
    [x={(-0.216506cm,-0.125000cm)}, y={(0.216506cm,-0.125000cm)}, z={(0.000000cm,0.250000cm)}]
    \fill[fill=facecolor, draw=black, shift={(0,0,0)}]
    (0, 0, 0) -- (0, 1, 0) -- (0, 1, 1) -- (0, 0, 1) -- cycle;
    \fill[fill=facecolor, draw=black, shift={(0,0,0)}]
    (0, 0, 0) -- (0, 0, 1) -- (1, 0, 1) -- (1, 0, 0) -- cycle;
    \fill[fill=facecolor, draw=black, shift={(0,0,0)}]
    (0, 0, 0) -- (1, 0, 0) -- (1, 1, 0) -- (0, 1, 0) -- cycle;
    \node[circle,fill=black,draw=black,minimum size=1.2mm,inner sep=0pt] at (0,0,0) {};
    \end{tikzpicture}} \\
    \myvcenter{%
    \begin{tikzpicture}
    [x={(-0.216506cm,-0.125000cm)}, y={(0.216506cm,-0.125000cm)}, z={(0.000000cm,0.250000cm)}]
    \fill[fill=facecolor, draw=black, shift={(0,0,0)}]
    (0, 0, 0) -- (1, 0, 0) -- (1, 1, 0) -- (0, 1, 0) -- cycle;
    \node[circle,fill=black,draw=black,minimum size=1.2mm,inner sep=0pt] at (0,0,0) {};
    \end{tikzpicture}}%
     & \myvcenter{$\mapsto$} &
    \myvcenter{%
    \begin{tikzpicture}
    [x={(-0.216506cm,-0.125000cm)}, y={(0.216506cm,-0.125000cm)}, z={(0.000000cm,0.250000cm)}]
    \draw[thick, densely dotted] (0,0,0) -- (1,0,0);
    \fill[fill=facecolor, draw=black, shift={(1,0,0)}]
    (0, 0, 0) -- (1, 0, 0) -- (1, 1, 0) -- (0, 1, 0) -- cycle;
    \node[circle,fill=black,draw=black,minimum size=1.2mm,inner sep=0pt] at (0,0,0) {};
    \end{tikzpicture}}
    \end{tabular}
\right.
\quad
\SFSiii \ : \ \left\{
    \begin{tabular}{rcl}%
    \myvcenter{%
    \begin{tikzpicture}
    [x={(-0.216506cm,-0.125000cm)}, y={(0.216506cm,-0.125000cm)}, z={(0.000000cm,0.250000cm)}]
    \fill[fill=facecolor, draw=black, shift={(0,0,0)}]
    (0, 0, 0) -- (0, 1, 0) -- (0, 1, 1) -- (0, 0, 1) -- cycle;
    \node[circle,fill=black,draw=black,minimum size=1.2mm,inner sep=0pt] at (0,0,0) {};
    \end{tikzpicture}}%
     & \myvcenter{$\mapsto$} &
    \myvcenter{%
    \begin{tikzpicture}
    [x={(-0.216506cm,-0.125000cm)}, y={(0.216506cm,-0.125000cm)}, z={(0.000000cm,0.250000cm)}]
    \draw[thick, densely dotted] (0,0,0) -- (1,0,0);
    \fill[fill=facecolor, draw=black, shift={(1,0,0)}]
    (0, 0, 0) -- (1, 0, 0) -- (1, 0, 1) -- (0, 0, 1) -- cycle;
    \node[circle,fill=black,draw=black,minimum size=1.2mm,inner sep=0pt] at (0,0,0) {};
    \end{tikzpicture}} \\
    \myvcenter{%
    \begin{tikzpicture}
    [x={(-0.216506cm,-0.125000cm)}, y={(0.216506cm,-0.125000cm)}, z={(0.000000cm,0.250000cm)}]
    \fill[fill=facecolor, draw=black, shift={(0,0,0)}]
    (0, 0, 0) -- (0, 0, 1) -- (1, 0, 1) -- (1, 0, 0) -- cycle;
    \node[circle,fill=black,draw=black,minimum size=1.2mm,inner sep=0pt] at (0,0,0) {};
    \end{tikzpicture}}%
     & \myvcenter{$\mapsto$} &
    \myvcenter{%
    \begin{tikzpicture}
    [x={(-0.216506cm,-0.125000cm)}, y={(0.216506cm,-0.125000cm)}, z={(0.000000cm,0.250000cm)}]
    \draw[thick, densely dotted] (0,0,0) -- (1,0,0);
    \fill[fill=facecolor, draw=black, shift={(1,0,0)}]
    (0, 0, 0) -- (1, 0, 0) -- (1, 1, 0) -- (0, 1, 0) -- cycle;
    \node[circle,fill=black,draw=black,minimum size=1.2mm,inner sep=0pt] at (0,0,0) {};
    \end{tikzpicture}} \\
    \myvcenter{%
    \begin{tikzpicture}
    [x={(-0.216506cm,-0.125000cm)}, y={(0.216506cm,-0.125000cm)}, z={(0.000000cm,0.250000cm)}]
    \fill[fill=facecolor, draw=black, shift={(0,0,0)}]
    (0, 0, 0) -- (1, 0, 0) -- (1, 1, 0) -- (0, 1, 0) -- cycle;
    \node[circle,fill=black,draw=black,minimum size=1.2mm,inner sep=0pt] at (0,0,0) {};
    \end{tikzpicture}}%
     & \myvcenter{$\mapsto$} &
    \myvcenter{%
    \begin{tikzpicture}
    [x={(-0.216506cm,-0.125000cm)}, y={(0.216506cm,-0.125000cm)}, z={(0.000000cm,0.250000cm)}]
    \fill[fill=facecolor, draw=black, shift={(0,0,0)}]
    (0, 0, 0) -- (0, 1, 0) -- (0, 1, 1) -- (0, 0, 1) -- cycle;
    \fill[fill=facecolor, draw=black, shift={(0,0,0)}]
    (0, 0, 0) -- (0, 0, 1) -- (1, 0, 1) -- (1, 0, 0) -- cycle;
    \fill[fill=facecolor, draw=black, shift={(0,0,0)}]
    (0, 0, 0) -- (1, 0, 0) -- (1, 1, 0) -- (0, 1, 0) -- cycle;
    \node[circle,fill=black,draw=black,minimum size=1.2mm,inner sep=0pt] at (0,0,0) {};
    \end{tikzpicture}}
    \end{tabular}
\right.
\]

\begin{lemm}[Preimages of unit faces by $\Sigma_i$]
\label{lemm:preimages}
Let $\bfx =\svect{x}{y}{z} \in \bbZ^3$.
We have
\[
\Sigma_1^{-1}([\bfx, i]^\star) \ = \
\left\{
\begin{array}{ll}
  \big[ \svect{x+y+z}{y}{z}, 1 \big]^\star
    & \textup{if } i = 1 \\
  \big[ \svect{x+y+z}{y}{z}, 1 \big]^\star \cup \big[ \svect{x+y+z-1}{y}{z}, i \big]^\star
    & \textup{if } i = 2,3
\end{array}
\right.,
\]
\[
\Sigma_2^{-1}([\bfx, i]^\star) \ = \
\left\{
\begin{array}{ll}
  \big[ \svect{y}{x+y+z}{z}, 2 \big]^\star
    & \textup{if } i = 1 \\
  \big[ \svect{y}{x+y+z}{z}, 2 \big]^\star \cup \big[ \svect{y}{x+y+z-1}{z}, i \big]^\star
    & \textup{if } i = 2,3
\end{array}
\right.,
\]
\[
\Sigma_3^{-1}([\bfx, i]^\star) \ = \
\left\{
\begin{array}{ll}
  \big[ \svect{y}{z}{x+y+z}, 3 \big]^\star
    & \textup{if } i = 1 \\
  \big[ \svect{y}{z}{x+y+z}, 3 \big]^\star \cup \big[ \svect{y}{z}{x+y+z-1}, i \big]^\star
    & \textup{if } i = 2,3
\end{array}
\right..
\]
\end{lemm}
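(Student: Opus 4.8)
The plan is to read the three images $\Sigma_1,\Sigma_2,\Sigma_3$ off Definition~\ref{dfn::DS} and then invert them face by face, since each $\Sigma_n^{-1}([\bfx,i]^\star)$ is, by definition, the union of all source faces whose image contains $[\bfx,i]^\star$.

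First I would record the incidence matrices $\Ms=\transp{\bfM^{\textup{FS}}_n}$ of $\sigma=\sigma^{\textup{FS}}_n$ together with their inverses; unimodularity of the $\bfM^{\textup{FS}}_n$ ensures that $\Msinv$ has integer entries, so that each preimage face still has a distinguished vertex in $\bbZ^3$. Then, for a fixed $n$ and each source type $j$, applying Definition~\ref{dfn::DS} amounts to scanning the words $\sigma^{\textup{FS}}_n(1),\sigma^{\textup{FS}}_n(2),\sigma^{\textup{FS}}_n(3)$ for the occurrences of the letter $j$ and, for each factorisation $\sigma^{\textup{FS}}_n(j')=p\,j\,s$, emitting the face $[\Msinv\bfy+\Msinv\ell(s),\,j']^\star$. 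This reproduces the three pictures drawn above: the source face of type $n$ goes to the tripod $[\Msinv\bfy,1]^\star\cup[\Msinv\bfy,2]^\star\cup[\Msinv\bfy,3]^\star$ (the letter $n$ sits at the end of every $\sigma^{\textup{FS}}_n(j')$, so each suffix $s$ is empty and nothing is translated), while each of the two remaining source types produces a single face translated by $\bfe_1$, since the relevant suffix is one letter and one checks that $\Msinv\ell(s)=\bfe_1$ in each of these cases.

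The inversion is then a finite matching. A face $[\bfx,i]^\star$ lies in $\Sigma_n([\bfy,j]^\star)$ exactly when one of the emitted faces has type $i$ and distinguished vertex $\bfx$, which forces either $\Msinv\bfy=\bfx$ or $\Msinv\bfy+\bfe_1=\bfx$, that is, $\bfy=\Ms\bfx$ or $\bfy=\Ms(\bfx-\bfe_1)$. Writing these out in coordinates gives $\svect{x+y+z}{y}{z}$ and $\svect{x+y+z-1}{y}{z}$ for $n=1$, and the corresponding permuted placements of the coordinates for $n=2,3$. The tripod contains a face of every type, so the vertex $\Ms\bfx$ (carrying the source type $n$) is a preimage for every $i$; a translated face of type $i$ exists only when $i\neq1$, contributing the second vertex $\Ms(\bfx-\bfe_1)$. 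This yields a single preimage when $i=1$ and a union of two faces when $i\in\{2,3\}$, matching the shape of the statement.

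The step that needs genuine care is the type label carried by this second, translated preimage: its source type is not free but equals the first letter of the word $\sigma^{\textup{FS}}_n(i)$, and this must be determined case by case — it happens to equal $i$ when $n=1$, but is permuted for $n=2$ and $n=3$ — so each entry of the statement should be checked against the explicit factorisations rather than read off from the value of $i$. Completeness of the list is not an obstacle once the forward rules are in hand, since those three rules account for every face ever emitted, making the inversion exhaustive; Proposition~\ref{prop::imgplane} offers an independent sanity check, as within any stepped plane $\Gv$ distinct faces have disjoint images and hence well-defined preimages. As a final verification I would apply each $\Sigma_n$ to the claimed right-hand side and confirm that the target face $[\bfx,i]^\star$ reappears.
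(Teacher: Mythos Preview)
The paper states this lemma without proof, treating it as a direct computation from Definition~\ref{dfn::DS}; your approach --- writing out each forward image from the words $\sigma^{\textup{FS}}_n(1),\sigma^{\textup{FS}}_n(2),\sigma^{\textup{FS}}_n(3)$ and the inverse incidence matrices, then matching each target face against the emitted faces --- is exactly the intended one and is complete.

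Your caveat at the end is well placed, and in fact it is more than a caveat: the type label on the second preimage face is misstated in the paper for $\Sigma_2$ and $\Sigma_3$. Your rule is the correct one --- the translated preimage of $[\bfx,i]^\star$ under $\Sigma_n$ (for $i\in\{2,3\}$) carries the source type equal to the \emph{first letter} of $\sigma^{\textup{FS}}_n(i)$, not $i$ itself. For $n=1$ this first letter happens to be $i$, so the printed formula is right there; but for $n=2$ one gets source type $1$ when $i=2$ (the lemma says $2$), and for $n=3$ one gets source types $1$ and $2$ for $i=2$ and $i=3$ respectively (the lemma says $2$ and $3$). The distinguished vertices are correct in all cases, as are the forward pictures of $\SFSi,\SFSii,\SFSiii$ displayed just before the lemma; the slip is confined to the symbolic type label in the statement. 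So your instinct to check each case against the explicit factorisations rather than trust the uniform ``type $i$'' pattern was exactly right.
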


%%%%%%%%%%%%%%%%%%%%
%%%%%%%%%%%%%%%%%%%%
%%%%%%%%%%%%%%%%%%%%

\section{Generating discrete planes by translations}
\label{sec:trans}

%Let $(\bfM_n)_{n \geq 1}$ be the sequence of matrices
%with values in $\{\MFSi, \MFSii, \MFSiii\}$ such that
%$\bfv = \bfM_1 \cdots \bfM_n \cdot \bfF^n(\bfv) = \bfM_1 \cdots \bfM_n \cdot \bfv_n$ for all $n$.
%Let $(\sigma_n)_{n \geq 1}$ be the sequence of corresponding substitutions
%with values in  $\{\sFSi,\sFSii,\sFSiii\}$, such that $\bfM_n = \transp{\bfM_{\sigma_n}}$ for all $n$.

We introduce in this section  a sequence of  $2$-connected subsets  $(\bfT_n)_{n \in \bbN}$
of $\bbZ^3$ that are contained in  arithmetic discrete planes with connecting thickness.
\begin{defi}[Generation by translations]
Let $\bfv \in \Fthree$ be a vector with $\bfF$-expansion $(i_n)_{n \in \bbN}$.
Denote by $\bfM_n$ the matrix $\smash{\transp\bfM_{\sFS_{i_n}}}$, for all $n \geq 1$.
We define the sequence $(\bfT_n)_{n \in \bbN}$ of subsets of $\bbZ^3$  as follows for all $n \geq 0$:
\[
\bfT_0 = \{\mathbf 0\}, \quad
\bfT_1 = \{\mathbf 0,\bfe_1\}, \quad
\bfT_{n+1} = \bfT_n
    \cup \left( \bfT_n + \transp{(\bfM_1 \dots \bfM_n)}^{-1} \cdot \bfe_1 \right).
\]
\end{defi}

Note that the second initial condition $\bfT_1 = \{\mathbf 0,\bfe_1\}$ is consistent with
the usual convention that an empty product of matrices is equal to  the identity matrix.

\begin{prop}
\label{prop::T_n}
Let $\bfv \in \Fthree$. We have $\cup^\infty_{n=0} \bfT_n \subseteq \frakP(\bfv,\Omega(\bfv))$.
\end{prop}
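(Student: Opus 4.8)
The plan is to track the single scalar quantity $\langle \bfx, \bfv\rangle$ over all points $\bfx$ produced by the recursion and to show it always stays in the half-open interval $[0,\Omega(\bfv))$. The whole argument reduces to one linear-algebra computation followed by a short induction, so I expect it to be elementary, with only one point (strictness) requiring a moment of care.

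First I would pin down the translation vectors. By definition $\bfM_n = \transp\bfM_{\sFS_{i_n}} = \MFS_{i_n}$ (since $\bfM_{\sFS_{i_n}} = \transp{\MFS_{i_n}}$), so that $\bfv = \bfM_1\cdots\bfM_n\,\bfv^{(n)}$ and hence $(\bfM_1\cdots\bfM_n)^{-1}\bfv = \bfv^{(n)}$. The vector added at step $n$ is $\bfu_n := \transp{(\bfM_1\cdots\bfM_n)}^{-1}\bfe_1$, and using the adjunction $\langle \transp A\,\bfy,\bfz\rangle = \langle \bfy, A\bfz\rangle$ with $A=(\bfM_1\cdots\bfM_n)^{-1}$ I would compute
\[
\langle \bfu_n, \bfv\rangle = \langle \bfe_1, (\bfM_1\cdots\bfM_n)^{-1}\bfv\rangle = \langle \bfe_1,\bfv^{(n)}\rangle = \bfv^{(n)}_1 .
\]
Thus passing from the ``first half'' $\bfT_n$ to the ``second half'' $\bfT_n+\bfu_n$ raises the scalar product with $\bfv$ by exactly $\bfv^{(n)}_1\ge 0$.

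Next I would prove by induction on $n$ that every $\bfx\in\bfT_n$ satisfies $0\le\langle\bfx,\bfv\rangle\le\sum_{k=0}^{n-1}\bfv^{(k)}_1$. The cases $\bfT_0=\{\mathbf 0\}$ and $\bfT_1=\{\mathbf 0,\bfe_1\}$ are immediate, since $\langle\bfe_1,\bfv\rangle=\bfv_1=\bfv^{(0)}_1$. For the inductive step $\bfT_{n+1}=\bfT_n\cup(\bfT_n+\bfu_n)$: points of $\bfT_n$ keep a scalar product in $[0,\sum_{k=0}^{n-1}\bfv^{(k)}_1]$ by the hypothesis, while points of $\bfT_n+\bfu_n$ have scalar product in $[\bfv^{(n)}_1,\sum_{k=0}^{n}\bfv^{(k)}_1]$, and since $\bfv^{(n)}_1\ge 0$ both ranges lie inside $[0,\sum_{k=0}^{n}\bfv^{(k)}_1]$, which is exactly the claim for $n+1$.

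Finally I would combine this bound with Theorem~\ref{theo:algo_cc}. For $\bfv\in\Fthree$ we have $\Omega(\bfv)=\sum_{k=0}^\infty\bfv^{(k)}_1$, so the displayed bound yields $\langle\bfx,\bfv\rangle\le\sum_{k=0}^{n-1}\bfv^{(k)}_1$, and the inequality is \emph{strict} against $\Omega(\bfv)$ precisely when the tail $\sum_{k\ge n}\bfv^{(k)}_1$ is positive. This strictness is the only delicate point: I would observe that $\Fthree$ is forward-invariant under $\bfF$, so $\bfv^{(n)}\in\Fthree$ and therefore $\sum_{k\ge n}\bfv^{(k)}_1=\Omega(\bfv^{(n)})=\|\bfv^{(n)}\|_1/2>0$, the last inequality holding because $\bfv^{(n)}=(\bfM_1\cdots\bfM_n)^{-1}\bfv\neq\mathbf 0$ (equivalently, Proposition~\ref{prop:F3} applied to $\bfv^{(n)}$ forces $\bfv^{(n)}_1>0$). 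Hence every $\bfx\in\bigcup_{n}\bfT_n$ satisfies $0\le\langle\bfx,\bfv\rangle<\Omega(\bfv)$, i.e.\ $\bfx\in\frakP(\bfv,\Omega(\bfv))$, giving the desired inclusion.
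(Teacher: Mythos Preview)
Your proof is correct and follows essentially the same route as the paper's: compute $\langle \bfu_n,\bfv\rangle=\bfv_1^{(n)}$ via the adjunction identity, then induct on $n$ to bound $\langle\bfx,\bfv\rangle$ by a partial sum of the $\bfv_1^{(k)}$, and conclude using $\Omega(\bfv)=\sum_{k\ge 0}\bfv_1^{(k)}$ from Theorem~\ref{theo:algo_cc}. The only cosmetic difference is that you carry the sharper bound $\le\sum_{k=0}^{n-1}\bfv_1^{(k)}$ (with equality attained) and then argue strictness against $\Omega(\bfv)$ via positivity of the tail, whereas the paper builds the strict inequality into the induction hypothesis from the start; both are fine and your strictness argument (forward-invariance of $\Fthree$ plus $\bfv^{(n)}\neq\mathbf 0$) is correct.
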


\begin{proof}
Let us prove that for all $n \in \bbN$ and $\bfx \in \bfT_n$, we have
\begin{equation}
\label{eq::assertion}
\langle \bfx,\bfv \rangle < \sum^n_{i=0}{\bfv_1^{(i)}}.
\end{equation}
The case $n \in \{0,1\}$ can be checked easily.
Assume that Eq. (\ref{eq::assertion}) holds for some $n \geq 1$, and let
$\bfx \in \bfT_{n+1} = \bfT_n \cup \left( \bfT_n + \transp{(\bfM_1 \cdots \bfM_n )}^{-1} \cdot \bfe_1 \right)$.
Then, two cases can occur.
\begin{enumerate}
\item
If $\bfx \in \bfT_n$
then $\langle \bfx , \bfv \rangle < \displaystyle \sum^n_{i=0}{\bfv_1^{(i)}} < \sum^{n+1}_{i=0}{\bfv_1^{(i)}}$.
\item
If $\bfx \in \bfT_n + \transp{(\bfM_1 \cdots \bfM_n )}^{-1} \cdot \bfe_1$,
then let $\bfy \in \bfT_n$  be such that $\bfx = \bfy + \transp{(\bfM_1 \cdots \bfM_n )}^{-1} \cdot \bfe_1$.
We have
\begin{eqnarray*}
\langle \bfx, \bfv \rangle
     & = & \langle \bfy + \transp{(\bfM_1 \cdots \bfM_n )}^{-1} \cdot \bfe_1, \bfv \rangle \\
     & = & \langle \bfy,\bfv \rangle  + \langle \transp{(\bfM_1 \cdots \bfM_n )}^{-1} \cdot \bfe_1, \bfv \rangle \\
     & = & \langle \bfy,\bfv \rangle  + \langle \transp{(\bfM_1 \cdots \bfM_n )}^{-1} \cdot \bfe_1,
        \bfM_1 \dots \bfM_n \cdot \bfv^{(n)} \rangle \\
     & = & \langle \bfy,\bfv \rangle  + \langle \bfe_1, \bfv^{(n)} \rangle \\
     & = & \langle \bfy,\bfv \rangle  + \bfv^{(n)}_1 < \displaystyle \sum^n_{i=0}{\bfv_1^{(i)}} + \bfv^{(n)}_1 =  \sum^{n+1}_{i=0}{\bfv_1^{(i)}}.
\end{eqnarray*}
\end{enumerate}
\end{proof}

\begin{prop}
\label{prop::Tn_conn}
Let $\bfv \in \Fthree$.
For all $n \in \bbN$, the set $\bfT_n$ is $2$-connected.
\end{prop}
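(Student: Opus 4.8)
The plan is to proceed by induction on $n$, relying on the elementary fact that the union of two nonempty $2$-connected subsets of $\bbZ^3$ is again $2$-connected provided the two sets either share a point or contain a pair of $2$-adjacent points (an \emph{edge}). The base cases are immediate: $\bfT_0 = \{\mathbf 0\}$ is a single point and $\bfT_1 = \{\mathbf 0,\bfe_1\}$ is an edge since $\|\mathbf 0 - \bfe_1\|_1 = 1$. For the inductive step I set $\mathbf a_n = \transp{(\bfM_1\cdots\bfM_n)}^{-1}\bfe_1$, so that $\bfT_{n+1} = \bfT_n \cup (\bfT_n + \mathbf a_n)$. Since each $\bfM_k$ is unimodular with integer entries, $\mathbf a_n \in \bbZ^3$, hence $\bfT_n + \mathbf a_n$ is an integer translate of $\bfT_n$ and is $2$-connected by the induction hypothesis (translation preserves $2$-adjacency). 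It then remains to exhibit an overlap or an edge between $\bfT_n$ and $\bfT_n + \mathbf a_n$.

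Unrolling the recursive definition gives the subset-sum description $\bfT_n = \{\sum_{k\in S}\mathbf a_k : S \subseteq \{0,\dots,n-1\}\}$, whose difference set is $D_n := \bfT_n - \bfT_n = \{\sum_{k=0}^{n-1}\epsilon_k\mathbf a_k : \epsilon_k\in\{-1,0,1\}\}$. Writing $B_1 = \{\mathbf 0,\pm\bfe_1,\pm\bfe_2,\pm\bfe_3\}$ for the closed unit $\ell_1$-ball, I claim it suffices to prove $\mathbf a_n \in D_n + B_1$. Indeed, if $\mathbf a_n = (\bfx - \bfy) + \mathbf r$ with $\bfx,\bfy\in\bfT_n$ and $\mathbf r\in B_1$, then $\bfx\in\bfT_n$ and $\bfy + \mathbf a_n\in\bfT_n+\mathbf a_n$ satisfy $\|\bfx - (\bfy+\mathbf a_n)\|_1 = \|\mathbf r\|_1\leq 1$, so these two points either coincide or form an edge, which is exactly the link needed to conclude that $\bfT_{n+1}$ is $2$-connected.

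To prove $\mathbf a_n\in D_n + B_1$ I track the three columns $\mathbf a_k,\mathbf b_k,\mathbf c_k$ of $N_k := \transp{(\bfM_1\cdots\bfM_k)}^{-1}$, so that $\mathbf a_k$ is the translation vector and $N_0$ is the identity, giving $\mathbf a_0 = \bfe_1$, $\mathbf b_0 = \bfe_2$, $\mathbf c_0 = \bfe_3$. From $\bfM_k = \MFS_{i_k}$ and $N_k = N_{k-1}\transp{\MFS_{i_k}}^{-1}$, a one-line computation of $\transp{\MFS_i}^{-1}$ on the basis yields update rules depending on $i_k$: for $i_k = 1$ one gets $\mathbf a_k = \mathbf a_{k-1}$, $\mathbf b_k = \mathbf b_{k-1}-\mathbf a_{k-1}$, $\mathbf c_k = \mathbf c_{k-1}-\mathbf a_{k-1}$; for $i_k\in\{2,3\}$ one gets $\mathbf a_k = \mathbf b_{k-1}-\mathbf a_{k-1}$, together with $\mathbf b_k = \mathbf a_{k-1}$, $\mathbf c_k = \mathbf c_{k-1}-\mathbf a_{k-1}$ (case $2$), or $\mathbf b_k = \mathbf c_{k-1}-\mathbf a_{k-1}$, $\mathbf c_k = \mathbf a_{k-1}$ (case $3$). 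The key invariant, proved by induction on $m$, is $(\star_m)$: $\mathbf b_m,\mathbf c_m\in D_m + B_1$. It holds for $m=0$ since $\mathbf b_0 = \bfe_2$ and $\mathbf c_0 = \bfe_3$ lie in $B_1$; and each update rule replaces a column either by a previous column, or by a previous column minus $\mathbf a_{m-1}$, so a representation in $D_{m-1}+B_1$ becomes one in $D_m + B_1$ (the $B_1$-error is carried unchanged while the $D$-part absorbs the extra $-\mathbf a_{m-1}$, using $D_{m-1}\pm\mathbf a_{m-1}\subseteq D_m$).

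Granting $(\star)$, the inductive step closes: if $i_n = 1$ then $\mathbf a_n = \mathbf a_{n-1}\in D_n$ (the two copies actually overlap), and if $i_n\in\{2,3\}$ then $\mathbf a_n = \mathbf b_{n-1}-\mathbf a_{n-1}\in (D_{n-1}+B_1) - \mathbf a_{n-1}\subseteq D_n + B_1$; in both cases the required link exists. I expect the main obstacle to be isolating the correct auxiliary invariant on $\mathbf b$ and $\mathbf c$: because the vectors $\mathbf a_n$ grow and the two copies need not overlap, there is no single fixed connecting vertex, and the only uniform control seems to be the ``difference set up to $\ell_1$-distance one'' statement $(\star)$, whose stability must be verified against all three update rules. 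The remaining ingredients, namely translation invariance of $2$-connectedness, the subset-sum description of $\bfT_n$, and the gluing principle, are routine.
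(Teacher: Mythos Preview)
Your argument is correct, and it is genuinely different from the paper's. The paper first invokes Proposition~\ref{prop:F3} (which requires $\bfv\in\Fthree$ and a Perron--Frobenius argument) to get $\dim_\bbQ(\bfv_1,\bfv_2,\bfv_3)=3$, then identifies each $\bfT_n$ with a set of $\{0,1\}$-combinations of the $\bfv_1^{(i)}$ via the inner product $\langle\cdot,\bfv\rangle$, and finally shows by a five-case analysis on the \emph{suffix} of the word $i_1\cdots i_n$ that the specific ``top'' point $\bfx_n=\sum_{k<n}\mathbf a_k$ is $2$-adjacent to $\bfT_{n-1}$. Your proof bypasses the inner product entirely: tracking all three columns $\mathbf a_k,\mathbf b_k,\mathbf c_k$ of $N_k$ and maintaining the invariant $\mathbf b_m,\mathbf c_m\in D_m+B_1$ reduces the induction to a uniform three-case check on $i_m$ alone. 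This is more elementary (it does not use $\bfv\in\Fthree$, only the combinatorics of the matrix products, so it actually proves $2$-connectedness of $\bfT_n$ for \emph{any} sequence $(i_n)\in\{1,2,3\}^\bbN$), and the case analysis is shorter. What the paper's route buys in exchange is an explicit bridge point: it always pins the adjacency at the maximal element $\bfx_n$, whereas your difference-set argument is purely existential.
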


\begin{proof}
With the same arguments as in proof of Proposition~\ref{prop::T_n},
and by using Proposition~\ref{prop:F3},
we first get by induction that, for all $n \geq 1$:
\begin{equation*}
\bfT_n = \left\{ \bfx \in \bbZ^3 :
    \langle \bfx,\bfv \rangle = \sum^{n-1}_{i=0}{\varepsilon_i \bfv^{(i)}_1}
    \text{with $\varepsilon_i \in \{0,1\}$ for all $i$} \right\}.
\end{equation*}
Note that $\dim_\bbQ(\bfv_1,\bfv_2,\bfv_3) = 3$ implies that for all $x,y \in \bbZ^3$,
$\langle x, v\rangle = \langle y , v \rangle \iff x = y$.

Now, for all $n \in \bbN$, let $\bfx_n \in \bfT_n$
be such that $\langle \bfx_n, \bfv \rangle = \sum^{n-1}_{i=0}{ \bfv^{(i)}_1}$
(we set $\bfx_0= \mathbf 0$).
Let us prove  by induction the following property:
for all $n \geq 1,$
there exists $i_n \in \{1,2,3\}$ such that $\bfx_n - \bfe_{i_n} \in \bfT_{n-1}$.
This property implies that $\bfx_{n}$ is $2$-adjacent to $ \bfT_{n-1}$,
which implies the $2$-connectedness of $\bfT_n$.

The induction property is true for $n=1$ with $\bfx_1=\bfe_1$.
Let us now assume that the induction hypothesis holds for $n \geq 1$.
Let $u_1\cdots u_n\in\{1,2,3\}^{\bbN^\star}$ be such that $\MFS_{u_1} \cdots \MFS_{u_n} \bfv^{(n)} = \bfv$.
We have $\langle \bfx_{n+1}, \bfv \rangle = \langle \bfx_n, \bfv \rangle+ \bfv^{(n)}_1$,
and by definition of the fully subtractive algorithm $\bfF$:
\begin{equation*}
\bfv^{(n)}_1=
    \left\{
    \begin{array}{ll}
        \bfv^{(n-1)}_1,                  & \text{if $u_{n}=1$} \\
        \bfv^{(n-1)}_2 - \bfv^{(n-1)}_1, & \text{if $u_{n} \in \{2,3\}$.}
    \end{array}
    \right.
\end{equation*}
We distinguish several cases according to the values taken by $u_1 \cdots u_n$.

\noindent
\textbf{Case 1.}
If $u_n=1$, then, $\langle \bfx_{n+1} , \bfv \rangle = \langle \bfx_{n} , \bfv \rangle+ \bfv^{(n-1)}_1$, and
\begin{equation*}
\langle \bfx_{n+1} - \bfe_{i_n}, \bfv \rangle
    = \langle \underbrace{\bfx_{n} - \bfe_{i_n}}_{\in \bfT_{n-1}}, \bfv \rangle+ \bfv^{(n-1)}_1
    = \sum^{n-2}_{i=1}{\varepsilon_i \bfv^{(i)}_1} + \bfv^{(n-1)}_1,
\end{equation*}
where $\varepsilon_i \in \{0,1\}$ for $1 \leq i \leq n-2$,
which implies that $\bfx_{n+1} - \bfe_{i_n} \in \bfT_n$,
so taking $i_{n+1} = i_n$ yields the desired result.

%{
%\color{red} Xavier : N'y a-t-il pas un petit bug d'indice ? On veut montrer
%que $x_n - e_{i_n} \in T_{n-1}$. On fait l'hypoth\`ese que c'est vrai pour $n$ et
%on le montre pour $n+1$ mais on montre que $\bfx_{n+1} - \bfe_{i_n} \in
%\bfT_n$. Est-ce qu'on est capable de dire que $i_{n+1} = i_{n}$ (ce qui est vrai en passant)
%
%Timo : non, y'a pas de bug d'indice ; on prouve bien que le $i_{n+1}$ qu'on cherche vaut $i_n$ dans ce cas.
%}

\noindent
\textbf{Case 2.}
If $u_n \in \{2,3\}$ and $u_1 \cdots u_{n-1} = 1^k $, then
\begin{eqnarray*}
\langle \bfx_{n+1}, \bfv \rangle
    & = & \langle \bfx_n, \bfv \rangle + \bfv^{(n-1)}_2 - \bfv^{(n-1)}_1
        = \langle \bfx_{n-1}, \bfv \rangle + \bfv^{(n-1)}_2 \\
    & = & \langle \bfx_{n-2}, \bfv \rangle + \bfv^{(n-2)}_2
        = \cdots
        = \langle \bfx_{n-1-k}, \bfv \rangle+ \bfv^{(n-1-k)}_2
        = \bfv^{(0)}_2,
\end{eqnarray*}
which implies that $\bfx_{n+1} - \bfe_2 \in \bfT_n$.

\noindent
\textbf{Case 3.}
If $u_n \in \{2,3\}$ and $u_1 \cdots u_{n-1} = \cdots21^k $ with $0 \leq k \leq n-2$, then
\begin{eqnarray*}
\langle \bfx_{n+1} , \bfv \rangle
    & = & \langle \bfx_n, \bfv \rangle+ \bfv^{(n-1)}_2 - \bfv^{(n-1)}_1
        = \langle \bfx_{n-1}, \bfv \rangle+ \bfv^{(n-1)}_2 \\
    & = & \langle \bfx_{n-1-k}, \bfv \rangle+ \bfv^{(n-1-k)}_2
        = \langle \bfx_{n-1-k}, \bfv \rangle+ \bfv^{(n-2-k)}_1,
\end{eqnarray*}
so $\bfx_{n+1} - \bfe_{i_{n-1-k}} \in \bfT_{n-1-k} \subseteq \bfT_n$.

%\textcolor{red}{ Valerie : Pour  moi, ca contient en sous cas l'ancien cas  (e) avec $k=0$.
% If $u_{n-1}u_n = 2y$ with $y \in \{2,3\}$, then
%\begin{eqnarray*}
%\langle \bfx_{n+1} , \bfv \rangle & = & \langle \bfx_{n-1}, \bfv \rangle+ \bfv^{(n-2)}_1
%\end{eqnarray*}
%and         $ \bfx_{n+1}  - \bfe_{i_{n-1}} \in \bfT_n$.}

%\item If $u_1 \dots u_n = 31^kx$ with $x \in \{2,3\}$,
%\begin{eqnarray*}
%\langle \boldsymbol{\theta_{n+1}} , \bfv \rangle & = & \langle \boldsymbol{\theta_{n-1-k}}, \bfv \rangle+ \bfv^{(n-1-k)}_2 = \langle \boldsymbol{\theta_{n-2-k}}, \bfv \rangle+ \bfv^{(n-2-k)}_3 = \bfv^{(0)}_3,
%\end{eqnarray*}
%and $ \boldsymbol{\theta_{n+1}}  - \bfe_3 \in \bfT_n$.

\noindent
\textbf{Case 4.}
If $u_n \in \{2,3\}$ and $u_1 \cdots u_{n-1} = w31^k$ with $w \in \{1,2\}^\ell$ and $k \geq 0$, then
\begin{eqnarray*}
\langle \bfx_{n+1}, \bfv \rangle
    & = & \langle \bfx_{n-1-k}, \bfv \rangle+ \bfv^{(n-1-k)}_2
        = \langle \bfx_{n-2-k}, \bfv \rangle+ \bfv^{(n-2-k)}_3 \\
    & = & \langle \bfx_{n-2-k-\ell}, \bfv \rangle+ \bfv^{(n-2-k-\ell)}_3 = \bfv^{(0)}_3,
\end{eqnarray*}
so $ \bfx_{n+1} - \bfe_3 \in \bfT_n$.

%\textcolor{red}{
%Valerie : sous cas f) $k=0$ If $u_{n-1}u_n =w3y$ with $y \in \{2,3\}$, $w \in \{1,2\}^\star$ and $|w|=l$. Then
%\begin{eqnarray*}
%\langle \bfx_{n+1} , \bfv \rangle
%    & = & \langle \bfx_{n-2}, \bfv \rangle+ \bfv^{(n-2)}_3
%        = \langle \bfx_{n-2-l}, \bfv \rangle+ \bfv^{(n-2-l)}_3
%        = \bfv^{(0)}_3
%\end{eqnarray*}
%and                 $ \bfx_{n+1}  - \bfe_{3} \in \bfT_n$.}

\noindent
\textbf{Case 5.}
If $u_n \in \{2,3\}$ and $u_1 \cdots u_{n-1} = \cdots 3w31^k$ with $w \in \{1,2\}^\ell$, $k \geq 0$, then
\begin{eqnarray*}
\langle \bfx_{n+1}, \bfv \rangle
    & = & \langle \bfx_{n-2-k-\ell}, \bfv \rangle+ \bfv^{(n-2-k-\ell)}_3 \\
    & = & \langle \bfx_{n-2-k-\ell}, \bfv \rangle+ \bfv^{(n-3-k-\ell)}_1
\end{eqnarray*}
so $ \bfx_{n+1} - \bfe_{i_{n-2-k-\ell}} \in \bfT_{n-2-k-\ell} \subseteq \bfT_n$.
\end{proof}

%\item[$\bullet$] If $u_{n-1}u_n = 2y$ with $y \in \{2,3\}$, then
%\begin{eqnarray*}
%\langle \bfx_{n+1} , \bfv \rangle & = & \langle \bfx_{n-1}, \bfv \rangle+ \bfv^{(n-2)}_1
%\end{eqnarray*}
%and $ \bfx_{n+1} - \bfe_{i_{n-1}} \in \bfT_n$.
%\item[$\bullet$] If $u_{n-1}u_n =w3y$ with $y \in \{2,3\}$, $w \in \{1,2\}^\star$ and $|w|=l$. Then
%\begin{eqnarray*}
%\langle \boldsymbol{\theta}_{n+1} , \bfv \rangle & = & \langle \boldsymbol{\theta}_{n-2}, \bfv \rangle+ \bfv^{(n-2)}_3 = \langle %\boldsymbol{\theta}_{n-2-l}, \bfv \rangle+ \bfv^{(n-2-l)}_3 = \bfv^{(0)}_3
%\end{eqnarray*}
%and $ \boldsymbol{\theta_{n+1}} - \bfe_{3} \in \bfT_n$.

%\textcolor{red}{ Valerie: Souscas g)  $k=0$   If $u_{n-1}u_n =\cdots 3w3x$ with $x \in \{2,3\}$, $w \in \{1,2\}^\star$,  $|w|=l$,. Then,
%\begin{eqnarray*}
%\langle \bfx_{n+1} , \bfv \rangle
% & = & \langle \bfx_{n-2}, \bfv \rangle+ \bfv^{(n-2)}_3 = \langle \bfx_{n-2-l}, \bfv \rangle+ \bfv^{(n-2-l)}_3 \\
% & = & \langle \bfx_{n-2-l}, \bfv \rangle+ \bfv^{(n-3-l)}_1
%\end{eqnarray*}
%and $ \bfx_{n+1}  - \bfe_{i_{  n-2-k  }} \in \bfT_n$.}

%Indeed, $\bfP_0 \subseteq \bfT_0$,
%and if $n \in \bbN$ is such that $\bfP_n \subseteq \bfT_n$, then:
%\begin{align*}
%\bfP_{n+1} &\subseteq \bfP_n \cup \left(\bfP_n + \transp{(\bfM_1 \cdots \bfM_n )}^{-1} \cdot \bfe_1\right) \\
%           &\subseteq \bfT_n \cup \left(\bfT_n + \transp{(\bfM_1 \cdots \bfM_n )}^{-1} \cdot \bfe_1\right)
%           = \bfT_{n+1}
%\end{align*}

%%%%%%%%%%%%%%%%%%%%
%%%%%%%%%%%%%%%%%%%%
%%%%%%%%%%%%%%%%%%%%

%
\section{Generation of naive planes with dual substitutions}
\label{sec:gen}
The aim of this section is to introduce a second sequence   $(\bfP_n)_{n\in {\mathbb N}} $
of patterns in ${\mathbb Z}^3$.  These patterns  are sub-patterns of the $\bfT_n$ and they are obtained
by applying generalized substitutions  according to the fully subtractive algorithm.
We will prove that the patterns $\bfP_n$
cover naive arithmetic discrete planes $ \frakP(\bfv, \|\bfv\|_\infty)$,
by showing that iterations of dual substitutions yield
concentric annuli (see Definition~\ref{defi:annulus}) with increasing radius.
The main result of this section is the following.
\begin{prop}
\label{prop:pn_union}
If $\bfv \in \Fthree$, then $\bigcup^\infty_{n=0} {\bfP_n} = \frakP(\bfv, \|\bfv\|_\infty)$.
\end{prop}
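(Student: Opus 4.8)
The plan is to prove the two inclusions separately, using the machinery of dual substitutions and their compatibility with stepped planes (Proposition~\ref{prop::imgplane}) as the main engine. First I would give the precise definition of the patterns $\bfP_n$ that is implicit in the section's opening: namely $\bfP_n$ should be the set of distinguished vertices of the faces obtained by applying the composition of dual substitutions $\SFS_{i_1}\cdots\SFS_{i_n}$ to a suitable seed face (the unit faces around the origin), intersected with or projected onto $\bbZ^3$. Since $\bfM_{\sFSx}=\transp{\MFSx}$, Proposition~\ref{prop::imgplane} gives that each $\SFSx$ maps the stepped plane $\bfGa_{\bfv^{(n)}}$ to $\bfGa_{\bfv^{(n-1)}}$, because $\transp{(\transp{\MFSx})}\bfv^{(n)}=\MFSx\bfv^{(n)}$ recovers the previous vector in the $\bfF$-expansion. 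Hence the composed dual substitution $\SFS_{i_1}\cdots\SFS_{i_n}$ sends faces of $\bfGa_{\bfv^{(n)}}$ into $\bfGa_{\bfv}$, and the nonoverlapping clause of Proposition~\ref{prop::imgplane} guarantees the images tile without redundancy.

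For the inclusion $\bigcup_n\bfP_n\subseteq\frakP(\bfv,\|\bfv\|_\infty)$, I would argue that every distinguished vertex produced lies in the naive plane. By Remark~\ref{rq:planes}, the set of distinguished vertices of $\bfGa_{\bfv}$ is exactly $\frakP(\bfv,\|\bfv\|_\infty)$; since each $\bfP_n$ consists of distinguished vertices of faces of $\bfGa_{\bfv}$ (images of faces of $\bfGa_{\bfv^{(n)}}$ under the composed dual substitution), membership in the naive plane is immediate. The reverse inclusion $\frakP(\bfv,\|\bfv\|_\infty)\subseteq\bigcup_n\bfP_n$ is the substantive direction and is where the ``concentric annuli'' strategy announced in the text comes in. The plan is to show that the image under $\SFS_{i_1}\cdots\SFS_{i_n}$ of a fixed finite seed (say the three faces $[\mathbf 0,1]^\star,[\mathbf 0,2]^\star,[\mathbf 0,3]^\star$, or a small annulus around the origin in $\bfGa_{\bfv^{(n)}}$) contains an annulus around the origin of $\bfGa_{\bfv}$ whose inner radius tends to infinity as $n\to\infty$. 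Because the annuli are concentric and nested with diverging radius, every face of $\bfGa_{\bfv}$—hence every point of the naive plane—is eventually captured.

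The hard part will be establishing the \emph{growth of the annuli}, i.e.\ that applying the dual substitutions genuinely enlarges the covered region in all directions rather than letting it degenerate or leaving gaps. This requires two ingredients. First, a combinatorial lemma showing that $\SFSx$ maps an annulus (a closed loop of faces enclosing a disk) to an annulus enclosing a strictly larger disk; here one uses the explicit face images depicted after Definition~\ref{dfn::DS} together with Lemma~\ref{lemm:preimages}, checking that edge-adjacency of faces is preserved so that the image of a loop is again a connected loop with no holes. Second, one must ensure the radius grows without bound, which is precisely where the hypothesis $\bfv\in\Fthree$ is essential: by Lemma~\ref{lem:3infty} the $\bfF$-expansion contains infinitely many $3$'s, and—exactly as in the irreducibility argument in the proof of Proposition~\ref{prop:F3}—the repeated occurrence of the letter $3$ guarantees that the product of substitutions expands in every coordinate direction, preventing the annuli from remaining confined to a proper subspace. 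I would therefore isolate a lemma stating that after every block of the expansion containing a $3$, the inner radius of the annulus strictly increases, and then conclude by the nestedness and the divergence of the radius that the union exhausts $\frakP(\bfv,\|\bfv\|_\infty)$.
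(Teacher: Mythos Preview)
Your plan is essentially the paper's own strategy: reduce to covering $\bfGa_\bfv$, use Proposition~\ref{prop::imgplane} for the easy inclusion, and for the hard inclusion build concentric annuli whose radii diverge, invoking Lemma~\ref{lem:3infty} to exploit $\bfv\in\Fthree$. One clarification on the role of the digit $3$: in the paper the infinitely many $3$'s are used only to establish the \emph{base case} (Lemma~\ref{lem:annulus_base} needs four occurrences of $\SFSiii$ to produce the first $\LFS$-annulus around $\mcU$), whereas the induction step (Lemma~\ref{lem:annulus_induction}) shows that \emph{any} $\Sigma_i$ sends an $\LFS$-annulus to an $\LFS$-annulus, and the radius then increases by at least $1$ at every step thereafter via Conditions~\ref{defi:anneauprop3} and~\ref{defi:anneauprop4} of Definition~\ref{defi:annulus}---not only after blocks containing a $3$. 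Also be aware that ``edge-adjacency is preserved'' is too coarse: the paper needs the finer notions of $\LFS$-covering and \emph{strong} $\LFS$-covering together with a forbidden-pattern lemma (Lemma~\ref{lemm:forbidden}) to make the topological Condition~\ref{defi:anneauprop4} go through under images; this is where the real case-checking work is.
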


The proof will be given at the end of Section~\ref{subsec:annulus}.  The
remaining of this section is devoted to the development of specific tools used
in this proof.  Such tools have also been used in~\cite{BJS} to study other
multidimensional continued fraction algorithms.

\subsection{Definition of the patterns $\bfP_n$}

Let $\mcU$ stand for the lower half unit cube at the origin, that is,
$\mcU = [\mathbf 0,1]^\star \cup [\mathbf 0,2]^\star \cup [\mathbf 0,3]^\star = \myvcenter{%
    \begin{tikzpicture}
    %[x={(-0.216506cm,-0.125000cm)}, y={(0.216506cm,-0.125000cm)}, z={(0.000000cm,0.250000cm)}]
    [x={(-0.173205cm,-0.100000cm)}, y={(0.173205cm,-0.100000cm)}, z={(0.000000cm,0.200000cm)}]
    \fill[fill=facecolor, draw=black, shift={(0,0,0)}]
    (0, 0, 0) -- (0, 1, 0) -- (0, 1, 1) -- (0, 0, 1) -- cycle;
    \fill[fill=facecolor, draw=black, shift={(0,0,0)}]
    (0, 0, 0) -- (0, 0, 1) -- (1, 0, 1) -- (1, 0, 0) -- cycle;
    \fill[fill=facecolor, draw=black, shift={(0,0,0)}]
    (0, 0, 0) -- (1, 0, 0) -- (1, 1, 0) -- (0, 1, 0) -- cycle;
    \node[circle,fill=black,draw=black,minimum size=1mm,inner sep=0pt] at (0,0,0) {};
    \end{tikzpicture}}$.

\begin{defi}[Patterns $\bfP_n$]
Let $\bfv \in \Fthree$ be a vector with $\bfF$-expansion $(i_n)_{n \in \bbN}$.
We~define:
\begin{itemize}
\item
$P_n = \SFS_{i_1} \cdots \SFS_{i_n}(\mcU)$ for $n \geq 1$ and $P_0 = \mcU$;
\item
$\bfP_n = \{\bfx : [\bfx,i]^\star \in P_n\}$ for $n \geq 0$.
\end{itemize}
\end{defi}

\begin{prop}
\label{prop:pn_tn}
Let $\bfv \in \Fthree$.
For every $n \in \bbN$, we have  $\bfP_n \ \subseteq \ \bfT_n$.
\end{prop}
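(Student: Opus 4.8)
The plan is to prove the inclusion by induction on $n$, exploiting the recursive description $P_{n+1} = (\SFS_{i_1}\cdots\SFS_{i_n})\big(\SFS_{i_{n+1}}(\mcU)\big)$ together with the observation that $\bfP_n$ and $\bfT_n$ satisfy parallel recursions. For the base case $n=0$ one simply notes that the three faces of $\mcU$ share the distinguished vertex $\mathbf 0$, so that $\bfP_0 = \{\mathbf 0\} = \bfT_0$.

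The first ingredient I would establish is a one-step lemma: for each $j \in \{1,2,3\}$, viewed as a union of faces,
\[
\SFS_j(\mcU) \ \subseteq \ \mcU \cup (\bfe_1 + \mcU).
\]
This is a finite verification read directly off the images of $[\mathbf 0,1]^\star$, $[\mathbf 0,2]^\star$, $[\mathbf 0,3]^\star$ under $\SFS_j$ (the tables defining $\SFSi$, $\SFSii$, $\SFSiii$): in every case the faces produced have distinguished vertex either $\mathbf 0$ or $\bfe_1$, and each of them already belongs to $\mcU \cup (\bfe_1 + \mcU)$.

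The second ingredient is the bookkeeping that turns this one-step inclusion into the correct translation vector. Using the composition rule $\EOS(\sigma \circ \sigma') = \EOS(\sigma') \circ \EOS(\sigma)$, the map $\Sigma := \SFS_{i_1}\cdots\SFS_{i_n}$ equals $\EOS(\tau)$ with $\tau = \sFS_{i_n}\circ\cdots\circ\sFS_{i_1}$, whose incidence matrix is $\bfM_\tau = \bfM_{\sFS_{i_n}}\cdots\bfM_{\sFS_{i_1}}$. Since $\bfM_{\sFS_{i_k}} = \transp{\MFS_{i_k}} = \transp{\bfM_k}$ and transposition reverses products, this gives $\bfM_\tau = \transp{(\bfM_1\cdots\bfM_n)}$, whence $\bfM_\tau^{-1} = \transp{(\bfM_1\cdots\bfM_n)}^{-1}$. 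Moreover, Definition~\ref{dfn::DS} shows that a dual substitution intertwines with translations via $\EOS(\tau)(\bfy + P) = \bfM_\tau^{-1}\bfy + \EOS(\tau)(P)$, so in particular
\[
\Sigma(\bfe_1 + \mcU) \ = \ \transp{(\bfM_1\cdots\bfM_n)}^{-1}\bfe_1 + P_n,
\]
where the translation vector is exactly the one appearing in the definition of $\bfT_{n+1}$.

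Combining these, the monotonicity of $\Sigma$ on unions of faces yields $P_{n+1} \subseteq \Sigma(\mcU) \cup \Sigma(\bfe_1+\mcU) = P_n \cup \big(\transp{(\bfM_1\cdots\bfM_n)}^{-1}\bfe_1 + P_n\big)$. Passing to distinguished vertices and invoking the induction hypothesis $\bfP_n \subseteq \bfT_n$ gives $\bfP_{n+1} \subseteq \bfT_n \cup \big(\transp{(\bfM_1\cdots\bfM_n)}^{-1}\bfe_1 + \bfT_n\big) = \bfT_{n+1}$, which closes the induction. I expect the main obstacle to be precisely the matrix bookkeeping of the third paragraph: one must simultaneously track two conventions (the incidence matrix of $\sFS_{i_k}$ versus the matrices $\bfM_k$ used to build $\bfT_n$) and two independent order reversals, one from $\EOS(\sigma\circ\sigma') = \EOS(\sigma')\circ\EOS(\sigma)$ and one from transposition, in order to land on exactly $\transp{(\bfM_1\cdots\bfM_n)}^{-1}\bfe_1$ rather than a permuted or untransposed variant.
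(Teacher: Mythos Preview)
Your proof is correct and follows essentially the same route as the paper: both establish the one-step inclusion $\SFS_j(\mcU) \subseteq \mcU \cup (\bfe_1 + \mcU)$ (the paper in fact states the exact equality $\SFS_j(\mcU) = \mcU \cup [\bfe_1,2]^\star \cup [\bfe_1,3]^\star$, but only the inclusion is needed), then apply $\SFS_{i_1}\cdots\SFS_{i_n}$ and use the translation covariance of dual substitutions to obtain $P_{n+1} \subseteq P_n \cup \big(\transp{(\bfM_1\cdots\bfM_n)}^{-1}\bfe_1 + P_n\big)$, concluding by induction. The matrix bookkeeping you flag as the delicate point is carried out identically in the paper.
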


\begin{proof}
We first remark that
$\EOS(\sFSx)(\mcU) = \mcU \cup [\bfe_1,2]^\star \cup [\bfe_1,3]^\star = \myvcenter{%
    \begin{tikzpicture}
    %[x={(-0.216506cm,-0.125000cm)}, y={(0.216506cm,-0.125000cm)}, z={(0.000000cm,0.250000cm)}]
    [x={(-0.173205cm,-0.100000cm)}, y={(0.173205cm,-0.100000cm)}, z={(0.000000cm,0.200000cm)}]
    \fill[fill=facecolor, draw=black, shift={(0,0,0)}]
    (0, 0, 0) -- (0, 1, 0) -- (0, 1, 1) -- (0, 0, 1) -- cycle;
    \fill[fill=facecolor, draw=black, shift={(0,0,0)}]
    (0, 0, 0) -- (0, 0, 1) -- (1, 0, 1) -- (1, 0, 0) -- cycle;
    \fill[fill=facecolor, draw=black, shift={(0,0,0)}]
    (0, 0, 0) -- (1, 0, 0) -- (1, 1, 0) -- (0, 1, 0) -- cycle;
    \fill[fill=facecolor, draw=black, shift={(0,0,0)}]
    (1, 0, 0) -- (2, 0, 0) -- (2, 1, 0) -- (1, 1, 0) -- cycle;
    \fill[fill=facecolor, draw=black, shift={(1,0,0)}]
    (0, 0, 0) -- (0, 0, 1) -- (1, 0, 1) -- (1, 0, 0) -- cycle;
    \node[circle,fill=black,draw=black,minimum size=1mm,inner sep=0pt] at (0,0,0) {};
    \end{tikzpicture}}$ for all $i \in \{1,2,3\}$.
For $n \in \bbN$, we have
\begin{align*}
P_{n+1} &= \EOS(\sigma_{n+1} \circ \dots \circ \sigma_1)(\mcU) \\
        &= \EOS(\sigma_n \circ \dots \circ \sigma_1) \circ \EOS(\sigma_{n+1})  (\mcU) \\
        %&= \EOS(\sigma_n \circ \dots \circ \sigma_1) (\mcW) \\
        %& = & \EOS(\sigma_n \circ \dots \circ \sigma_1) (\mcU)
        %    \cup \EOS(\sigma_n \circ \dots \circ \sigma_1) ([\bfe_1,2]^\star \cup [\bfe_1,3]^\star) \\
        & = P_n \cup  \EOS(\sigma_n \circ \dots \circ \sigma_1) ([\bfe_1,2]^\star \cup [\bfe_1,3]^\star),
\end{align*}
which implies $P_n \subseteq P_{n+1} $.
Since $[\bfe_1,2]^\star \cup [\bfe_1,3]^\star \subseteq \bfe_1+ \mcU$, we have
$P_{n+1} \subseteq P_n \cup  \EOS(\sigma_n \circ \dots \circ \sigma_1)(\bfe_1+\mcU)$.
By Definition~\ref{dfn::DS}, we then have
\begin{align*}
\EOS(\sigma_n \circ \cdots \circ \sigma_1)( \bfe_1+\mcU )
    & = {\bfM}^{-1}_{\sigma_n \circ \cdots \circ \sigma_1} \cdot \bfe_1 +\EOS(\sigma_n \circ \cdots \circ \sigma_1)(\mcU)\\
        %& = P_n + (\bfM_{\sigma_n} \cdots \bfM_{\sigma_1} )^{-1} \cdot \bfe_1 \\
    & = (\transp{\bfM}_n \cdots \transp{\bfM}_1 )^{-1} \cdot \bfe_1+ P_n \\
    & = \transp({\bfM_1 \cdots {}\bfM_n)}^{-1} \cdot \bfe_1+ P_n ,
\end{align*}
where $\bfM_n = \smash{\transp\bfM_{\sFS_{i_n}}}$, which proves that
$P_n \ \subseteq \ P_{n+1}
     \ \subseteq \ P_n \cup (P_n + \transp{(\bfM_1 \cdots \bfM_n)}^{-1} \cdot \bfe_1)$.
The result now follows by induction.
\end{proof}
As a direct consequence of Propositions~\ref{prop:pn_tn} and~\ref{prop:pn_union}, we obtain that
if $\bfv \in \Fthree$, then $\frakP(\bfv, \|\bfv\|_\infty) \subseteq \bigcup^\infty_{n=0}\bfT_n$,
\emph{i.e.}, the naive plane of normal vector $\bfv$ is included in $\bigcup^\infty_{n=0}\bfT_n$.

\subsection{Covering properties and annuli}\label{subsec:cov}
A \emph{pattern} is a union of unit faces.
In the rest of this section we will consider some sets of  connected patterns
($\mcL$, $\Ledge$ and $\LFS$) that will be needed in order to define (strong) coverings.
The patterns contained in these sets are considered up to translation only,
as it is all that matters for the definitions below (see
Figure~\ref{fig:lcover}).

\begin{defi}[$\mcL$-cover]
\label{defi:cover}
Let $\mcL$ be a set of patterns.
A pattern $P$ is \emph{$\mcL$-covered} if for all faces $e, f \in P$,
there exist $Q_1, \ldots, Q_n \in \mcL$ such that:
\begin{enumerate}
  \item $e \in Q_1$ and $f \in Q_n$;
  \item $Q_k \cap Q_{k+1}$ contains at least one face,
    for all $k \in \{1, \ldots, n-1\}$;
  \item $Q_k \subseteq P$ for all $k \in \{1, \ldots, n\}$.
\end{enumerate}
\end{defi}

\begin{lemm}[\cite{IO93}]
\label{lemm:coverprop}
Let $P$ be an $\mcL$-covered pattern,
$\Sigma$  a dual substitution
and $\mcL$ a set of patterns such that
$\Sigma(Q)$ is $\mcL$-covered for all $Q \in \mcL$.
Then $\Sigma(P)$ is $\mcL$-covered.
\end{lemm}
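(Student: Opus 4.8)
The plan is to reduce the statement to a concatenation argument: I would push a given $\mcL$-cover of $P$ through $\Sigma$, and then repair it inside each image using the hypothesis that $\Sigma(Q)$ is $\mcL$-covered for $Q \in \mcL$. First I would record three elementary properties of a dual substitution $\Sigma$, all immediate from Definition~\ref{dfn::DS} and the fact that $\Sigma$ distributes over unions of faces. (i) $\Sigma$ is monotone: if $Q \subseteq P$ then $\Sigma(Q) \subseteq \Sigma(P)$, and in particular $\Sigma(P) = \bigcup_{g \in P} \Sigma(g)$, the union being taken over the faces $g$ of $P$. (ii) For every face $g$, the image $\Sigma(g)$ contains at least one face (a zero row in the incidence matrix would contradict unimodularity, so every letter occurs in some $\sigma(j)$). (iii) If two patterns both contain a face $h$, then $\Sigma(h)$ is contained in the intersection of their images, so by (ii) these images share a face.

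Next I would fix two faces $e', f' \in \Sigma(P)$. By (i) there are faces $g_e, g_f$ of $P$ with $e' \in \Sigma(g_e)$ and $f' \in \Sigma(g_f)$. Since $P$ is $\mcL$-covered, I can choose $Q_1, \dots, Q_n \in \mcL$ with $g_e \in Q_1$, $g_f \in Q_n$, each $Q_k \subseteq P$, and a shared face $h_k \in Q_k \cap Q_{k+1}$ for $1 \le k \le n-1$. Applying (i) and (iii), every image satisfies $\Sigma(Q_k) \subseteq \Sigma(P)$, and $\Sigma(Q_k)$ and $\Sigma(Q_{k+1})$ share a face $s_k \in \Sigma(h_k)$. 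Setting $s_0 := e'$ and $s_n := f'$, one checks directly that $s_{k-1}, s_k \in \Sigma(Q_k)$ for every $k$: for $k=1$ because $g_e \in Q_1$, for $k=n$ symmetrically, and otherwise because $s_{k-1} \in \Sigma(h_{k-1}) \subseteq \Sigma(Q_k)$ and $s_k \in \Sigma(h_k) \subseteq \Sigma(Q_k)$.

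The key step is then the concatenation. By hypothesis each $\Sigma(Q_k)$ is $\mcL$-covered, so the two faces $s_{k-1}, s_k$ of $\Sigma(Q_k)$ are joined by an $\mcL$-chain $R^k_1, \dots, R^k_{m_k} \in \mcL$ with $s_{k-1} \in R^k_1$, $s_k \in R^k_{m_k}$, consecutive members sharing a face, and every $R^k_j \subseteq \Sigma(Q_k) \subseteq \Sigma(P)$. Concatenating these $n$ chains in order produces a single sequence of members of $\mcL$ running from $e'$ to $f'$ and entirely contained in $\Sigma(P)$, which settles conditions~(1) and~(3) of Definition~\ref{defi:cover} at once. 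For condition~(2) the only genuinely new adjacencies are the $n-1$ junctions, and there both $R^k_{m_k}$ and $R^{k+1}_1$ contain the face $s_k$, so they share a face as required. Hence $\Sigma(P)$ is $\mcL$-covered.

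I expect no deep difficulty here; the main thing requiring attention is the bookkeeping at the junctions. One must ensure that each interface face $s_k$ is simultaneously available as an endpoint of the sub-chain inside $\Sigma(Q_k)$ and of the sub-chain inside $\Sigma(Q_{k+1})$ — which is exactly what property~(iii) supplies — and one must treat the two original faces $e'$ and $f'$ as the degenerate interfaces $s_0$ and $s_n$ so that the first and last sub-chains attach correctly, covering also the boundary cases $n=1$ and $e'=f'$. Note that disjointness of images (Proposition~\ref{prop::imgplane}) is \emph{not} needed for this argument, since only inclusions and shared faces are used.
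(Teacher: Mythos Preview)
Your argument is correct and is the standard route to this lemma. Note, however, that the paper does not supply its own proof of this statement: it is quoted from~\cite{IO93} and left without proof, so there is no in-paper argument to compare against. Your write-up would serve perfectly well as a self-contained proof; the only minor stylistic remark is that the observation about Proposition~\ref{prop::imgplane} being unnecessary is true but extraneous to the argument itself.
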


We will need \emph{strong} coverings to ensure that the image of an annulus is an annulus.
We denote by $\Ledge$  the set of all the twelve edge-connected two-face patterns (up to translation).

%\[
%\renewcommand{\arraystretch}{1.3}
%\begin{array}{cclllccclll}
%\input{fig/E11a.tex} & = & [\mathbf 0, 1]^\star & \cup & [(0,1,0), 1]^\star & \ &
%\input{fig/E12a.tex} & = & [\mathbf 0, 1]^\star & \cup & [\mathbf 0, 2]^\star \\
%\input{fig/E11b.tex} & = & [\mathbf 0, 1]^\star & \cup & [(0,0,1), 1]^\star & \ &
%\input{fig/E12b.tex} & = & [\mathbf 0, 1]^\star & \cup & [(-1,1,0), 2]^\star \\
%\input{fig/E22a.tex} & = & [\mathbf 0, 2]^\star & \cup & [(1,0,0), 2]^\star & \ &
%\input{fig/E13a.tex} & = & [\mathbf 0, 1]^\star & \cup & [\mathbf 0, 3]^\star \\
%\input{fig/E22b.tex} & = & [\mathbf 0, 2]^\star & \cup & [(0,0,1), 2]^\star & \ &
%\input{fig/E13b.tex} & = & [\mathbf 0, 1]^\star & \cup & [(-1,0,1), 3]^\star \\
%\input{fig/E33a.tex} & = & [\mathbf 0, 3]^\star & \cup & [(1,0,0), 3]^\star & \ &
%\input{fig/E23a.tex} & = & [\mathbf 0, 2]^\star & \cup & [\mathbf 0, 3]^\star \\
%\input{fig/E33b.tex} & = & [\mathbf 0, 3]^\star & \cup & [(0,1,0), 3]^\star & \ &
%\input{fig/E23b.tex} & = & [\mathbf 0, 2]^\star & \cup & [(0,-1,1), 3]^\star.
%\end{array}
%\]
\begin{defi}[Strong $\mcL$-cover]
Let $\mcL$ be a set of edge-connected patterns.
A pattern $P$ is \emph{strongly $\mcL$-covered} if
\begin{enumerate}
\item $P$ is $\mcL$-covered;
\item for every pattern $X \in \Ledge$ such that $X \subseteq P$,
    there exists a pattern $Y \in \mcL$ such that $X \subseteq Y \subseteq P$.
\end{enumerate}
\end{defi}

The intuitive idea behind the notion of strong $\mcL$-covering is that every occurrence
of a pattern of $\Ledge$ in $P$ is required to be ``completed within $P$'' by a pattern of $\mcL$.

\begin{defi}[Annulus]
\label{defi:annulus}
Let $\mcL$ be a set of edge-connected patterns and $\bfGa$ be a stepped plane.
An \emph{$\mcL$-annulus} of a pattern $P \subseteq \bfGa$
is a pattern $A \subseteq \bfGa$ such that:
\begin{enumerate}
  \item $P$, $A \cup P$ and $\bfGa \setminus (A \cup P)$ are $\mcL$-covered;
    \label{defi:anneauprop1}
  \item $A$ is strongly $\mcL$-covered;
    \label{defi:anneauprop2}
  \item $A$ and $P$ have no face in common;
    \label{defi:anneauprop3}
  \item $P \cap \overline{\bfGa \setminus (P \cup A)} = \varnothing$.
    \label{defi:anneauprop4}
\end{enumerate}
\end{defi}
The notation $\overline{\bfGa \setminus (P \cup A)}$ stands for the
topological closure of $\bfGa \setminus (P \cup A)$.
Conditions~\ref{defi:anneauprop1} and~\ref{defi:anneauprop2} are
combinatorial properties that we will use in the proof of Lemma~\ref{lem:annulus_induction}
in order to prove that the image of an $\LFS$-annulus by a $\SFSx$ is an $\LFS$-annulus.
Conditions~\ref{defi:anneauprop3} and~\ref{defi:anneauprop4} are
properties of topological nature that we want annuli to satisfy.

\begin{figure}[ht]
\[
  \begin{array}{ccc}
    \begin{array}{c}
      \mcL = \left\{

        \begin{array}{c}
          \begin{tikzpicture}
            %[x={(-0.216506cm,-0.125000cm)}, y={(0.216506cm,-0.125000cm)}, z={(0.000000cm,0.250000cm)}]
            [x={(-0.173205cm,-0.100000cm)}, y={(0.173205cm,-0.100000cm)}, z={(0.000000cm,0.200000cm)}]
            \def\loza#1#2#3#4#5#6{
              \fill[fill=facecolor, draw=black, shift={(#1, #2, #3)}]
              (0, 0, 0) -- (0, 1, 0) -- (0, 1, 1) -- (0, 0, 1) -- cycle;
            }
            \def\lozb#1#2#3#4#5#6{
              \fill[fill=facecolor, draw=black, shift={(#1, #2, #3)}]
              (0, 0, 0) -- (0, 0, 1) -- (1, 0, 1) -- (1, 0, 0) -- cycle;
            }
            \def\lozc#1#2#3#4#5#6{
              \fill[fill=facecolor, draw=black, shift={(#1, #2, #3)}]
              (0, 0, 0) -- (1, 0, 0) -- (1, 1, 0) -- (0, 1, 0) -- cycle;
            }
            \loza{0}{0}{0}{0.50}{0.50}{0.50}
            \lozb{0}{0}{0}{0.50}{0.50}{0.50}
            \lozc{0}{0}{0}{0.50}{0.50}{0.50}
            \lozb{-1}{1}{0}{0.50}{0.50}{0.50}
            \loza{-1}{1}{0}{0.50}{0.50}{0.50}
            \lozc{-1}{1}{0}{0.50}{0.50}{0.50}
            \lozc{0}{1}{0}{0.50}{0.50}{0.50}
            \lozc{0}{1}{2}{0.50}{0.50}{0.50}
          \end{tikzpicture},
        \end{array}

        \begin{array}{c}
          \begin{tikzpicture}
            %[x={(-0.216506cm,-0.125000cm)}, y={(0.216506cm,-0.125000cm)}, z={(0.000000cm,0.250000cm)}]
            [x={(-0.173205cm,-0.100000cm)}, y={(0.173205cm,-0.100000cm)}, z={(0.000000cm,0.200000cm)}]
            \def\loza#1#2#3#4#5#6{
              \fill[fill=facecolor, draw=black, shift={(#1, #2, #3)}]
              (0, 0, 0) -- (0, 1, 0) -- (0, 1, 1) -- (0, 0, 1) -- cycle;
            }
            \def\lozb#1#2#3#4#5#6{
              \fill[fill=facecolor, draw=black, shift={(#1, #2, #3)}]
              (0, 0, 0) -- (0, 0, 1) -- (1, 0, 1) -- (1, 0, 0) -- cycle;
            }
            \def\lozc#1#2#3#4#5#6{
              \fill[fill=facecolor, draw=black, shift={(#1, #2, #3)}]
              (0, 0, 0) -- (1, 0, 0) -- (1, 1, 0) -- (0, 1, 0) -- cycle;
            }
            \loza{0}{0}{0}{0.50}{0.50}{0.50}
            \lozc{0}{0}{0}{0.50}{0.50}{0.50}
            \lozc{1}{0}{0}{0.50}{0.50}{0.50}
            \lozc{1}{-1}{0}{0.50}{0.50}{0.50}
            \lozc{1}{-2}{0}{0.50}{0.50}{0.50}
          \end{tikzpicture},
        \end{array}

        \begin{array}{c}
          \begin{tikzpicture}
            %[x={(-0.216506cm,-0.125000cm)}, y={(0.216506cm,-0.125000cm)}, z={(0.000000cm,0.250000cm)}]
            [x={(-0.173205cm,-0.100000cm)}, y={(0.173205cm,-0.100000cm)}, z={(0.000000cm,0.200000cm)}]
            \def\loza#1#2#3#4#5#6{
              \fill[fill=facecolor, draw=black, shift={(#1, #2, #3)}]
              (0, 0, 0) -- (0, 1, 0) -- (0, 1, 1) -- (0, 0, 1) -- cycle;
            }
            \def\lozb#1#2#3#4#5#6{
              \fill[fill=facecolor, draw=black, shift={(#1, #2, #3)}]
              (0, 0, 0) -- (0, 0, 1) -- (1, 0, 1) -- (1, 0, 0) -- cycle;
            }
            \def\lozc#1#2#3#4#5#6{
              \fill[fill=facecolor, draw=black, shift={(#1, #2, #3)}]
              (0, 0, 0) -- (1, 0, 0) -- (1, 1, 0) -- (0, 1, 0) -- cycle;
            }
            \loza{0}{0}{0}{0.50}{0.50}{0.50}
            \loza{0}{-1}{0}{0.50}{0.50}{0.50}
            \lozc{-1}{0}{1}{0.50}{0.50}{0.50}
            \lozc{-1}{-1}{1}{0.50}{0.50}{0.50}
            \lozb{-1}{1}{0}{0.50}{0.50}{0.50}

          \end{tikzpicture}
        \end{array}
      \right\}
      \\
      \\
      \begin{array}{c}
        P =
      \end{array}
      \begin{array}{c}
        % sage: load fig/exLcover.sage; latex(P)
        \input{fig/exLcover}
        \definecolor{darkfacecolor}{rgb}{0.5,0.5,0.5}

        \fill[fill=darkfacecolor, draw=black, shift={(-1,1,0)}] (0, 0, 0) -- (1, 0, 0) -- (1, 1, 0) -- (0, 1, 0) -- cycle;
        \fill[fill=darkfacecolor, draw=black, shift={(-2,3,0)}] (0, 0, 0) -- (0, 1, 0) -- (0, 1, 1) -- (0, 0, 1) -- cycle;
        \fill[fill=darkfacecolor, draw=black, shift={(0,0,0)}] (0, 0, 0) -- (0, 0, 1) -- (1, 0, 1) -- (1, 0, 0) -- cycle;
        \draw[ ultra thick] (1,0,0) -- ++(0,0,1) -- ++(-2,0,0) -- ++(0,2,0) -- ++(0,0,-1) -- ++(2,0,0) -- ++(0,-2,0) -- cycle;

        \draw[ ultra thick] (0,1,0) -- ++(0,3,0) -- ++(-2,0,0) -- ++(0,0,1) -- ++(0,-1,0) -- ++(0,0,-1) -- ++(1,0,0) -- ++(0,-2,0) -- ++(1,0,0) -- cycle;

        \draw[ ultra thick] (-2,3,0) -- ++(0,2,0) -- ++(-1,0,0) -- ++(0,0,1) -- ++(0,-2,0) -- ++(1,0,0) -- cycle;
        \draw[ ultra thick] (1,-2,0) -- ++(0,2,0) -- ++(-1,0,0) -- ++(0,0,1) -- ++(0,-2,0) -- ++(1,0,0) -- cycle;

        \node[circle,fill=black,draw=black,minimum size=1mm,inner sep=0pt] at (1,-1.5,.5) {};
        \node[circle,fill=black,draw=black,minimum size=1mm,inner sep=0pt] at (-2.5,5,0.5) {};
%\node[minimum size=1mm,inner sep=0pt] at (1,-1.5,0.5) {$\cdot$};
%\node[minimum size=1mm,inner sep=0pt] at (-2.5,5,0.5) {$\bullet$};
      \end{tikzpicture}
    \end{array}

    \end{array}
    &
    \hspace{1cm}
    &
    \begin{array}{c}
      % sage: load fig/exLcover.sage; latex(Q[7])
      \begin{tikzpicture}
  [x={(-0.216506cm,-0.125000cm)}, y={(0.216506cm,-0.125000cm)}, z={(0.000000cm,0.250000cm)}]
  %[x={(-0.173205cm,-0.100000cm)}, y={(0.173205cm,-0.100000cm)}, z={(0.000000cm,0.200000cm)}]
  \def\loza#1#2#3#4#5#6{
    \definecolor{facecolor}{rgb}{#4,#5,#6}
    \fill[fill=facecolor, draw=black, shift={(#1, #2, #3)}]
    (0, 0, 0) -- (0, 1, 0) -- (0, 1, 1) -- (0, 0, 1) -- cycle;
  }
  \def\lozb#1#2#3#4#5#6{
    \definecolor{facecolor}{rgb}{#4,#5,#6}
    \fill[fill=facecolor, draw=black, shift={(#1, #2, #3)}]
    (0, 0, 0) -- (0, 0, 1) -- (1, 0, 1) -- (1, 0, 0) -- cycle;
  }
  \def\lozc#1#2#3#4#5#6{
    \definecolor{facecolor}{rgb}{#4,#5,#6}
    \fill[fill=facecolor, draw=black, shift={(#1, #2, #3)}]
    (0, 0, 0) -- (1, 0, 0) -- (1, 1, 0) -- (0, 1, 0) -- cycle;
  }
  \loza{4}{-2}{0}{0.80}{0.80}{0.80}
  \loza{6}{-1}{-2}{0.80}{0.80}{0.80}
  \loza{4}{-1}{-1}{0.80}{0.80}{0.80}
  \loza{7}{-3}{-1}{0.80}{0.80}{0.80}
  \loza{5}{-3}{0}{0.80}{0.80}{0.80}
  \loza{7}{-2}{-2}{0.80}{0.80}{0.80}
  \loza{5}{-2}{-1}{0.80}{0.80}{0.80}
  \lozb{8}{-3}{-1}{0.80}{0.80}{0.80}
  \lozb{6}{-3}{0}{0.80}{0.80}{0.80}
  \lozb{8}{-2}{-2}{0.80}{0.80}{0.80}
  \lozb{6}{-2}{-1}{0.80}{0.80}{0.80}
  \lozb{4}{-2}{0}{0.80}{0.80}{0.80}
  \lozb{6}{-1}{-2}{0.80}{0.80}{0.80}
  \lozb{4}{-1}{-1}{0.80}{0.80}{0.80}
  \lozb{7}{-3}{-1}{0.80}{0.80}{0.80}
  \lozb{5}{-3}{0}{0.80}{0.80}{0.80}
  \lozb{7}{-2}{-2}{0.80}{0.80}{0.80}
  \lozb{5}{-2}{-1}{0.80}{0.80}{0.80}
  \lozc{7}{-1}{-2}{0.80}{0.80}{0.80}
  \lozc{5}{-1}{-1}{0.80}{0.80}{0.80}
  \lozc{8}{-3}{-1}{0.80}{0.80}{0.80}
  \lozc{6}{-3}{0}{0.80}{0.80}{0.80}
  \lozc{8}{-2}{-2}{0.80}{0.80}{0.80}
  \lozc{6}{-2}{-1}{0.80}{0.80}{0.80}
  \lozc{4}{-2}{0}{0.80}{0.80}{0.80}
  \lozc{6}{-1}{-2}{0.80}{0.80}{0.80}
  \lozc{4}{-1}{-1}{0.80}{0.80}{0.80}
  \lozc{7}{-3}{-1}{0.80}{0.80}{0.80}
  \lozc{5}{-3}{0}{0.80}{0.80}{0.80}
  \lozc{7}{-2}{-2}{0.80}{0.80}{0.80}
  \lozc{5}{-2}{-1}{0.80}{0.80}{0.80}
  \loza{0}{3}{-2}{0.80}{0.80}{0.80}
  \loza{3}{1}{-2}{0.80}{0.80}{0.80}
  \loza{2}{4}{-4}{0.80}{0.80}{0.80}
  \loza{5}{2}{-4}{0.80}{0.80}{0.80}
  \loza{0}{4}{-3}{0.80}{0.80}{0.80}
  \loza{3}{2}{-3}{0.80}{0.80}{0.80}
  \loza{6}{0}{-3}{0.80}{0.80}{0.80}
  \loza{1}{2}{-2}{0.80}{0.80}{0.80}
  \loza{4}{0}{-2}{0.80}{0.80}{0.80}
  \loza{3}{3}{-4}{0.80}{0.80}{0.80}
  \loza{6}{1}{-4}{0.80}{0.80}{0.80}
  \loza{1}{3}{-3}{0.80}{0.80}{0.80}
  \loza{4}{1}{-3}{0.80}{0.80}{0.80}
  \lozb{7}{0}{-3}{0.80}{0.80}{0.80}
  \lozb{2}{2}{-2}{0.80}{0.80}{0.80}
  \lozb{5}{0}{-2}{0.80}{0.80}{0.80}
  \lozb{4}{3}{-4}{0.80}{0.80}{0.80}
  \lozb{7}{1}{-4}{0.80}{0.80}{0.80}
  \lozb{2}{3}{-3}{0.80}{0.80}{0.80}
  \lozb{5}{1}{-3}{0.80}{0.80}{0.80}
  \lozb{0}{3}{-2}{0.80}{0.80}{0.80}
  \lozb{3}{1}{-2}{0.80}{0.80}{0.80}
  \lozb{2}{4}{-4}{0.80}{0.80}{0.80}
  \lozb{5}{2}{-4}{0.80}{0.80}{0.80}
  \lozb{0}{4}{-3}{0.80}{0.80}{0.80}
  \lozb{3}{2}{-3}{0.80}{0.80}{0.80}
  \lozb{6}{0}{-3}{0.80}{0.80}{0.80}
  \lozb{1}{2}{-2}{0.80}{0.80}{0.80}
  \lozb{4}{0}{-2}{0.80}{0.80}{0.80}
  \lozb{3}{3}{-4}{0.80}{0.80}{0.80}
  \lozb{6}{1}{-4}{0.80}{0.80}{0.80}
  \lozb{1}{3}{-3}{0.80}{0.80}{0.80}
  \lozb{4}{1}{-3}{0.80}{0.80}{0.80}
  \lozc{3}{4}{-4}{0.80}{0.80}{0.80}
  \lozc{6}{2}{-4}{0.80}{0.80}{0.80}
  \lozc{1}{4}{-3}{0.80}{0.80}{0.80}
  \lozc{4}{2}{-3}{0.80}{0.80}{0.80}
  \lozc{7}{0}{-3}{0.80}{0.80}{0.80}
  \lozc{2}{2}{-2}{0.80}{0.80}{0.80}
  \lozc{5}{0}{-2}{0.80}{0.80}{0.80}
  \lozc{4}{3}{-4}{0.80}{0.80}{0.80}
  \lozc{7}{1}{-4}{0.80}{0.80}{0.80}
  \lozc{2}{3}{-3}{0.80}{0.80}{0.80}
  \lozc{5}{1}{-3}{0.80}{0.80}{0.80}
  \lozc{0}{3}{-2}{0.80}{0.80}{0.80}
  \lozc{3}{1}{-2}{0.80}{0.80}{0.80}
  \lozc{2}{4}{-4}{0.80}{0.80}{0.80}
  \lozc{5}{2}{-4}{0.80}{0.80}{0.80}
  \lozc{0}{4}{-3}{0.80}{0.80}{0.80}
  \lozc{3}{2}{-3}{0.80}{0.80}{0.80}
  \lozc{6}{0}{-3}{0.80}{0.80}{0.80}
  \lozc{1}{2}{-2}{0.80}{0.80}{0.80}
  \lozc{4}{0}{-2}{0.80}{0.80}{0.80}
  \lozc{3}{3}{-4}{0.80}{0.80}{0.80}
  \lozc{6}{1}{-4}{0.80}{0.80}{0.80}
  \lozc{1}{3}{-3}{0.80}{0.80}{0.80}
  \lozc{4}{1}{-3}{0.80}{0.80}{0.80}
  \loza{-3}{-1}{3}{0.80}{0.80}{0.80}
  \loza{-4}{2}{1}{0.80}{0.80}{0.80}
  \loza{0}{-3}{3}{0.80}{0.80}{0.80}
  \loza{-1}{0}{1}{0.30}{0.30}{0.30}
  \loza{-2}{3}{-1}{0.80}{0.80}{0.80}
  \loza{2}{-2}{1}{0.80}{0.80}{0.80}
  \loza{1}{1}{-1}{0.30}{0.30}{0.30}
  \loza{-3}{0}{2}{0.80}{0.80}{0.80}
  \loza{-4}{3}{0}{0.80}{0.80}{0.80}
  \loza{0}{-2}{2}{0.80}{0.80}{0.80}
  \loza{-1}{1}{0}{0.30}{0.30}{0.30}
  \loza{3}{-4}{2}{0.80}{0.80}{0.80}
  \loza{2}{-1}{0}{0.30}{0.30}{0.30}
  \loza{-2}{-2}{3}{0.80}{0.80}{0.80}
  \loza{-3}{1}{1}{0.80}{0.80}{0.80}
  \loza{1}{-4}{3}{0.80}{0.80}{0.80}
  \loza{0}{-1}{1}{0.30}{0.30}{0.30}
  \loza{-1}{2}{-1}{0.80}{0.80}{0.80}
  \loza{3}{-3}{1}{0.80}{0.80}{0.80}
  \loza{2}{0}{-1}{0.30}{0.30}{0.30}
  \loza{-2}{-1}{2}{0.80}{0.80}{0.80}
  \loza{-3}{2}{0}{0.80}{0.80}{0.80}
  \loza{1}{-3}{2}{0.80}{0.80}{0.80}
  \loza{0}{0}{0}{0.50}{0.50}{0.50}
  \lozb{4}{-4}{2}{0.80}{0.80}{0.80}
  \lozb{3}{-1}{0}{0.30}{0.30}{0.30}
  \lozb{-1}{-2}{3}{0.80}{0.80}{0.80}
  \lozb{-2}{1}{1}{0.80}{0.80}{0.80}
  \lozb{2}{-4}{3}{0.80}{0.80}{0.80}
  \lozb{1}{-1}{1}{0.30}{0.30}{0.30}
  \lozb{0}{2}{-1}{0.80}{0.80}{0.80}
  \lozb{4}{-3}{1}{0.80}{0.80}{0.80}
  \lozb{3}{0}{-1}{0.30}{0.30}{0.30}
  \lozb{-1}{-1}{2}{0.80}{0.80}{0.80}
  \lozb{-2}{2}{0}{0.80}{0.80}{0.80}
  \lozb{2}{-3}{2}{0.80}{0.80}{0.80}
  \lozb{1}{0}{0}{0.30}{0.30}{0.30}
  \lozb{-3}{-1}{3}{0.80}{0.80}{0.80}
  \lozb{-4}{2}{1}{0.80}{0.80}{0.80}
  \lozb{0}{-3}{3}{0.80}{0.80}{0.80}
  \lozb{-1}{0}{1}{0.30}{0.30}{0.30}
  \lozb{-2}{3}{-1}{0.80}{0.80}{0.80}
  \lozb{2}{-2}{1}{0.80}{0.80}{0.80}
  \lozb{1}{1}{-1}{0.30}{0.30}{0.30}
  \lozb{-3}{0}{2}{0.80}{0.80}{0.80}
  \lozb{-4}{3}{0}{0.80}{0.80}{0.80}
  \lozb{0}{-2}{2}{0.80}{0.80}{0.80}
  \lozb{-1}{1}{0}{0.30}{0.30}{0.30}
  \lozb{3}{-4}{2}{0.80}{0.80}{0.80}
  \lozb{2}{-1}{0}{0.30}{0.30}{0.30}
  \lozb{-2}{-2}{3}{0.80}{0.80}{0.80}
  \lozb{-3}{1}{1}{0.80}{0.80}{0.80}
  \lozb{1}{-4}{3}{0.80}{0.80}{0.80}
  \lozb{0}{-1}{1}{0.30}{0.30}{0.30}
  \lozb{-1}{2}{-1}{0.80}{0.80}{0.80}
  \lozb{3}{-3}{1}{0.80}{0.80}{0.80}
  \lozb{2}{0}{-1}{0.30}{0.30}{0.30}
  \lozb{-2}{-1}{2}{0.80}{0.80}{0.80}
  \lozb{-3}{2}{0}{0.80}{0.80}{0.80}
  \lozb{1}{-3}{2}{0.80}{0.80}{0.80}
  \lozb{0}{0}{0}{0.50}{0.50}{0.50}
  \lozc{-1}{3}{-1}{0.80}{0.80}{0.80}
  \lozc{3}{-2}{1}{0.80}{0.80}{0.80}
  \lozc{2}{1}{-1}{0.30}{0.30}{0.30}
  \lozc{-2}{0}{2}{0.80}{0.80}{0.80}
  \lozc{-3}{3}{0}{0.80}{0.80}{0.80}
  \lozc{1}{-2}{2}{0.80}{0.80}{0.80}
  \lozc{0}{1}{0}{0.30}{0.30}{0.30}
  \lozc{4}{-4}{2}{0.80}{0.80}{0.80}
  \lozc{3}{-1}{0}{0.30}{0.30}{0.30}
  \lozc{-1}{-2}{3}{0.80}{0.80}{0.80}
  \lozc{-2}{1}{1}{0.80}{0.80}{0.80}
  \lozc{2}{-4}{3}{0.80}{0.80}{0.80}
  \lozc{1}{-1}{1}{0.30}{0.30}{0.30}
  \lozc{0}{2}{-1}{0.80}{0.80}{0.80}
  \lozc{4}{-3}{1}{0.80}{0.80}{0.80}
  \lozc{3}{0}{-1}{0.30}{0.30}{0.30}
  \lozc{-1}{-1}{2}{0.80}{0.80}{0.80}
  \lozc{-2}{2}{0}{0.80}{0.80}{0.80}
  \lozc{2}{-3}{2}{0.80}{0.80}{0.80}
  \lozc{1}{0}{0}{0.30}{0.30}{0.30}
  \lozc{-3}{-1}{3}{0.80}{0.80}{0.80}
  \lozc{-4}{2}{1}{0.80}{0.80}{0.80}
  \lozc{0}{-3}{3}{0.80}{0.80}{0.80}
  \lozc{-1}{0}{1}{0.30}{0.30}{0.30}
  \lozc{-2}{3}{-1}{0.80}{0.80}{0.80}
  \lozc{2}{-2}{1}{0.80}{0.80}{0.80}
  \lozc{1}{1}{-1}{0.30}{0.30}{0.30}
  \lozc{-3}{0}{2}{0.80}{0.80}{0.80}
  \lozc{-4}{3}{0}{0.80}{0.80}{0.80}
  \lozc{0}{-2}{2}{0.80}{0.80}{0.80}
  \lozc{-1}{1}{0}{0.30}{0.30}{0.30}
  \lozc{3}{-4}{2}{0.80}{0.80}{0.80}
  \lozc{2}{-1}{0}{0.30}{0.30}{0.30}
  \lozc{-2}{-2}{3}{0.80}{0.80}{0.80}
  \lozc{-3}{1}{1}{0.80}{0.80}{0.80}
  \lozc{1}{-4}{3}{0.80}{0.80}{0.80}
  \lozc{0}{-1}{1}{0.30}{0.30}{0.30}
  \lozc{-1}{2}{-1}{0.80}{0.80}{0.80}
  \lozc{3}{-3}{1}{0.80}{0.80}{0.80}
  \lozc{2}{0}{-1}{0.30}{0.30}{0.30}
  \lozc{-2}{-1}{2}{0.80}{0.80}{0.80}
  \lozc{-3}{2}{0}{0.80}{0.80}{0.80}
  \lozc{1}{-3}{2}{0.80}{0.80}{0.80}
  \lozc{0}{0}{0}{0.50}{0.50}{0.50}
\end{tikzpicture}
    \end{array}
  \end{array}
  \]
  \caption{
    On the left, the pattern $P$ is $\mcL$-covered. Two faces of $P$ are
    connected via a sequence of patterns from $\mcL$.
    On the right, examples of $\LFS$-annulus. Patterns $P_0 \varsubsetneq P_4
    \varsubsetneq P_7$ are defined by $P_0$ = $\mcU$ and $P_{i+1} = \SFSiii(P_i)$. The
    lighter pattern $P_7 \setminus P_4$ is a $\LFS$-annulus of $P_4$ and the
    darker pattern $P_4 \setminus P_0$ is an $\LFS$-annulus of $P_0$.
    \label{fig:lcover}
  }
\end{figure}

\subsection{Covering properties for $\Sigma_1, \Sigma_2, \Sigma_3$}
Let $\LFS$ be the set of patterns containing
$\myvcenter{%
\begin{tikzpicture}
%[x={(-0.216506cm,-0.125000cm)}, y={(0.216506cm,-0.125000cm)}, z={(0.000000cm,0.250000cm)}]
[x={(-0.173205cm,-0.100000cm)}, y={(0.173205cm,-0.100000cm)}, z={(0.000000cm,0.200000cm)}]
\definecolor{facecolor}{rgb}{0.8,0.8,0.8}
\fill[fill=facecolor, draw=black, shift={(0,0,0)}]
(0, 0, 0) -- (0, 0, 1) -- (1, 0, 1) -- (1, 0, 0) -- cycle;
\definecolor{facecolor}{rgb}{0.8,0.8,0.8}
\fill[fill=facecolor, draw=black, shift={(0,0,0)}]
(0, 0, 0) -- (0, 1, 0) -- (0, 1, 1) -- (0, 0, 1) -- cycle;
\end{tikzpicture}\,}%,
\myvcenter{%
\begin{tikzpicture}
%[x={(-0.216506cm,-0.125000cm)}, y={(0.216506cm,-0.125000cm)}, z={(0.000000cm,0.250000cm)}]
[x={(-0.173205cm,-0.100000cm)}, y={(0.173205cm,-0.100000cm)}, z={(0.000000cm,0.200000cm)}]
\definecolor{facecolor}{rgb}{0.8,0.8,0.8}
\fill[fill=facecolor, draw=black, shift={(0,0,0)}]
(0, 0, 0) -- (0, 1, 0) -- (0, 1, 1) -- (0, 0, 1) -- cycle;
\definecolor{facecolor}{rgb}{0.8,0.8,0.8}
\fill[fill=facecolor, draw=black, shift={(-1,1,0)}]
(0, 0, 0) -- (0, 0, 1) -- (1, 0, 1) -- (1, 0, 0) -- cycle;
\end{tikzpicture}\,}%,
\myvcenter{%
\begin{tikzpicture}
%[x={(-0.216506cm,-0.125000cm)}, y={(0.216506cm,-0.125000cm)}, z={(0.000000cm,0.250000cm)}]
[x={(-0.173205cm,-0.100000cm)}, y={(0.173205cm,-0.100000cm)}, z={(0.000000cm,0.200000cm)}]
\definecolor{facecolor}{rgb}{0.8,0.8,0.8}
\fill[fill=facecolor, draw=black, shift={(0,0,0)}]
(0, 0, 0) -- (0, 1, 0) -- (0, 1, 1) -- (0, 0, 1) -- cycle;
\definecolor{facecolor}{rgb}{0.8,0.8,0.8}
\fill[fill=facecolor, draw=black, shift={(0,0,0)}]
(0, 0, 0) -- (1, 0, 0) -- (1, 1, 0) -- (0, 1, 0) -- cycle;
\end{tikzpicture}\,}%,
\myvcenter{%
\begin{tikzpicture}
%[x={(-0.216506cm,-0.125000cm)}, y={(0.216506cm,-0.125000cm)}, z={(0.000000cm,0.250000cm)}]
[x={(-0.173205cm,-0.100000cm)}, y={(0.173205cm,-0.100000cm)}, z={(0.000000cm,0.200000cm)}]
\definecolor{facecolor}{rgb}{0.8,0.8,0.8}
\fill[fill=facecolor, draw=black, shift={(0,0,0)}]
(0, 0, 0) -- (0, 1, 0) -- (0, 1, 1) -- (0, 0, 1) -- cycle;
\definecolor{facecolor}{rgb}{0.8,0.8,0.8}
\fill[fill=facecolor, draw=black, shift={(-1,0,1)}]
(0, 0, 0) -- (1, 0, 0) -- (1, 1, 0) -- (0, 1, 0) -- cycle;
\end{tikzpicture}\,}%,
\myvcenter{%
\begin{tikzpicture}
%[x={(-0.216506cm,-0.125000cm)}, y={(0.216506cm,-0.125000cm)}, z={(0.000000cm,0.250000cm)}]
[x={(-0.173205cm,-0.100000cm)}, y={(0.173205cm,-0.100000cm)}, z={(0.000000cm,0.200000cm)}]
\definecolor{facecolor}{rgb}{0.8,0.8,0.8}
\fill[fill=facecolor, draw=black, shift={(0,0,0)}]
(0, 0, 0) -- (0, 0, 1) -- (1, 0, 1) -- (1, 0, 0) -- cycle;
\definecolor{facecolor}{rgb}{0.8,0.8,0.8}
\fill[fill=facecolor, draw=black, shift={(0,0,0)}]
(0, 0, 0) -- (1, 0, 0) -- (1, 1, 0) -- (0, 1, 0) -- cycle;
\end{tikzpicture}\,}%,
\myvcenter{%
\begin{tikzpicture}
%[x={(-0.216506cm,-0.125000cm)}, y={(0.216506cm,-0.125000cm)}, z={(0.000000cm,0.250000cm)}]
[x={(-0.173205cm,-0.100000cm)}, y={(0.173205cm,-0.100000cm)}, z={(0.000000cm,0.200000cm)}]
\definecolor{facecolor}{rgb}{0.8,0.8,0.8}
\fill[fill=facecolor, draw=black, shift={(0,-1,1)}]
(0, 0, 0) -- (1, 0, 0) -- (1, 1, 0) -- (0, 1, 0) -- cycle;
\definecolor{facecolor}{rgb}{0.8,0.8,0.8}
\fill[fill=facecolor, draw=black, shift={(0,0,0)}]
(0, 0, 0) -- (0, 0, 1) -- (1, 0, 1) -- (1, 0, 0) -- cycle;
\end{tikzpicture}\,}% \in \Ledge$
and
\[
\renewcommand{\arraystretch}{1.4}
\begin{array}{ccc}
\myvcenter{%
\begin{tikzpicture}
%[x={(-0.216506cm,-0.125000cm)}, y={(0.216506cm,-0.125000cm)}, z={(0.000000cm,0.250000cm)}]
[x={(-0.173205cm,-0.100000cm)}, y={(0.173205cm,-0.100000cm)}, z={(0.000000cm,0.200000cm)}]
\definecolor{facecolor}{rgb}{0.800,0.800,0.800}
\fill[fill=facecolor, draw=black, shift={(0,0,0)}]
(0, 0, 0) -- (0, 0, 1) -- (1, 0, 1) -- (1, 0, 0) -- cycle;
\fill[fill=facecolor, draw=black, shift={(0,0,0)}]
(0, 0, 0) -- (1, 0, 0) -- (1, 1, 0) -- (0, 1, 0) -- cycle;
\fill[fill=facecolor, draw=black, shift={(1,0,0)}]
(0, 0, 0) -- (0, 0, 1) -- (1, 0, 1) -- (1, 0, 0) -- cycle;
\end{tikzpicture}\,}% & = & [\mathbf 0, 2]^\star \cup [(1,0,0), 2]^\star \cup [\mathbf 0, 3]^\star, \\
\myvcenter{%
\begin{tikzpicture}
%[x={(-0.216506cm,-0.125000cm)}, y={(0.216506cm,-0.125000cm)}, z={(0.000000cm,0.250000cm)}]
[x={(-0.173205cm,-0.100000cm)}, y={(0.173205cm,-0.100000cm)}, z={(0.000000cm,0.200000cm)}]
\definecolor{facecolor}{rgb}{0.800,0.800,0.800}
\fill[fill=facecolor, draw=black, shift={(0,0,0)}]
(0, 0, 0) -- (0, 0, 1) -- (1, 0, 1) -- (1, 0, 0) -- cycle;
\fill[fill=facecolor, draw=black, shift={(1,0,0)}]
(0, 0, 0) -- (1, 0, 0) -- (1, 1, 0) -- (0, 1, 0) -- cycle;
\fill[fill=facecolor, draw=black, shift={(0,0,0)}]
(0, 0, 0) -- (1, 0, 0) -- (1, 1, 0) -- (0, 1, 0) -- cycle;
\end{tikzpicture}\,}% & = & [\mathbf 0, 3]^\star \cup [(1,0,0), 3]^\star \cup [\mathbf 0, 2]^\star, \\
\myvcenter{%
\begin{tikzpicture}
%[x={(-0.216506cm,-0.125000cm)}, y={(0.216506cm,-0.125000cm)}, z={(0.000000cm,0.250000cm)}]
[x={(-0.173205cm,-0.100000cm)}, y={(0.173205cm,-0.100000cm)}, z={(0.000000cm,0.200000cm)}]
\definecolor{facecolor}{rgb}{0.800,0.800,0.800}
\fill[fill=facecolor, draw=black, shift={(0,0,0)}]
(0, 0, 0) -- (0, 1, 0) -- (0, 1, 1) -- (0, 0, 1) -- cycle;
\fill[fill=facecolor, draw=black, shift={(0,1,0)}]
(0, 0, 0) -- (1, 0, 0) -- (1, 1, 0) -- (0, 1, 0) -- cycle;
\fill[fill=facecolor, draw=black, shift={(0,0,0)}]
(0, 0, 0) -- (1, 0, 0) -- (1, 1, 0) -- (0, 1, 0) -- cycle;
\end{tikzpicture}\,}% & = & [\mathbf 0, 3]^\star \cup [(0,1,0), 3]^\star \cup [\mathbf 0, 1]^\star.
\end{array}
\]

\begin{lemm}[$\LFS$-covering]
\label{lemm:covFS}
Let $P$ be an $\LFS$-covered pattern.
Then the pattern $\Sigma_i(P)$ is $\LFS$-covered for every $i \in \{1,2,3\}$.
\end{lemm}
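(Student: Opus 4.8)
The plan is to invoke Lemma~\ref{lemm:coverprop}, which guarantees that $\Sigma_i$ preserves $\LFS$-coverability as soon as $\Sigma_i$ maps every \emph{single} generator of $\LFS$ to an $\LFS$-covered pattern. Thus it suffices to verify that $\Sigma_i(Q)$ is $\LFS$-covered for each of the nine patterns $Q \in \LFS$ (the six edge-connected two-face patterns of $\Ledge$ that were listed, together with the three distinguished three-face patterns) and each $i \in \{1,2,3\}$. This reduces the lemma to a finite verification of $27$ cases, which is its entire content.

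For each case I would first compute $\Sigma_i(Q)$ explicitly. Since each $\Sigma_i$ acts face by face and, by Proposition~\ref{prop::imgplane}, the images of distinct faces share no unit face, the pattern $\Sigma_i(Q)$ is simply the essentially disjoint union of the images of the faces of $Q$, which can be read off directly from the pictorial description of $\SFSi, \SFSii, \SFSiii$ (equivalently, from Definition~\ref{dfn::DS}). Each generator has at most three faces and each $\SFSx$ replaces a face by at most three faces, so every image $\Sigma_i(Q)$ is a small pattern of at most nine faces, whose distinguished vertices and types are obtained by applying $\bfM_{\sFSx}^{-1}$ to the data of~$Q$.

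Then, for each such image, I would certify $\LFS$-coverability by exhibiting, for every pair of its faces, a chain $Q_1, \dots, Q_m \in \LFS$ lying inside $\Sigma_i(Q)$ with consecutive patterns sharing at least one face, in the spirit of Figure~\ref{fig:lcover}; concretely this amounts to checking that each image decomposes as a union of overlapping copies of the generators of $\LFS$. These $27$ verifications are most compactly recorded in a single figure or table. The main obstacle is purely the bookkeeping: once Lemma~\ref{lemm:coverprop} reduces the claim to the generators there is no conceptual difficulty, but one must keep careful track of the face types and of the translations introduced by $\bfM_{\sFSx}^{-1}$, and make sure that each connecting pattern used genuinely lies inside the image $\Sigma_i(Q)$ and not merely in the ambient stepped plane.
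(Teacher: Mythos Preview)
Your approach is exactly the one the paper takes: invoke Lemma~\ref{lemm:coverprop} to reduce the statement to verifying that $\Sigma_i(Q)$ is $\LFS$-covered for each of the nine generators $Q \in \LFS$ and each $i \in \{1,2,3\}$, and then check these twenty-seven cases by direct inspection (the paper records them in a table). The only minor quibble is that Proposition~\ref{prop::imgplane} concerns faces of a stepped plane rather than of an arbitrary pattern, but this is irrelevant here since $\Sigma_i(Q)$ is in any case defined as a union and the verification proceeds face by face regardless of overlaps.
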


\begin{proof}
The proof relies on Lemma~\ref{lemm:coverprop} because
$\Sigma_i(P)$ is $\LFS$-covered for all $P \in \LFS$,
which can be checked by inspection of the twenty-seven cases below.
\begin{center}
\input{fig/FScov.tex}
\end{center}

\end{proof}

\begin{lemm}
\label{lemm:forbidden}
Let $\bfGa$ be a stepped plane that does not contain any translate of one of the patterns
$\myvcenter{%
\begin{tikzpicture}
%[x={(-0.216506cm,-0.125000cm)}, y={(0.216506cm,-0.125000cm)}, z={(0.000000cm,0.250000cm)}]
[x={(-0.173205cm,-0.100000cm)}, y={(0.173205cm,-0.100000cm)}, z={(0.000000cm,0.200000cm)}]
\definecolor{facecolor}{rgb}{0.8,0.8,0.8}
\fill[fill=facecolor, draw=black, shift={(0,0,0)}]
(0, 0, 0) -- (0, 1, 0) -- (0, 1, 1) -- (0, 0, 1) -- cycle;
\fill[fill=facecolor, draw=black, shift={(0,1,0)}]
(0, 0, 0) -- (0, 1, 0) -- (0, 1, 1) -- (0, 0, 1) -- cycle;
\end{tikzpicture}\,}%,
\myvcenter{%
\begin{tikzpicture}
%[x={(-0.216506cm,-0.125000cm)}, y={(0.216506cm,-0.125000cm)}, z={(0.000000cm,0.250000cm)}]
[x={(-0.173205cm,-0.100000cm)}, y={(0.173205cm,-0.100000cm)}, z={(0.000000cm,0.200000cm)}]
\definecolor{facecolor}{rgb}{0.8,0.8,0.8}
\fill[fill=facecolor, draw=black, shift={(0,0,0)}]
(0, 0, 0) -- (0, 1, 0) -- (0, 1, 1) -- (0, 0, 1) -- cycle;
\fill[fill=facecolor, draw=black, shift={(0,0,1)}]
(0, 0, 0) -- (0, 1, 0) -- (0, 1, 1) -- (0, 0, 1) -- cycle;
\end{tikzpicture}\,}%,
\myvcenter{%
\begin{tikzpicture}
%[x={(-0.216506cm,-0.125000cm)}, y={(0.216506cm,-0.125000cm)}, z={(0.000000cm,0.250000cm)}]
[x={(-0.173205cm,-0.100000cm)}, y={(0.173205cm,-0.100000cm)}, z={(0.000000cm,0.200000cm)}]
\definecolor{facecolor}{rgb}{0.8,0.8,0.8}
\fill[fill=facecolor, draw=black, shift={(0,0,1)}]
(0, 0, 0) -- (0, 0, 1) -- (1, 0, 1) -- (1, 0, 0) -- cycle;
\fill[fill=facecolor, draw=black, shift={(0,0,0)}]
(0, 0, 0) -- (0, 0, 1) -- (1, 0, 1) -- (1, 0, 0) -- cycle;
\end{tikzpicture}\,}% \in \Ledge$
and
$\myvcenter{%
\begin{tikzpicture}
%[x={(-0.216506cm,-0.125000cm)}, y={(0.216506cm,-0.125000cm)}, z={(0.000000cm,0.250000cm)}]
[x={(-0.173205cm,-0.100000cm)}, y={(0.173205cm,-0.100000cm)}, z={(0.000000cm,0.200000cm)}]
\definecolor{facecolor}{rgb}{0.8,0.8,0.8}
\fill[fill=facecolor, draw=black, shift={(0,0,0)}]
(0, 0, 0) -- (1, 0, 0) -- (1, 1, 0) -- (0, 1, 0) -- cycle;
\fill[fill=facecolor, draw=black, shift={(1,1,0)}]
(0, 0, 0) -- (1, 0, 0) -- (1, 1, 0) -- (0, 1, 0) -- cycle;
\end{tikzpicture}\,}% = [\mathbf 0, 3]^\star \cup [(1,1,0),3]^\star$.
Then no translate of any of these four patterns appears in $\Sigma_i(\bfGa)$.
\end{lemm}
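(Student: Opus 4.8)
The plan is to translate the purely combinatorial statement about $\Sigma_i(\bfGa)$ into an elementary statement about the normal vector of $\bfGa$. Every stepped plane has the form $\bfGa_\bfv$ for some $\bfv\in\bbR^3_+$, and since $\bfM_{\sFSx}=\transp{\MFSx}$, Proposition~\ref{prop::imgplane} gives $\Sigma_i(\bfGa_\bfv)=\bfGa_{\transp{\bfM_{\sFSx}}\bfv}=\bfGa_{\MFSx\bfv}$. Hence it suffices to determine, for an arbitrary normal vector, exactly which of the four patterns occur, and then to track how this information behaves under multiplication by $\MFSx$. (A direct preimage case analysis in the spirit of Lemma~\ref{lemm:covFS} is also possible, but messier.)

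First I would establish the characterization: the stepped plane $\bfGa_\bfv$ contains a translate of none of the four patterns if and only if $\bfv_1\le\bfv_2\le\bfv_3$ and $\bfv_3\le\bfv_1+\bfv_2$. This follows by direct computation from the definition of $\bfGa_\bfv$. For each pattern I would write out the two membership conditions; for instance a translate $[\bfx,1]^\star\cup[\bfx+\bfe_2,1]^\star$ of the first pattern lies in $\bfGa_\bfv$ precisely when $0\le\langle\bfx,\bfv\rangle<\bfv_1$ and $0\le\langle\bfx,\bfv\rangle+\bfv_2<\bfv_1$, which together amount to $0\le\langle\bfx,\bfv\rangle<\bfv_1-\bfv_2$. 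Such an $\bfx$ exists if and only if this interval is non-empty, that is $\bfv_2<\bfv_1$, the choice $\bfx=\mathbf 0$ witnessing the sufficiency. The analogous computation for the other patterns gives: the second pattern (two type-$1$ faces offset by $\bfe_3$) occurs iff $\bfv_3<\bfv_1$; the third (two type-$2$ faces offset by $\bfe_3$) occurs iff $\bfv_3<\bfv_2$; and the fourth ($[\mathbf 0,3]^\star\cup[(1,1,0),3]^\star$) occurs iff $\bfv_3<\bfv_1+\bfv_2$. Negating all four conditions yields exactly $\bfv_1\le\bfv_2\le\bfv_3$ together with $\bfv_3\le\bfv_1+\bfv_2$.

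It then remains to check that the region $\mathcal R=\{\bfv\in\bbR^3_+:\bfv_1\le\bfv_2\le\bfv_3,\ \bfv_3\le\bfv_1+\bfv_2\}$ is stable under each $\MFSx$. This is a short matrix computation: $\MFSi\bfv=(\bfv_1,\bfv_1+\bfv_2,\bfv_1+\bfv_3)$, $\MFSii\bfv=(\bfv_2,\bfv_1+\bfv_2,\bfv_2+\bfv_3)$ and $\MFSiii\bfv=(\bfv_3,\bfv_1+\bfv_3,\bfv_2+\bfv_3)$, and in each case the hypotheses $\bfv_1\le\bfv_2\le\bfv_3$ and $\bfv_3\le\bfv_1+\bfv_2$ imply the two defining conditions for the image (for $\MFSi\bfv$ the ``sum'' inequality $2\bfv_1+\bfv_2\ge\bfv_1+\bfv_3$ is precisely $\bfv_3\le\bfv_1+\bfv_2$, and similarly for the other two). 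Combining the three ingredients completes the proof: if $\bfGa=\bfGa_\bfv$ avoids the four patterns then $\bfv\in\mathcal R$, so $\MFSx\bfv\in\mathcal R$, and hence $\Sigma_i(\bfGa)=\bfGa_{\MFSx\bfv}$ avoids them as well.

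The only delicate point is the characterization step, where all translates must be treated simultaneously and vectors with vanishing coordinates must be allowed; what makes it clean is that $\mathbf 0$ is always an admissible distinguished vertex realizing the left endpoint of every relevant scalar-product interval, so the existence of a forbidden translate reduces to the non-emptiness of an explicit interval, and no genericity or $\bbQ$-linear independence assumption on $\bfv$ is required.
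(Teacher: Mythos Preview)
Your argument is correct and is a genuinely different route from the paper's. The paper argues purely combinatorially: it computes, via Lemma~\ref{lemm:preimages}, the two-face preimages of each of the four forbidden patterns under each $\Sigma_i$, and shows that any such preimage either cannot live in a stepped plane at all, or forces (by the defining inequalities of Definition~\ref{def:stepped}) an adjacent face whose presence re-creates one of the four forbidden patterns in $\bfGa$. Your approach instead translates the avoidance condition into the inequality description $\bfv_1\le\bfv_2\le\bfv_3$, $\bfv_3\le\bfv_1+\bfv_2$ of the normal vector, and then checks that this cone is invariant under the $\MFSx$, invoking Proposition~\ref{prop::imgplane} to identify $\Sigma_i(\bfGa_\bfv)$ with $\bfGa_{\MFSx\bfv}$. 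This is shorter and more conceptual; the paper's case analysis is more self-contained (it does not appeal to Proposition~\ref{prop::imgplane}) and produces explicit preimage information that is reused later in the proof of Lemma~\ref{lem:annulus_induction}.

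One slip to fix: for the fourth pattern you wrote that it occurs iff $\bfv_3<\bfv_1+\bfv_2$, but your own computation scheme gives the interval $0\le\langle\bfx,\bfv\rangle<\bfv_3-(\bfv_1+\bfv_2)$, hence occurrence iff $\bfv_1+\bfv_2<\bfv_3$. The negation is then $\bfv_3\le\bfv_1+\bfv_2$, which is the inequality you actually use in the stability check, so only the displayed inequality needs its direction reversed.
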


\begin{proof}
The patterns \myvcenter{%
\begin{tikzpicture}
%[x={(-0.216506cm,-0.125000cm)}, y={(0.216506cm,-0.125000cm)}, z={(0.000000cm,0.250000cm)}]
[x={(-0.173205cm,-0.100000cm)}, y={(0.173205cm,-0.100000cm)}, z={(0.000000cm,0.200000cm)}]
\definecolor{facecolor}{rgb}{0.8,0.8,0.8}
\fill[fill=facecolor, draw=black, shift={(0,0,0)}]
(0, 0, 0) -- (0, 1, 0) -- (0, 1, 1) -- (0, 0, 1) -- cycle;
\fill[fill=facecolor, draw=black, shift={(0,1,0)}]
(0, 0, 0) -- (0, 1, 0) -- (0, 1, 1) -- (0, 0, 1) -- cycle;
\end{tikzpicture}\,}% and \myvcenter{%
\begin{tikzpicture}
%[x={(-0.216506cm,-0.125000cm)}, y={(0.216506cm,-0.125000cm)}, z={(0.000000cm,0.250000cm)}]
[x={(-0.173205cm,-0.100000cm)}, y={(0.173205cm,-0.100000cm)}, z={(0.000000cm,0.200000cm)}]
\definecolor{facecolor}{rgb}{0.8,0.8,0.8}
\fill[fill=facecolor, draw=black, shift={(0,0,0)}]
(0, 0, 0) -- (0, 1, 0) -- (0, 1, 1) -- (0, 0, 1) -- cycle;
\fill[fill=facecolor, draw=black, shift={(0,0,1)}]
(0, 0, 0) -- (0, 1, 0) -- (0, 1, 1) -- (0, 0, 1) -- cycle;
\end{tikzpicture}\,}% admit no preimage
that is included in a discrete plane,
as can be checked using Lemma~\ref{lemm:preimages} and Definition~\ref{def:stepped},
so they cannot appear in $\Sigma_i(\bfGa)$.

For the two other cases, below are listed
their possible preimages (in light gray) obtained by applying Lemma~\ref{lemm:preimages},
together with their only possible ``completion'' within a discrete plane (in dark gray)
following from Definition~\ref{def:stepped}.
\begin{center}
\renewcommand{\arraystretch}{1.5}
\begin{tabular}{c|c}
$i$ & $\Sigma_i^{-1}(\,\myvcenter{\myvcenter{%
\begin{tikzpicture}
%[x={(-0.216506cm,-0.125000cm)}, y={(0.216506cm,-0.125000cm)}, z={(0.000000cm,0.250000cm)}]
[x={(-0.173205cm,-0.100000cm)}, y={(0.173205cm,-0.100000cm)}, z={(0.000000cm,0.200000cm)}]
\definecolor{facecolor}{rgb}{0.8,0.8,0.8}
\fill[fill=facecolor, draw=black, shift={(0,0,1)}]
(0, 0, 0) -- (0, 0, 1) -- (1, 0, 1) -- (1, 0, 0) -- cycle;
\fill[fill=facecolor, draw=black, shift={(0,0,0)}]
(0, 0, 0) -- (0, 0, 1) -- (1, 0, 1) -- (1, 0, 0) -- cycle;
\end{tikzpicture}\,}%})$ \\
\hline
1 &
\myvcenter{%
\begin{tikzpicture}
%[x={(-0.216506cm,-0.125000cm)}, y={(0.216506cm,-0.125000cm)}, z={(0.000000cm,0.250000cm)}]
[x={(-0.173205cm,-0.100000cm)}, y={(0.173205cm,-0.100000cm)}, z={(0.000000cm,0.200000cm)}]
\definecolor{facecolor}{rgb}{0.800,0.800,0.800}
\fill[fill=facecolor, draw=black, shift={(0,0,1)}]
(0, 0, 0) -- (0, 0, 1) -- (1, 0, 1) -- (1, 0, 0) -- cycle;
\fill[fill=facecolor, draw=black, shift={(0,0,0)}]
(0, 0, 0) -- (0, 1, 0) -- (0, 1, 1) -- (0, 0, 1) -- cycle;
\definecolor{facecolor}{rgb}{0.35,0.35,0.35}
\fill[fill=facecolor, draw=black, shift={(0,0,0)}]
(0, 0, 0) -- (0, 0, 1) -- (1, 0, 1) -- (1, 0, 0) -- cycle;
\end{tikzpicture}}
\ or \
\myvcenter{%
\begin{tikzpicture}
%[x={(-0.216506cm,-0.125000cm)}, y={(0.216506cm,-0.125000cm)}, z={(0.000000cm,0.250000cm)}]
[x={(-0.173205cm,-0.100000cm)}, y={(0.173205cm,-0.100000cm)}, z={(0.000000cm,0.200000cm)}]
\definecolor{facecolor}{rgb}{0.800,0.800,0.800}
\fill[fill=facecolor, draw=black, shift={(-1,0,0)}]
(0, 0, 0) -- (0, 0, 1) -- (1, 0, 1) -- (1, 0, 0) -- cycle;
\fill[fill=facecolor, draw=black, shift={(0,0,1)}]
(0, 0, 0) -- (0, 0, 1) -- (1, 0, 1) -- (1, 0, 0) -- cycle;
\definecolor{facecolor}{rgb}{0.35,0.35,0.35}
\fill[fill=facecolor, draw=black, shift={(0,0,0)}]
(0, 0, 0) -- (0, 0, 1) -- (1, 0, 1) -- (1, 0, 0) -- cycle;
\end{tikzpicture}}
  \\
2 &
\myvcenter{%
\begin{tikzpicture}
%[x={(-0.216506cm,-0.125000cm)}, y={(0.216506cm,-0.125000cm)}, z={(0.000000cm,0.250000cm)}]
[x={(-0.173205cm,-0.100000cm)}, y={(0.173205cm,-0.100000cm)}, z={(0.000000cm,0.200000cm)}]
\definecolor{facecolor}{rgb}{0.800,0.800,0.800}
\fill[fill=facecolor, draw=black, shift={(0,0,0)}]
(0, 0, 0) -- (0, 0, 1) -- (1, 0, 1) -- (1, 0, 0) -- cycle;
\fill[fill=facecolor, draw=black, shift={(0,0,1)}]
(0, 0, 0) -- (0, 1, 0) -- (0, 1, 1) -- (0, 0, 1) -- cycle;
\definecolor{facecolor}{rgb}{0.35,0.35,0.35}
\fill[fill=facecolor, draw=black, shift={(0,0,0)}]
(0, 0, 0) -- (0, 1, 0) -- (0, 1, 1) -- (0, 0, 1) -- cycle;
\end{tikzpicture}}
\ or \
\myvcenter{%
\begin{tikzpicture}
%[x={(-0.216506cm,-0.125000cm)}, y={(0.216506cm,-0.125000cm)}, z={(0.000000cm,0.250000cm)}]
[x={(-0.173205cm,-0.100000cm)}, y={(0.173205cm,-0.100000cm)}, z={(0.000000cm,0.200000cm)}]
\definecolor{facecolor}{rgb}{0.800,0.800,0.800}
\fill[fill=facecolor, draw=black, shift={(0,-1,0)}]
(0, 0, 0) -- (0, 1, 0) -- (0, 1, 1) -- (0, 0, 1) -- cycle;
\fill[fill=facecolor, draw=black, shift={(0,0,1)}]
(0, 0, 0) -- (0, 1, 0) -- (0, 1, 1) -- (0, 0, 1) -- cycle;
\definecolor{facecolor}{rgb}{0.35,0.35,0.35}
\fill[fill=facecolor, draw=black, shift={(0,0,0)}]
(0, 0, 0) -- (0, 1, 0) -- (0, 1, 1) -- (0, 0, 1) -- cycle;
\end{tikzpicture}}
  \\
3 &
\myvcenter{%
\begin{tikzpicture}
%[x={(-0.216506cm,-0.125000cm)}, y={(0.216506cm,-0.125000cm)}, z={(0.000000cm,0.250000cm)}]
[x={(-0.173205cm,-0.100000cm)}, y={(0.173205cm,-0.100000cm)}, z={(0.000000cm,0.200000cm)}]
\definecolor{facecolor}{rgb}{0.800,0.800,0.800}
\fill[fill=facecolor, draw=black, shift={(0,0,0)}]
(0, 0, 0) -- (1, 0, 0) -- (1, 1, 0) -- (0, 1, 0) -- cycle;
\fill[fill=facecolor, draw=black, shift={(0,1,0)}]
(0, 0, 0) -- (0, 1, 0) -- (0, 1, 1) -- (0, 0, 1) -- cycle;
\definecolor{facecolor}{rgb}{0.35,0.35,0.35}
\fill[fill=facecolor, draw=black, shift={(0,0,0)}]
(0, 0, 0) -- (0, 1, 0) -- (0, 1, 1) -- (0, 0, 1) -- cycle;
\end{tikzpicture}}
\ or \
\myvcenter{%
\begin{tikzpicture}
%[x={(-0.216506cm,-0.125000cm)}, y={(0.216506cm,-0.125000cm)}, z={(0.000000cm,0.250000cm)}]
[x={(-0.173205cm,-0.100000cm)}, y={(0.173205cm,-0.100000cm)}, z={(0.000000cm,0.200000cm)}]
\definecolor{facecolor}{rgb}{0.800,0.800,0.800}
\fill[fill=facecolor, draw=black, shift={(0,0,-1)}]
(0, 0, 0) -- (0, 1, 0) -- (0, 1, 1) -- (0, 0, 1) -- cycle;
\fill[fill=facecolor, draw=black, shift={(0,1,0)}]
(0, 0, 0) -- (0, 1, 0) -- (0, 1, 1) -- (0, 0, 1) -- cycle;
\definecolor{facecolor}{rgb}{0.35,0.35,0.35}
\fill[fill=facecolor, draw=black, shift={(0,0,0)}]
(0, 0, 0) -- (0, 1, 0) -- (0, 1, 1) -- (0, 0, 1) -- cycle;
\end{tikzpicture}}
\end{tabular}
\qquad
\begin{tabular}{c|c}
$i$ & $\Sigma_i^{-1}(\,\myvcenter{\myvcenter{%
\begin{tikzpicture}
%[x={(-0.216506cm,-0.125000cm)}, y={(0.216506cm,-0.125000cm)}, z={(0.000000cm,0.250000cm)}]
[x={(-0.173205cm,-0.100000cm)}, y={(0.173205cm,-0.100000cm)}, z={(0.000000cm,0.200000cm)}]
\definecolor{facecolor}{rgb}{0.8,0.8,0.8}
\fill[fill=facecolor, draw=black, shift={(0,0,0)}]
(0, 0, 0) -- (1, 0, 0) -- (1, 1, 0) -- (0, 1, 0) -- cycle;
\fill[fill=facecolor, draw=black, shift={(1,1,0)}]
(0, 0, 0) -- (1, 0, 0) -- (1, 1, 0) -- (0, 1, 0) -- cycle;
\end{tikzpicture}\,}%})$ \\
\hline
1 &
\myvcenter{%
\begin{tikzpicture}
%[x={(-0.216506cm,-0.125000cm)}, y={(0.216506cm,-0.125000cm)}, z={(0.000000cm,0.250000cm)}]
[x={(-0.173205cm,-0.100000cm)}, y={(0.173205cm,-0.100000cm)}, z={(0.000000cm,0.200000cm)}]
\definecolor{facecolor}{rgb}{0.800000,0.800000,0.800000}
\fill[fill=facecolor, draw=black, shift={(0,0,0)}]
(0, 0, 0) -- (0, 1, 0) -- (0, 1, 1) -- (0, 0, 1) -- cycle;
\definecolor{facecolor}{rgb}{0.800000,0.800000,0.800000}
\fill[fill=facecolor, draw=black, shift={(1,1,0)}]
(0, 0, 0) -- (1, 0, 0) -- (1, 1, 0) -- (0, 1, 0) -- cycle;
\definecolor{facecolor}{rgb}{0.35,0.35,0.35}
\fill[fill=facecolor, draw=black, shift={(0,0,0)}]
(0, 0, 0) -- (1, 0, 0) -- (1, 1, 0) -- (0, 1, 0) -- cycle;
\end{tikzpicture}}
\ or \
\myvcenter{%
\begin{tikzpicture}
%[x={(-0.216506cm,-0.125000cm)}, y={(0.216506cm,-0.125000cm)}, z={(0.000000cm,0.250000cm)}]
[x={(-0.173205cm,-0.100000cm)}, y={(0.173205cm,-0.100000cm)}, z={(0.000000cm,0.200000cm)}]
\definecolor{facecolor}{rgb}{0.800000,0.800000,0.800000}
\fill[fill=facecolor, draw=black, shift={(-1,0,0)}]
(0, 0, 0) -- (1, 0, 0) -- (1, 1, 0) -- (0, 1, 0) -- cycle;
\definecolor{facecolor}{rgb}{0.800000,0.800000,0.800000}
\fill[fill=facecolor, draw=black, shift={(1,1,0)}]
(0, 0, 0) -- (1, 0, 0) -- (1, 1, 0) -- (0, 1, 0) -- cycle;
\definecolor{facecolor}{rgb}{0.35,0.35,0.35}
\fill[fill=facecolor, draw=black, shift={(0,0,0)}]
(0, 0, 0) -- (1, 0, 0) -- (1, 1, 0) -- (0, 1, 0) -- cycle;
\end{tikzpicture}}
  \\
2 &
\myvcenter{%
\begin{tikzpicture}
%[x={(-0.216506cm,-0.125000cm)}, y={(0.216506cm,-0.125000cm)}, z={(0.000000cm,0.250000cm)}]
[x={(-0.173205cm,-0.100000cm)}, y={(0.173205cm,-0.100000cm)}, z={(0.000000cm,0.200000cm)}]
\definecolor{facecolor}{rgb}{0.800000,0.800000,0.800000}
\fill[fill=facecolor, draw=black, shift={(0,0,0)}]
(0, 0, 0) -- (0, 0, 1) -- (1, 0, 1) -- (1, 0, 0) -- cycle;
\definecolor{facecolor}{rgb}{0.800000,0.800000,0.800000}
\fill[fill=facecolor, draw=black, shift={(1,1,0)}]
(0, 0, 0) -- (1, 0, 0) -- (1, 1, 0) -- (0, 1, 0) -- cycle;
\definecolor{facecolor}{rgb}{0.35,0.35,0.35}
\fill[fill=facecolor, draw=black, shift={(0,0,0)}]
(0, 0, 0) -- (1, 0, 0) -- (1, 1, 0) -- (0, 1, 0) -- cycle;
\end{tikzpicture}}
\ or \
\myvcenter{%
\begin{tikzpicture}
%[x={(-0.216506cm,-0.125000cm)}, y={(0.216506cm,-0.125000cm)}, z={(0.000000cm,0.250000cm)}]
[x={(-0.173205cm,-0.100000cm)}, y={(0.173205cm,-0.100000cm)}, z={(0.000000cm,0.200000cm)}]
\definecolor{facecolor}{rgb}{0.800000,0.800000,0.800000}
\fill[fill=facecolor, draw=black, shift={(0,-1,0)}]
(0, 0, 0) -- (1, 0, 0) -- (1, 1, 0) -- (0, 1, 0) -- cycle;
\definecolor{facecolor}{rgb}{0.800000,0.800000,0.800000}
\fill[fill=facecolor, draw=black, shift={(1,1,0)}]
(0, 0, 0) -- (1, 0, 0) -- (1, 1, 0) -- (0, 1, 0) -- cycle;
\definecolor{facecolor}{rgb}{0.35,0.35,0.35}
\fill[fill=facecolor, draw=black, shift={(0,0,0)}]
(0, 0, 0) -- (1, 0, 0) -- (1, 1, 0) -- (0, 1, 0) -- cycle;
\end{tikzpicture}}
  \\
3 &
\myvcenter{%
\begin{tikzpicture}
%[x={(-0.216506cm,-0.125000cm)}, y={(0.216506cm,-0.125000cm)}, z={(0.000000cm,0.250000cm)}]
[x={(-0.173205cm,-0.100000cm)}, y={(0.173205cm,-0.100000cm)}, z={(0.000000cm,0.200000cm)}]
\definecolor{facecolor}{rgb}{0.800000,0.800000,0.800000}
\fill[fill=facecolor, draw=black, shift={(0,0,0)}]
(0, 0, 0) -- (1, 0, 0) -- (1, 1, 0) -- (0, 1, 0) -- cycle;
\fill[fill=facecolor, draw=black, shift={(1,0,1)}]
(0, 0, 0) -- (0, 0, 1) -- (1, 0, 1) -- (1, 0, 0) -- cycle;
\definecolor{facecolor}{rgb}{0.35,0.35,0.35}
\fill[fill=facecolor, draw=black, shift={(1,0,0)}]
(0, 0, 0) -- (0, 0, 1) -- (1, 0, 1) -- (1, 0, 0) -- cycle;
\end{tikzpicture}}
\ or \
\myvcenter{%
\begin{tikzpicture}
%[x={(-0.216506cm,-0.125000cm)}, y={(0.216506cm,-0.125000cm)}, z={(0.000000cm,0.250000cm)}]
[x={(-0.173205cm,-0.100000cm)}, y={(0.173205cm,-0.100000cm)}, z={(0.000000cm,0.200000cm)}]
\definecolor{facecolor}{rgb}{0.800000,0.800000,0.800000}
\fill[fill=facecolor, draw=black, shift={(0,0,-1)}]
(0, 0, 0) -- (0, 0, 1) -- (1, 0, 1) -- (1, 0, 0) -- cycle;
\fill[fill=facecolor, draw=black, shift={(1,0,1)}]
(0, 0, 0) -- (0, 0, 1) -- (1, 0, 1) -- (1, 0, 0) -- cycle;
\definecolor{facecolor}{rgb}{0.35,0.35,0.35}
\fill[fill=facecolor, draw=black, shift={(1,0,0)}]
(0, 0, 0) -- (0, 0, 1) -- (1, 0, 1) -- (1, 0, 0) -- cycle;
\end{tikzpicture}}
\end{tabular}
\end{center}
These two tables show that if one of these two patterns appears in $\Sigma(\bfGa)$,
then one of the four patterns must appear in $\bfGa$, which concludes the proof.
\end{proof}

\begin{lemm}[Strong $\LFS$-covering]
\label{lemm:strongcovFS}
Let $P$ be a strongly $\LFS$-covered pattern
which is contained in a stepped plane that avoids
\myvcenter{%
\begin{tikzpicture}
%[x={(-0.216506cm,-0.125000cm)}, y={(0.216506cm,-0.125000cm)}, z={(0.000000cm,0.250000cm)}]
[x={(-0.173205cm,-0.100000cm)}, y={(0.173205cm,-0.100000cm)}, z={(0.000000cm,0.200000cm)}]
\definecolor{facecolor}{rgb}{0.8,0.8,0.8}
\fill[fill=facecolor, draw=black, shift={(0,0,0)}]
(0, 0, 0) -- (0, 1, 0) -- (0, 1, 1) -- (0, 0, 1) -- cycle;
\fill[fill=facecolor, draw=black, shift={(0,1,0)}]
(0, 0, 0) -- (0, 1, 0) -- (0, 1, 1) -- (0, 0, 1) -- cycle;
\end{tikzpicture}\,}%,
\myvcenter{%
\begin{tikzpicture}
%[x={(-0.216506cm,-0.125000cm)}, y={(0.216506cm,-0.125000cm)}, z={(0.000000cm,0.250000cm)}]
[x={(-0.173205cm,-0.100000cm)}, y={(0.173205cm,-0.100000cm)}, z={(0.000000cm,0.200000cm)}]
\definecolor{facecolor}{rgb}{0.8,0.8,0.8}
\fill[fill=facecolor, draw=black, shift={(0,0,0)}]
(0, 0, 0) -- (0, 1, 0) -- (0, 1, 1) -- (0, 0, 1) -- cycle;
\fill[fill=facecolor, draw=black, shift={(0,0,1)}]
(0, 0, 0) -- (0, 1, 0) -- (0, 1, 1) -- (0, 0, 1) -- cycle;
\end{tikzpicture}\,}%,
\myvcenter{%
\begin{tikzpicture}
%[x={(-0.216506cm,-0.125000cm)}, y={(0.216506cm,-0.125000cm)}, z={(0.000000cm,0.250000cm)}]
[x={(-0.173205cm,-0.100000cm)}, y={(0.173205cm,-0.100000cm)}, z={(0.000000cm,0.200000cm)}]
\definecolor{facecolor}{rgb}{0.8,0.8,0.8}
\fill[fill=facecolor, draw=black, shift={(0,0,1)}]
(0, 0, 0) -- (0, 0, 1) -- (1, 0, 1) -- (1, 0, 0) -- cycle;
\fill[fill=facecolor, draw=black, shift={(0,0,0)}]
(0, 0, 0) -- (0, 0, 1) -- (1, 0, 1) -- (1, 0, 0) -- cycle;
\end{tikzpicture}\,}%
and \myvcenter{%
\begin{tikzpicture}
%[x={(-0.216506cm,-0.125000cm)}, y={(0.216506cm,-0.125000cm)}, z={(0.000000cm,0.250000cm)}]
[x={(-0.173205cm,-0.100000cm)}, y={(0.173205cm,-0.100000cm)}, z={(0.000000cm,0.200000cm)}]
\definecolor{facecolor}{rgb}{0.8,0.8,0.8}
\fill[fill=facecolor, draw=black, shift={(0,0,0)}]
(0, 0, 0) -- (1, 0, 0) -- (1, 1, 0) -- (0, 1, 0) -- cycle;
\fill[fill=facecolor, draw=black, shift={(1,1,0)}]
(0, 0, 0) -- (1, 0, 0) -- (1, 1, 0) -- (0, 1, 0) -- cycle;
\end{tikzpicture}\,}%.
Then $\Sigma_i(P)$ is strongly $\LFS$-covered for every $i \in \{1,2,3\}$.
\end{lemm}

\begin{proof}
Let $i \in \{1,2,3\}$.
The pattern $\Sigma(P)$ is $\LFS$-covered thanks to Lemma~\ref{lemm:covFS}.
Now, let $X \subseteq \Sigma_i(P)$ be an element of $\Ledge$.
We must prove that there exists $Y \in \LFS$ such that $X \subseteq Y \subseteq \Sigma(P)$.

If $X$ is a translation of one of the six patterns
\myvcenter{%
\begin{tikzpicture}
%[x={(-0.216506cm,-0.125000cm)}, y={(0.216506cm,-0.125000cm)}, z={(0.000000cm,0.250000cm)}]
[x={(-0.173205cm,-0.100000cm)}, y={(0.173205cm,-0.100000cm)}, z={(0.000000cm,0.200000cm)}]
\definecolor{facecolor}{rgb}{0.8,0.8,0.8}
\fill[fill=facecolor, draw=black, shift={(0,0,0)}]
(0, 0, 0) -- (0, 0, 1) -- (1, 0, 1) -- (1, 0, 0) -- cycle;
\definecolor{facecolor}{rgb}{0.8,0.8,0.8}
\fill[fill=facecolor, draw=black, shift={(0,0,0)}]
(0, 0, 0) -- (0, 1, 0) -- (0, 1, 1) -- (0, 0, 1) -- cycle;
\end{tikzpicture}\,}%,
\myvcenter{%
\begin{tikzpicture}
%[x={(-0.216506cm,-0.125000cm)}, y={(0.216506cm,-0.125000cm)}, z={(0.000000cm,0.250000cm)}]
[x={(-0.173205cm,-0.100000cm)}, y={(0.173205cm,-0.100000cm)}, z={(0.000000cm,0.200000cm)}]
\definecolor{facecolor}{rgb}{0.8,0.8,0.8}
\fill[fill=facecolor, draw=black, shift={(0,0,0)}]
(0, 0, 0) -- (0, 1, 0) -- (0, 1, 1) -- (0, 0, 1) -- cycle;
\definecolor{facecolor}{rgb}{0.8,0.8,0.8}
\fill[fill=facecolor, draw=black, shift={(-1,1,0)}]
(0, 0, 0) -- (0, 0, 1) -- (1, 0, 1) -- (1, 0, 0) -- cycle;
\end{tikzpicture}\,}%,
\myvcenter{%
\begin{tikzpicture}
%[x={(-0.216506cm,-0.125000cm)}, y={(0.216506cm,-0.125000cm)}, z={(0.000000cm,0.250000cm)}]
[x={(-0.173205cm,-0.100000cm)}, y={(0.173205cm,-0.100000cm)}, z={(0.000000cm,0.200000cm)}]
\definecolor{facecolor}{rgb}{0.8,0.8,0.8}
\fill[fill=facecolor, draw=black, shift={(0,0,0)}]
(0, 0, 0) -- (0, 1, 0) -- (0, 1, 1) -- (0, 0, 1) -- cycle;
\definecolor{facecolor}{rgb}{0.8,0.8,0.8}
\fill[fill=facecolor, draw=black, shift={(0,0,0)}]
(0, 0, 0) -- (1, 0, 0) -- (1, 1, 0) -- (0, 1, 0) -- cycle;
\end{tikzpicture}\,}%,
\myvcenter{%
\begin{tikzpicture}
%[x={(-0.216506cm,-0.125000cm)}, y={(0.216506cm,-0.125000cm)}, z={(0.000000cm,0.250000cm)}]
[x={(-0.173205cm,-0.100000cm)}, y={(0.173205cm,-0.100000cm)}, z={(0.000000cm,0.200000cm)}]
\definecolor{facecolor}{rgb}{0.8,0.8,0.8}
\fill[fill=facecolor, draw=black, shift={(0,0,0)}]
(0, 0, 0) -- (0, 1, 0) -- (0, 1, 1) -- (0, 0, 1) -- cycle;
\definecolor{facecolor}{rgb}{0.8,0.8,0.8}
\fill[fill=facecolor, draw=black, shift={(-1,0,1)}]
(0, 0, 0) -- (1, 0, 0) -- (1, 1, 0) -- (0, 1, 0) -- cycle;
\end{tikzpicture}\,}%,
\myvcenter{%
\begin{tikzpicture}
%[x={(-0.216506cm,-0.125000cm)}, y={(0.216506cm,-0.125000cm)}, z={(0.000000cm,0.250000cm)}]
[x={(-0.173205cm,-0.100000cm)}, y={(0.173205cm,-0.100000cm)}, z={(0.000000cm,0.200000cm)}]
\definecolor{facecolor}{rgb}{0.8,0.8,0.8}
\fill[fill=facecolor, draw=black, shift={(0,0,0)}]
(0, 0, 0) -- (0, 0, 1) -- (1, 0, 1) -- (1, 0, 0) -- cycle;
\definecolor{facecolor}{rgb}{0.8,0.8,0.8}
\fill[fill=facecolor, draw=black, shift={(0,0,0)}]
(0, 0, 0) -- (1, 0, 0) -- (1, 1, 0) -- (0, 1, 0) -- cycle;
\end{tikzpicture}\,}%,
\myvcenter{%
\begin{tikzpicture}
%[x={(-0.216506cm,-0.125000cm)}, y={(0.216506cm,-0.125000cm)}, z={(0.000000cm,0.250000cm)}]
[x={(-0.173205cm,-0.100000cm)}, y={(0.173205cm,-0.100000cm)}, z={(0.000000cm,0.200000cm)}]
\definecolor{facecolor}{rgb}{0.8,0.8,0.8}
\fill[fill=facecolor, draw=black, shift={(0,-1,1)}]
(0, 0, 0) -- (1, 0, 0) -- (1, 1, 0) -- (0, 1, 0) -- cycle;
\definecolor{facecolor}{rgb}{0.8,0.8,0.8}
\fill[fill=facecolor, draw=black, shift={(0,0,0)}]
(0, 0, 0) -- (0, 0, 1) -- (1, 0, 1) -- (1, 0, 0) -- cycle;
\end{tikzpicture}\,}%,
then the strong covering condition is trivially verified by taking $X = Y$.
By assumption and by Lemma~\ref{lemm:forbidden},
$\Sigma(P)$ does not contain any translate of
\myvcenter{%
\begin{tikzpicture}
%[x={(-0.216506cm,-0.125000cm)}, y={(0.216506cm,-0.125000cm)}, z={(0.000000cm,0.250000cm)}]
[x={(-0.173205cm,-0.100000cm)}, y={(0.173205cm,-0.100000cm)}, z={(0.000000cm,0.200000cm)}]
\definecolor{facecolor}{rgb}{0.8,0.8,0.8}
\fill[fill=facecolor, draw=black, shift={(0,0,0)}]
(0, 0, 0) -- (0, 1, 0) -- (0, 1, 1) -- (0, 0, 1) -- cycle;
\fill[fill=facecolor, draw=black, shift={(0,1,0)}]
(0, 0, 0) -- (0, 1, 0) -- (0, 1, 1) -- (0, 0, 1) -- cycle;
\end{tikzpicture}\,}%, \myvcenter{%
\begin{tikzpicture}
%[x={(-0.216506cm,-0.125000cm)}, y={(0.216506cm,-0.125000cm)}, z={(0.000000cm,0.250000cm)}]
[x={(-0.173205cm,-0.100000cm)}, y={(0.173205cm,-0.100000cm)}, z={(0.000000cm,0.200000cm)}]
\definecolor{facecolor}{rgb}{0.8,0.8,0.8}
\fill[fill=facecolor, draw=black, shift={(0,0,0)}]
(0, 0, 0) -- (0, 1, 0) -- (0, 1, 1) -- (0, 0, 1) -- cycle;
\fill[fill=facecolor, draw=black, shift={(0,0,1)}]
(0, 0, 0) -- (0, 1, 0) -- (0, 1, 1) -- (0, 0, 1) -- cycle;
\end{tikzpicture}\,}% or \myvcenter{%
\begin{tikzpicture}
%[x={(-0.216506cm,-0.125000cm)}, y={(0.216506cm,-0.125000cm)}, z={(0.000000cm,0.250000cm)}]
[x={(-0.173205cm,-0.100000cm)}, y={(0.173205cm,-0.100000cm)}, z={(0.000000cm,0.200000cm)}]
\definecolor{facecolor}{rgb}{0.8,0.8,0.8}
\fill[fill=facecolor, draw=black, shift={(0,0,1)}]
(0, 0, 0) -- (0, 0, 1) -- (1, 0, 1) -- (1, 0, 0) -- cycle;
\fill[fill=facecolor, draw=black, shift={(0,0,0)}]
(0, 0, 0) -- (0, 0, 1) -- (1, 0, 1) -- (1, 0, 0) -- cycle;
\end{tikzpicture}\,}%,
so we ignore these cases for $X$.

It remains to treat the cases $X = \myvcenter{%
\begin{tikzpicture}
%[x={(-0.216506cm,-0.125000cm)}, y={(0.216506cm,-0.125000cm)}, z={(0.000000cm,0.250000cm)}]
[x={(-0.173205cm,-0.100000cm)}, y={(0.173205cm,-0.100000cm)}, z={(0.000000cm,0.200000cm)}]
\definecolor{facecolor}{rgb}{0.8,0.8,0.8}
\fill[fill=facecolor, draw=black, shift={(0,0,0)}]
(0, 0, 0) -- (0, 0, 1) -- (1, 0, 1) -- (1, 0, 0) -- cycle;
\fill[fill=facecolor, draw=black, shift={(1,0,0)}]
(0, 0, 0) -- (0, 0, 1) -- (1, 0, 1) -- (1, 0, 0) -- cycle;
\end{tikzpicture}\,}%, \myvcenter{%
\begin{tikzpicture}
%[x={(-0.216506cm,-0.125000cm)}, y={(0.216506cm,-0.125000cm)}, z={(0.000000cm,0.250000cm)}]
[x={(-0.173205cm,-0.100000cm)}, y={(0.173205cm,-0.100000cm)}, z={(0.000000cm,0.200000cm)}]
\definecolor{facecolor}{rgb}{0.8,0.8,0.8}
\fill[fill=facecolor, draw=black, shift={(1,0,0)}]
(0, 0, 0) -- (1, 0, 0) -- (1, 1, 0) -- (0, 1, 0) -- cycle;
\fill[fill=facecolor, draw=black, shift={(0,0,0)}]
(0, 0, 0) -- (1, 0, 0) -- (1, 1, 0) -- (0, 1, 0) -- cycle;
\end{tikzpicture}\,}%$
or \myvcenter{%
\begin{tikzpicture}
%[x={(-0.216506cm,-0.125000cm)}, y={(0.216506cm,-0.125000cm)}, z={(0.000000cm,0.250000cm)}]
[x={(-0.173205cm,-0.100000cm)}, y={(0.173205cm,-0.100000cm)}, z={(0.000000cm,0.200000cm)}]
\definecolor{facecolor}{rgb}{0.8,0.8,0.8}
\fill[fill=facecolor, draw=black, shift={(0,1,0)}]
(0, 0, 0) -- (1, 0, 0) -- (1, 1, 0) -- (0, 1, 0) -- cycle;
\fill[fill=facecolor, draw=black, shift={(0,0,0)}]
(0, 0, 0) -- (1, 0, 0) -- (1, 1, 0) -- (0, 1, 0) -- cycle;
\end{tikzpicture}\,}%.
We have $X \subseteq \Sigma(P)$, so it is sufficient to check that
for every pattern $Q \subseteq P$ such that $X \subseteq \Sigma(Q) $,
there exists $Y \in \LFS$ such that $X \subseteq Y \subseteq \Sigma(Q)$.
Moreover, since $X$ is a two-face pattern,
we can restrict to the case where $Q$ consists of two faces only,
which leaves a finite number of cases to check for $Q$
(using Lemma~\ref{lemm:preimages}):
\begin{center}
\renewcommand{\arraystretch}{1.2}
\begin{tabular}[h]{c|c|c}
$i$ & $Q$ & $\Sigma_i(Q)$ \\
\hline
$1$ & \myvcenter{%
\begin{tikzpicture}
%[x={(-0.216506cm,-0.125000cm)}, y={(0.216506cm,-0.125000cm)}, z={(0.000000cm,0.250000cm)}]
[x={(-0.173205cm,-0.100000cm)}, y={(0.173205cm,-0.100000cm)}, z={(0.000000cm,0.200000cm)}]
\definecolor{facecolor}{rgb}{0.8,0.8,0.8}
\fill[fill=facecolor, draw=black, shift={(0,0,0)}]
(0, 0, 0) -- (0, 0, 1) -- (1, 0, 1) -- (1, 0, 0) -- cycle;
\definecolor{facecolor}{rgb}{0.8,0.8,0.8}
\fill[fill=facecolor, draw=black, shift={(0,0,0)}]
(0, 0, 0) -- (0, 1, 0) -- (0, 1, 1) -- (0, 0, 1) -- cycle;
\end{tikzpicture}\,}%
    & \myvcenter{%
        \begin{tikzpicture}
        [x={(-0.173205cm,-0.100000cm)}, y={(0.173205cm,-0.100000cm)}, z={(0.000000cm,0.200000cm)}]
        \definecolor{facecolor}{rgb}{0.800,0.800,0.800}
        \fill[fill=facecolor, draw=black, shift={(0,0,0)}]
        (0, 0, 0) -- (0, 0, 1) -- (1, 0, 1) -- (1, 0, 0) -- cycle;
        \fill[fill=facecolor, draw=black, shift={(0,0,0)}]
        (0, 0, 0) -- (0, 1, 0) -- (0, 1, 1) -- (0, 0, 1) -- cycle;
        \fill[fill=facecolor, draw=black, shift={(1,0,0)}]
        (0, 0, 0) -- (0, 0, 1) -- (1, 0, 1) -- (1, 0, 0) -- cycle;
        \fill[fill=facecolor, draw=black, shift={(0,0,0)}]
        (0, 0, 0) -- (1, 0, 0) -- (1, 1, 0) -- (0, 1, 0) -- cycle;
        \end{tikzpicture}} \\
$1$ & \myvcenter{%
\begin{tikzpicture}
%[x={(-0.216506cm,-0.125000cm)}, y={(0.216506cm,-0.125000cm)}, z={(0.000000cm,0.250000cm)}]
[x={(-0.173205cm,-0.100000cm)}, y={(0.173205cm,-0.100000cm)}, z={(0.000000cm,0.200000cm)}]
\definecolor{facecolor}{rgb}{0.8,0.8,0.8}
\fill[fill=facecolor, draw=black, shift={(0,0,0)}]
(0, 0, 0) -- (0, 0, 1) -- (1, 0, 1) -- (1, 0, 0) -- cycle;
\fill[fill=facecolor, draw=black, shift={(1,0,0)}]
(0, 0, 0) -- (0, 0, 1) -- (1, 0, 1) -- (1, 0, 0) -- cycle;
\end{tikzpicture}\,}%
    & \myvcenter{%
\begin{tikzpicture}
%[x={(-0.216506cm,-0.125000cm)}, y={(0.216506cm,-0.125000cm)}, z={(0.000000cm,0.250000cm)}]
[x={(-0.173205cm,-0.100000cm)}, y={(0.173205cm,-0.100000cm)}, z={(0.000000cm,0.200000cm)}]
\definecolor{facecolor}{rgb}{0.8,0.8,0.8}
\fill[fill=facecolor, draw=black, shift={(0,0,0)}]
(0, 0, 0) -- (0, 0, 1) -- (1, 0, 1) -- (1, 0, 0) -- cycle;
\fill[fill=facecolor, draw=black, shift={(1,0,0)}]
(0, 0, 0) -- (0, 0, 1) -- (1, 0, 1) -- (1, 0, 0) -- cycle;
\end{tikzpicture}\,}% \\
$2$ & \myvcenter{%
\begin{tikzpicture}
%[x={(-0.216506cm,-0.125000cm)}, y={(0.216506cm,-0.125000cm)}, z={(0.000000cm,0.250000cm)}]
[x={(-0.173205cm,-0.100000cm)}, y={(0.173205cm,-0.100000cm)}, z={(0.000000cm,0.200000cm)}]
\definecolor{facecolor}{rgb}{0.8,0.8,0.8}
\fill[fill=facecolor, draw=black, shift={(0,0,0)}]
(0, 0, 0) -- (0, 0, 1) -- (1, 0, 1) -- (1, 0, 0) -- cycle;
\definecolor{facecolor}{rgb}{0.8,0.8,0.8}
\fill[fill=facecolor, draw=black, shift={(0,0,0)}]
(0, 0, 0) -- (0, 1, 0) -- (0, 1, 1) -- (0, 0, 1) -- cycle;
\end{tikzpicture}\,}%
    & \myvcenter{%
        \begin{tikzpicture}
        [x={(-0.173205cm,-0.100000cm)}, y={(0.173205cm,-0.100000cm)}, z={(0.000000cm,0.200000cm)}]
        \definecolor{facecolor}{rgb}{0.800,0.800,0.800}
        \fill[fill=facecolor, draw=black, shift={(0,0,0)}]
        (0, 0, 0) -- (0, 0, 1) -- (1, 0, 1) -- (1, 0, 0) -- cycle;
        \fill[fill=facecolor, draw=black, shift={(0,0,0)}]
        (0, 0, 0) -- (0, 1, 0) -- (0, 1, 1) -- (0, 0, 1) -- cycle;
        \fill[fill=facecolor, draw=black, shift={(0,0,0)}]
        (0, 0, 0) -- (1, 0, 0) -- (1, 1, 0) -- (0, 1, 0) -- cycle;
        \fill[fill=facecolor, draw=black, shift={(1,0,0)}]
        (0, 0, 0) -- (0, 0, 1) -- (1, 0, 1) -- (1, 0, 0) -- cycle;
        \end{tikzpicture}} \\
$2$ & \myvcenter{%
\begin{tikzpicture}
%[x={(-0.216506cm,-0.125000cm)}, y={(0.216506cm,-0.125000cm)}, z={(0.000000cm,0.250000cm)}]
[x={(-0.173205cm,-0.100000cm)}, y={(0.173205cm,-0.100000cm)}, z={(0.000000cm,0.200000cm)}]
\definecolor{facecolor}{rgb}{0.8,0.8,0.8}
\fill[fill=facecolor, draw=black, shift={(0,0,0)}]
(0, 0, 0) -- (0, 1, 0) -- (0, 1, 1) -- (0, 0, 1) -- cycle;
\fill[fill=facecolor, draw=black, shift={(0,1,0)}]
(0, 0, 0) -- (0, 1, 0) -- (0, 1, 1) -- (0, 0, 1) -- cycle;
\end{tikzpicture}\,}%
    & \myvcenter{%
\begin{tikzpicture}
%[x={(-0.216506cm,-0.125000cm)}, y={(0.216506cm,-0.125000cm)}, z={(0.000000cm,0.250000cm)}]
[x={(-0.173205cm,-0.100000cm)}, y={(0.173205cm,-0.100000cm)}, z={(0.000000cm,0.200000cm)}]
\definecolor{facecolor}{rgb}{0.8,0.8,0.8}
\fill[fill=facecolor, draw=black, shift={(0,0,0)}]
(0, 0, 0) -- (0, 0, 1) -- (1, 0, 1) -- (1, 0, 0) -- cycle;
\fill[fill=facecolor, draw=black, shift={(1,0,0)}]
(0, 0, 0) -- (0, 0, 1) -- (1, 0, 1) -- (1, 0, 0) -- cycle;
\end{tikzpicture}\,}% \\
$3$ & \myvcenter{%
\begin{tikzpicture}
%[x={(-0.216506cm,-0.125000cm)}, y={(0.216506cm,-0.125000cm)}, z={(0.000000cm,0.250000cm)}]
[x={(-0.173205cm,-0.100000cm)}, y={(0.173205cm,-0.100000cm)}, z={(0.000000cm,0.200000cm)}]
\definecolor{facecolor}{rgb}{0.8,0.8,0.8}
\fill[fill=facecolor, draw=black, shift={(0,0,0)}]
(0, 0, 0) -- (0, 1, 0) -- (0, 1, 1) -- (0, 0, 1) -- cycle;
\definecolor{facecolor}{rgb}{0.8,0.8,0.8}
\fill[fill=facecolor, draw=black, shift={(0,0,0)}]
(0, 0, 0) -- (1, 0, 0) -- (1, 1, 0) -- (0, 1, 0) -- cycle;
\end{tikzpicture}\,}%
    & \myvcenter{%
        \begin{tikzpicture}
        [x={(-0.173205cm,-0.100000cm)}, y={(0.173205cm,-0.100000cm)}, z={(0.000000cm,0.200000cm)}]
        \definecolor{facecolor}{rgb}{0.800,0.800,0.800}
        \fill[fill=facecolor, draw=black, shift={(0,0,0)}]
        (0, 0, 0) -- (0, 0, 1) -- (1, 0, 1) -- (1, 0, 0) -- cycle;
        \fill[fill=facecolor, draw=black, shift={(0,0,0)}]
        (0, 0, 0) -- (0, 1, 0) -- (0, 1, 1) -- (0, 0, 1) -- cycle;
        \fill[fill=facecolor, draw=black, shift={(1,0,0)}]
        (0, 0, 0) -- (0, 0, 1) -- (1, 0, 1) -- (1, 0, 0) -- cycle;
        \fill[fill=facecolor, draw=black, shift={(0,0,0)}]
        (0, 0, 0) -- (1, 0, 0) -- (1, 1, 0) -- (0, 1, 0) -- cycle;
        \end{tikzpicture}} \\
$3$ & \myvcenter{%
\begin{tikzpicture}
%[x={(-0.216506cm,-0.125000cm)}, y={(0.216506cm,-0.125000cm)}, z={(0.000000cm,0.250000cm)}]
[x={(-0.173205cm,-0.100000cm)}, y={(0.173205cm,-0.100000cm)}, z={(0.000000cm,0.200000cm)}]
\definecolor{facecolor}{rgb}{0.8,0.8,0.8}
\fill[fill=facecolor, draw=black, shift={(0,0,0)}]
(0, 0, 0) -- (0, 1, 0) -- (0, 1, 1) -- (0, 0, 1) -- cycle;
\fill[fill=facecolor, draw=black, shift={(0,0,1)}]
(0, 0, 0) -- (0, 1, 0) -- (0, 1, 1) -- (0, 0, 1) -- cycle;
\end{tikzpicture}\,}%
    & \myvcenter{%
\begin{tikzpicture}
%[x={(-0.216506cm,-0.125000cm)}, y={(0.216506cm,-0.125000cm)}, z={(0.000000cm,0.250000cm)}]
[x={(-0.173205cm,-0.100000cm)}, y={(0.173205cm,-0.100000cm)}, z={(0.000000cm,0.200000cm)}]
\definecolor{facecolor}{rgb}{0.8,0.8,0.8}
\fill[fill=facecolor, draw=black, shift={(0,0,0)}]
(0, 0, 0) -- (0, 0, 1) -- (1, 0, 1) -- (1, 0, 0) -- cycle;
\fill[fill=facecolor, draw=black, shift={(1,0,0)}]
(0, 0, 0) -- (0, 0, 1) -- (1, 0, 1) -- (1, 0, 0) -- cycle;
\end{tikzpicture}\,}%
\end{tabular}
\qquad
\begin{tabular}[h]{c|c|c}
$i$ & $Q$ & $\Sigma_i(Q)$ \\
\hline
$1$ & \myvcenter{%
\begin{tikzpicture}
%[x={(-0.216506cm,-0.125000cm)}, y={(0.216506cm,-0.125000cm)}, z={(0.000000cm,0.250000cm)}]
[x={(-0.173205cm,-0.100000cm)}, y={(0.173205cm,-0.100000cm)}, z={(0.000000cm,0.200000cm)}]
\definecolor{facecolor}{rgb}{0.8,0.8,0.8}
\fill[fill=facecolor, draw=black, shift={(0,0,0)}]
(0, 0, 0) -- (0, 1, 0) -- (0, 1, 1) -- (0, 0, 1) -- cycle;
\definecolor{facecolor}{rgb}{0.8,0.8,0.8}
\fill[fill=facecolor, draw=black, shift={(0,0,0)}]
(0, 0, 0) -- (1, 0, 0) -- (1, 1, 0) -- (0, 1, 0) -- cycle;
\end{tikzpicture}\,}%
    & \myvcenter{%
        \begin{tikzpicture}
        [x={(-0.173205cm,-0.100000cm)}, y={(0.173205cm,-0.100000cm)}, z={(0.000000cm,0.200000cm)}]
        \definecolor{facecolor}{rgb}{0.800,0.800,0.800}
        \fill[fill=facecolor, draw=black, shift={(1,0,0)}]
        (0, 0, 0) -- (1, 0, 0) -- (1, 1, 0) -- (0, 1, 0) -- cycle;
        \fill[fill=facecolor, draw=black, shift={(0,0,0)}]
        (0, 0, 0) -- (0, 1, 0) -- (0, 1, 1) -- (0, 0, 1) -- cycle;
        \fill[fill=facecolor, draw=black, shift={(0,0,0)}]
        (0, 0, 0) -- (0, 0, 1) -- (1, 0, 1) -- (1, 0, 0) -- cycle;
        \fill[fill=facecolor, draw=black, shift={(0,0,0)}]
        (0, 0, 0) -- (1, 0, 0) -- (1, 1, 0) -- (0, 1, 0) -- cycle;
        \end{tikzpicture}} \\
$1$ & \myvcenter{%
\begin{tikzpicture}
%[x={(-0.216506cm,-0.125000cm)}, y={(0.216506cm,-0.125000cm)}, z={(0.000000cm,0.250000cm)}]
[x={(-0.173205cm,-0.100000cm)}, y={(0.173205cm,-0.100000cm)}, z={(0.000000cm,0.200000cm)}]
\definecolor{facecolor}{rgb}{0.8,0.8,0.8}
\fill[fill=facecolor, draw=black, shift={(1,0,0)}]
(0, 0, 0) -- (1, 0, 0) -- (1, 1, 0) -- (0, 1, 0) -- cycle;
\fill[fill=facecolor, draw=black, shift={(0,0,0)}]
(0, 0, 0) -- (1, 0, 0) -- (1, 1, 0) -- (0, 1, 0) -- cycle;
\end{tikzpicture}\,}%
    & \myvcenter{%
\begin{tikzpicture}
%[x={(-0.216506cm,-0.125000cm)}, y={(0.216506cm,-0.125000cm)}, z={(0.000000cm,0.250000cm)}]
[x={(-0.173205cm,-0.100000cm)}, y={(0.173205cm,-0.100000cm)}, z={(0.000000cm,0.200000cm)}]
\definecolor{facecolor}{rgb}{0.8,0.8,0.8}
\fill[fill=facecolor, draw=black, shift={(1,0,0)}]
(0, 0, 0) -- (1, 0, 0) -- (1, 1, 0) -- (0, 1, 0) -- cycle;
\fill[fill=facecolor, draw=black, shift={(0,0,0)}]
(0, 0, 0) -- (1, 0, 0) -- (1, 1, 0) -- (0, 1, 0) -- cycle;
\end{tikzpicture}\,}% \\
$2$ & \myvcenter{%
\begin{tikzpicture}
%[x={(-0.216506cm,-0.125000cm)}, y={(0.216506cm,-0.125000cm)}, z={(0.000000cm,0.250000cm)}]
[x={(-0.173205cm,-0.100000cm)}, y={(0.173205cm,-0.100000cm)}, z={(0.000000cm,0.200000cm)}]
\definecolor{facecolor}{rgb}{0.8,0.8,0.8}
\fill[fill=facecolor, draw=black, shift={(0,0,0)}]
(0, 0, 0) -- (0, 0, 1) -- (1, 0, 1) -- (1, 0, 0) -- cycle;
\definecolor{facecolor}{rgb}{0.8,0.8,0.8}
\fill[fill=facecolor, draw=black, shift={(0,0,0)}]
(0, 0, 0) -- (1, 0, 0) -- (1, 1, 0) -- (0, 1, 0) -- cycle;
\end{tikzpicture}\,}%
    & \myvcenter{%
        \begin{tikzpicture}
        [x={(-0.173205cm,-0.100000cm)}, y={(0.173205cm,-0.100000cm)}, z={(0.000000cm,0.200000cm)}]
        \definecolor{facecolor}{rgb}{0.800,0.800,0.800}
        \fill[fill=facecolor, draw=black, shift={(1,0,0)}]
        (0, 0, 0) -- (1, 0, 0) -- (1, 1, 0) -- (0, 1, 0) -- cycle;
        \fill[fill=facecolor, draw=black, shift={(0,0,0)}]
        (0, 0, 0) -- (0, 1, 0) -- (0, 1, 1) -- (0, 0, 1) -- cycle;
        \fill[fill=facecolor, draw=black, shift={(0,0,0)}]
        (0, 0, 0) -- (0, 0, 1) -- (1, 0, 1) -- (1, 0, 0) -- cycle;
        \fill[fill=facecolor, draw=black, shift={(0,0,0)}]
        (0, 0, 0) -- (1, 0, 0) -- (1, 1, 0) -- (0, 1, 0) -- cycle;
        \end{tikzpicture}} \\
$2$ & \myvcenter{%
\begin{tikzpicture}
%[x={(-0.216506cm,-0.125000cm)}, y={(0.216506cm,-0.125000cm)}, z={(0.000000cm,0.250000cm)}]
[x={(-0.173205cm,-0.100000cm)}, y={(0.173205cm,-0.100000cm)}, z={(0.000000cm,0.200000cm)}]
\definecolor{facecolor}{rgb}{0.8,0.8,0.8}
\fill[fill=facecolor, draw=black, shift={(0,1,0)}]
(0, 0, 0) -- (1, 0, 0) -- (1, 1, 0) -- (0, 1, 0) -- cycle;
\fill[fill=facecolor, draw=black, shift={(0,0,0)}]
(0, 0, 0) -- (1, 0, 0) -- (1, 1, 0) -- (0, 1, 0) -- cycle;
\end{tikzpicture}\,}%
    & \myvcenter{%
\begin{tikzpicture}
%[x={(-0.216506cm,-0.125000cm)}, y={(0.216506cm,-0.125000cm)}, z={(0.000000cm,0.250000cm)}]
[x={(-0.173205cm,-0.100000cm)}, y={(0.173205cm,-0.100000cm)}, z={(0.000000cm,0.200000cm)}]
\definecolor{facecolor}{rgb}{0.8,0.8,0.8}
\fill[fill=facecolor, draw=black, shift={(1,0,0)}]
(0, 0, 0) -- (1, 0, 0) -- (1, 1, 0) -- (0, 1, 0) -- cycle;
\fill[fill=facecolor, draw=black, shift={(0,0,0)}]
(0, 0, 0) -- (1, 0, 0) -- (1, 1, 0) -- (0, 1, 0) -- cycle;
\end{tikzpicture}\,}% \\
$3$ & \myvcenter{%
\begin{tikzpicture}
%[x={(-0.216506cm,-0.125000cm)}, y={(0.216506cm,-0.125000cm)}, z={(0.000000cm,0.250000cm)}]
[x={(-0.173205cm,-0.100000cm)}, y={(0.173205cm,-0.100000cm)}, z={(0.000000cm,0.200000cm)}]
\definecolor{facecolor}{rgb}{0.8,0.8,0.8}
\fill[fill=facecolor, draw=black, shift={(0,0,0)}]
(0, 0, 0) -- (0, 0, 1) -- (1, 0, 1) -- (1, 0, 0) -- cycle;
\definecolor{facecolor}{rgb}{0.8,0.8,0.8}
\fill[fill=facecolor, draw=black, shift={(0,0,0)}]
(0, 0, 0) -- (1, 0, 0) -- (1, 1, 0) -- (0, 1, 0) -- cycle;
\end{tikzpicture}\,}%
    & \myvcenter{%
        \begin{tikzpicture}
        [x={(-0.173205cm,-0.100000cm)}, y={(0.173205cm,-0.100000cm)}, z={(0.000000cm,0.200000cm)}]
        \definecolor{facecolor}{rgb}{0.800,0.800,0.800}
        \fill[fill=facecolor, draw=black, shift={(1,0,0)}]
        (0, 0, 0) -- (1, 0, 0) -- (1, 1, 0) -- (0, 1, 0) -- cycle;
        \fill[fill=facecolor, draw=black, shift={(0,0,0)}]
        (0, 0, 0) -- (0, 1, 0) -- (0, 1, 1) -- (0, 0, 1) -- cycle;
        \fill[fill=facecolor, draw=black, shift={(0,0,0)}]
        (0, 0, 0) -- (0, 0, 1) -- (1, 0, 1) -- (1, 0, 0) -- cycle;
        \fill[fill=facecolor, draw=black, shift={(0,0,0)}]
        (0, 0, 0) -- (1, 0, 0) -- (1, 1, 0) -- (0, 1, 0) -- cycle;
        \end{tikzpicture}} \\
$3$ & \myvcenter{%
\begin{tikzpicture}
%[x={(-0.216506cm,-0.125000cm)}, y={(0.216506cm,-0.125000cm)}, z={(0.000000cm,0.250000cm)}]
[x={(-0.173205cm,-0.100000cm)}, y={(0.173205cm,-0.100000cm)}, z={(0.000000cm,0.200000cm)}]
\definecolor{facecolor}{rgb}{0.8,0.8,0.8}
\fill[fill=facecolor, draw=black, shift={(0,0,1)}]
(0, 0, 0) -- (0, 0, 1) -- (1, 0, 1) -- (1, 0, 0) -- cycle;
\fill[fill=facecolor, draw=black, shift={(0,0,0)}]
(0, 0, 0) -- (0, 0, 1) -- (1, 0, 1) -- (1, 0, 0) -- cycle;
\end{tikzpicture}\,}%
    & \myvcenter{%
\begin{tikzpicture}
%[x={(-0.216506cm,-0.125000cm)}, y={(0.216506cm,-0.125000cm)}, z={(0.000000cm,0.250000cm)}]
[x={(-0.173205cm,-0.100000cm)}, y={(0.173205cm,-0.100000cm)}, z={(0.000000cm,0.200000cm)}]
\definecolor{facecolor}{rgb}{0.8,0.8,0.8}
\fill[fill=facecolor, draw=black, shift={(1,0,0)}]
(0, 0, 0) -- (1, 0, 0) -- (1, 1, 0) -- (0, 1, 0) -- cycle;
\fill[fill=facecolor, draw=black, shift={(0,0,0)}]
(0, 0, 0) -- (1, 0, 0) -- (1, 1, 0) -- (0, 1, 0) -- cycle;
\end{tikzpicture}\,}%
\end{tabular}
\qquad
\begin{tabular}[h]{c|c|c}
$i$ & $Q$ & $\Sigma_i(Q)$ \\
\hline
$1$ & \myvcenter{%
        \begin{tikzpicture}
        [x={(-0.173205cm,-0.100000cm)}, y={(0.173205cm,-0.100000cm)}, z={(0.000000cm,0.200000cm)}]
        \definecolor{facecolor}{rgb}{0.800,0.800,0.800}
        \fill[fill=facecolor, draw=black, shift={(0,0,0)}]
        (0, 0, 0) -- (0, 1, 0) -- (0, 1, 1) -- (0, 0, 1) -- cycle;
        \fill[fill=facecolor, draw=black, shift={(0,1,0)}]
        (0, 0, 0) -- (1, 0, 0) -- (1, 1, 0) -- (0, 1, 0) -- cycle;
        \end{tikzpicture}}
    & \myvcenter{%
        \begin{tikzpicture}
        [x={(-0.173205cm,-0.100000cm)}, y={(0.173205cm,-0.100000cm)}, z={(0.000000cm,0.200000cm)}]
        \definecolor{facecolor}{rgb}{0.800,0.800,0.800}
        \fill[fill=facecolor, draw=black, shift={(0,0,0)}]
        (0, 0, 0) -- (0, 0, 1) -- (1, 0, 1) -- (1, 0, 0) -- cycle;
        \fill[fill=facecolor, draw=black, shift={(0,0,0)}]
        (0, 0, 0) -- (0, 1, 0) -- (0, 1, 1) -- (0, 0, 1) -- cycle;
        \fill[fill=facecolor, draw=black, shift={(0,1,0)}]
        (0, 0, 0) -- (1, 0, 0) -- (1, 1, 0) -- (0, 1, 0) -- cycle;
        \fill[fill=facecolor, draw=black, shift={(0,0,0)}]
        (0, 0, 0) -- (1, 0, 0) -- (1, 1, 0) -- (0, 1, 0) -- cycle;
        \end{tikzpicture}} \\
$1$ & \myvcenter{%
\begin{tikzpicture}
%[x={(-0.216506cm,-0.125000cm)}, y={(0.216506cm,-0.125000cm)}, z={(0.000000cm,0.250000cm)}]
[x={(-0.173205cm,-0.100000cm)}, y={(0.173205cm,-0.100000cm)}, z={(0.000000cm,0.200000cm)}]
\definecolor{facecolor}{rgb}{0.8,0.8,0.8}
\fill[fill=facecolor, draw=black, shift={(0,0,0)}]
(0, 0, 0) -- (1, 0, 0) -- (1, 1, 0) -- (0, 1, 0) -- cycle;
\fill[fill=facecolor, draw=black, shift={(1,1,0)}]
(0, 0, 0) -- (1, 0, 0) -- (1, 1, 0) -- (0, 1, 0) -- cycle;
\end{tikzpicture}\,}%
    & \myvcenter{%
\begin{tikzpicture}
%[x={(-0.216506cm,-0.125000cm)}, y={(0.216506cm,-0.125000cm)}, z={(0.000000cm,0.250000cm)}]
[x={(-0.173205cm,-0.100000cm)}, y={(0.173205cm,-0.100000cm)}, z={(0.000000cm,0.200000cm)}]
\definecolor{facecolor}{rgb}{0.8,0.8,0.8}
\fill[fill=facecolor, draw=black, shift={(0,1,0)}]
(0, 0, 0) -- (1, 0, 0) -- (1, 1, 0) -- (0, 1, 0) -- cycle;
\fill[fill=facecolor, draw=black, shift={(0,0,0)}]
(0, 0, 0) -- (1, 0, 0) -- (1, 1, 0) -- (0, 1, 0) -- cycle;
\end{tikzpicture}\,}% \\
$2$ & \myvcenter{%
        \begin{tikzpicture}
        [x={(-0.173205cm,-0.100000cm)}, y={(0.173205cm,-0.100000cm)}, z={(0.000000cm,0.200000cm)}]
        \definecolor{facecolor}{rgb}{0.800,0.800,0.800}
        \fill[fill=facecolor, draw=black, shift={(1,0,0)}]
        (0, 0, 0) -- (1, 0, 0) -- (1, 1, 0) -- (0, 1, 0) -- cycle;
        \fill[fill=facecolor, draw=black, shift={(0,0,0)}]
        (0, 0, 0) -- (0, 0, 1) -- (1, 0, 1) -- (1, 0, 0) -- cycle;
        \end{tikzpicture}}
    & \myvcenter{%
        \begin{tikzpicture}
        [x={(-0.173205cm,-0.100000cm)}, y={(0.173205cm,-0.100000cm)}, z={(0.000000cm,0.200000cm)}]
        \definecolor{facecolor}{rgb}{0.800,0.800,0.800}
        \fill[fill=facecolor, draw=black, shift={(0,0,0)}]
        (0, 0, 0) -- (0, 0, 1) -- (1, 0, 1) -- (1, 0, 0) -- cycle;
        \fill[fill=facecolor, draw=black, shift={(0,0,0)}]
        (0, 0, 0) -- (0, 1, 0) -- (0, 1, 1) -- (0, 0, 1) -- cycle;
        \fill[fill=facecolor, draw=black, shift={(0,1,0)}]
        (0, 0, 0) -- (1, 0, 0) -- (1, 1, 0) -- (0, 1, 0) -- cycle;
        \fill[fill=facecolor, draw=black, shift={(0,0,0)}]
        (0, 0, 0) -- (1, 0, 0) -- (1, 1, 0) -- (0, 1, 0) -- cycle;
        \end{tikzpicture}} \\
$2$ & \myvcenter{%
\begin{tikzpicture}
%[x={(-0.216506cm,-0.125000cm)}, y={(0.216506cm,-0.125000cm)}, z={(0.000000cm,0.250000cm)}]
[x={(-0.173205cm,-0.100000cm)}, y={(0.173205cm,-0.100000cm)}, z={(0.000000cm,0.200000cm)}]
\definecolor{facecolor}{rgb}{0.8,0.8,0.8}
\fill[fill=facecolor, draw=black, shift={(0,0,0)}]
(0, 0, 0) -- (1, 0, 0) -- (1, 1, 0) -- (0, 1, 0) -- cycle;
\fill[fill=facecolor, draw=black, shift={(1,1,0)}]
(0, 0, 0) -- (1, 0, 0) -- (1, 1, 0) -- (0, 1, 0) -- cycle;
\end{tikzpicture}\,}%
    & \myvcenter{%
\begin{tikzpicture}
%[x={(-0.216506cm,-0.125000cm)}, y={(0.216506cm,-0.125000cm)}, z={(0.000000cm,0.250000cm)}]
[x={(-0.173205cm,-0.100000cm)}, y={(0.173205cm,-0.100000cm)}, z={(0.000000cm,0.200000cm)}]
\definecolor{facecolor}{rgb}{0.8,0.8,0.8}
\fill[fill=facecolor, draw=black, shift={(0,1,0)}]
(0, 0, 0) -- (1, 0, 0) -- (1, 1, 0) -- (0, 1, 0) -- cycle;
\fill[fill=facecolor, draw=black, shift={(0,0,0)}]
(0, 0, 0) -- (1, 0, 0) -- (1, 1, 0) -- (0, 1, 0) -- cycle;
\end{tikzpicture}\,}% \\
$3$ & \myvcenter{%
        \begin{tikzpicture}
        [x={(-0.173205cm,-0.100000cm)}, y={(0.173205cm,-0.100000cm)}, z={(0.000000cm,0.200000cm)}]
        \definecolor{facecolor}{rgb}{0.800,0.800,0.800}
        \fill[fill=facecolor, draw=black, shift={(1,0,0)}]
        (0, 0, 0) -- (0, 0, 1) -- (1, 0, 1) -- (1, 0, 0) -- cycle;
        \fill[fill=facecolor, draw=black, shift={(0,0,0)}]
        (0, 0, 0) -- (1, 0, 0) -- (1, 1, 0) -- (0, 1, 0) -- cycle;
        \end{tikzpicture}}
    & \myvcenter{%
        \begin{tikzpicture}
        [x={(-0.173205cm,-0.100000cm)}, y={(0.173205cm,-0.100000cm)}, z={(0.000000cm,0.200000cm)}]
        \definecolor{facecolor}{rgb}{0.800,0.800,0.800}
        \fill[fill=facecolor, draw=black, shift={(0,0,0)}]
        (0, 0, 0) -- (0, 0, 1) -- (1, 0, 1) -- (1, 0, 0) -- cycle;
        \fill[fill=facecolor, draw=black, shift={(0,0,0)}]
        (0, 0, 0) -- (0, 1, 0) -- (0, 1, 1) -- (0, 0, 1) -- cycle;
        \fill[fill=facecolor, draw=black, shift={(0,1,0)}]
        (0, 0, 0) -- (1, 0, 0) -- (1, 1, 0) -- (0, 1, 0) -- cycle;
        \fill[fill=facecolor, draw=black, shift={(0,0,0)}]
        (0, 0, 0) -- (1, 0, 0) -- (1, 1, 0) -- (0, 1, 0) -- cycle;
        \end{tikzpicture}} \\
$3$ & \myvcenter{%
\begin{tikzpicture}
%[x={(-0.216506cm,-0.125000cm)}, y={(0.216506cm,-0.125000cm)}, z={(0.000000cm,0.250000cm)}]
[x={(-0.173205cm,-0.100000cm)}, y={(0.173205cm,-0.100000cm)}, z={(0.000000cm,0.200000cm)}]
\definecolor{facecolor}{rgb}{0.8,0.8,0.8}
\fill[fill=facecolor, draw=black, shift={(0,0,0)}]
(0, 0, 0) -- (0, 0, 1) -- (1, 0, 1) -- (1, 0, 0) -- cycle;
\fill[fill=facecolor, draw=black, shift={(1,0,1)}]
(0, 0, 0) -- (0, 0, 1) -- (1, 0, 1) -- (1, 0, 0) -- cycle;
\end{tikzpicture}\,}%
    & \myvcenter{%
\begin{tikzpicture}
%[x={(-0.216506cm,-0.125000cm)}, y={(0.216506cm,-0.125000cm)}, z={(0.000000cm,0.250000cm)}]
[x={(-0.173205cm,-0.100000cm)}, y={(0.173205cm,-0.100000cm)}, z={(0.000000cm,0.200000cm)}]
\definecolor{facecolor}{rgb}{0.8,0.8,0.8}
\fill[fill=facecolor, draw=black, shift={(0,1,0)}]
(0, 0, 0) -- (1, 0, 0) -- (1, 1, 0) -- (0, 1, 0) -- cycle;
\fill[fill=facecolor, draw=black, shift={(0,0,0)}]
(0, 0, 0) -- (1, 0, 0) -- (1, 1, 0) -- (0, 1, 0) -- cycle;
\end{tikzpicture}\,}%
\end{tabular}
\end{center}
The case of the first, third and fifth rows of each table is settled:
\begin{itemize}
\item
if $X = \myvcenter{%
\begin{tikzpicture}
%[x={(-0.216506cm,-0.125000cm)}, y={(0.216506cm,-0.125000cm)}, z={(0.000000cm,0.250000cm)}]
[x={(-0.173205cm,-0.100000cm)}, y={(0.173205cm,-0.100000cm)}, z={(0.000000cm,0.200000cm)}]
\definecolor{facecolor}{rgb}{0.8,0.8,0.8}
\fill[fill=facecolor, draw=black, shift={(0,0,0)}]
(0, 0, 0) -- (0, 0, 1) -- (1, 0, 1) -- (1, 0, 0) -- cycle;
\fill[fill=facecolor, draw=black, shift={(1,0,0)}]
(0, 0, 0) -- (0, 0, 1) -- (1, 0, 1) -- (1, 0, 0) -- cycle;
\end{tikzpicture}\,}%$, then $Y = \myvcenter{%
\begin{tikzpicture}
%[x={(-0.216506cm,-0.125000cm)}, y={(0.216506cm,-0.125000cm)}, z={(0.000000cm,0.250000cm)}]
[x={(-0.173205cm,-0.100000cm)}, y={(0.173205cm,-0.100000cm)}, z={(0.000000cm,0.200000cm)}]
\definecolor{facecolor}{rgb}{0.800,0.800,0.800}
\fill[fill=facecolor, draw=black, shift={(0,0,0)}]
(0, 0, 0) -- (0, 0, 1) -- (1, 0, 1) -- (1, 0, 0) -- cycle;
\fill[fill=facecolor, draw=black, shift={(0,0,0)}]
(0, 0, 0) -- (1, 0, 0) -- (1, 1, 0) -- (0, 1, 0) -- cycle;
\fill[fill=facecolor, draw=black, shift={(1,0,0)}]
(0, 0, 0) -- (0, 0, 1) -- (1, 0, 1) -- (1, 0, 0) -- cycle;
\end{tikzpicture}\,}% \in \LFS$ works;
\item
if $X = \myvcenter{%
\begin{tikzpicture}
%[x={(-0.216506cm,-0.125000cm)}, y={(0.216506cm,-0.125000cm)}, z={(0.000000cm,0.250000cm)}]
[x={(-0.173205cm,-0.100000cm)}, y={(0.173205cm,-0.100000cm)}, z={(0.000000cm,0.200000cm)}]
\definecolor{facecolor}{rgb}{0.8,0.8,0.8}
\fill[fill=facecolor, draw=black, shift={(1,0,0)}]
(0, 0, 0) -- (1, 0, 0) -- (1, 1, 0) -- (0, 1, 0) -- cycle;
\fill[fill=facecolor, draw=black, shift={(0,0,0)}]
(0, 0, 0) -- (1, 0, 0) -- (1, 1, 0) -- (0, 1, 0) -- cycle;
\end{tikzpicture}\,}%$, then $Y = \myvcenter{%
\begin{tikzpicture}
%[x={(-0.216506cm,-0.125000cm)}, y={(0.216506cm,-0.125000cm)}, z={(0.000000cm,0.250000cm)}]
[x={(-0.173205cm,-0.100000cm)}, y={(0.173205cm,-0.100000cm)}, z={(0.000000cm,0.200000cm)}]
\definecolor{facecolor}{rgb}{0.800,0.800,0.800}
\fill[fill=facecolor, draw=black, shift={(0,0,0)}]
(0, 0, 0) -- (0, 0, 1) -- (1, 0, 1) -- (1, 0, 0) -- cycle;
\fill[fill=facecolor, draw=black, shift={(1,0,0)}]
(0, 0, 0) -- (1, 0, 0) -- (1, 1, 0) -- (0, 1, 0) -- cycle;
\fill[fill=facecolor, draw=black, shift={(0,0,0)}]
(0, 0, 0) -- (1, 0, 0) -- (1, 1, 0) -- (0, 1, 0) -- cycle;
\end{tikzpicture}\,}% \in \LFS$ works;
\item
if $X = \myvcenter{%
\begin{tikzpicture}
%[x={(-0.216506cm,-0.125000cm)}, y={(0.216506cm,-0.125000cm)}, z={(0.000000cm,0.250000cm)}]
[x={(-0.173205cm,-0.100000cm)}, y={(0.173205cm,-0.100000cm)}, z={(0.000000cm,0.200000cm)}]
\definecolor{facecolor}{rgb}{0.8,0.8,0.8}
\fill[fill=facecolor, draw=black, shift={(0,1,0)}]
(0, 0, 0) -- (1, 0, 0) -- (1, 1, 0) -- (0, 1, 0) -- cycle;
\fill[fill=facecolor, draw=black, shift={(0,0,0)}]
(0, 0, 0) -- (1, 0, 0) -- (1, 1, 0) -- (0, 1, 0) -- cycle;
\end{tikzpicture}\,}%$, then $Y = \myvcenter{%
\begin{tikzpicture}
%[x={(-0.216506cm,-0.125000cm)}, y={(0.216506cm,-0.125000cm)}, z={(0.000000cm,0.250000cm)}]
[x={(-0.173205cm,-0.100000cm)}, y={(0.173205cm,-0.100000cm)}, z={(0.000000cm,0.200000cm)}]
\definecolor{facecolor}{rgb}{0.800,0.800,0.800}
\fill[fill=facecolor, draw=black, shift={(0,0,0)}]
(0, 0, 0) -- (0, 1, 0) -- (0, 1, 1) -- (0, 0, 1) -- cycle;
\fill[fill=facecolor, draw=black, shift={(0,1,0)}]
(0, 0, 0) -- (1, 0, 0) -- (1, 1, 0) -- (0, 1, 0) -- cycle;
\fill[fill=facecolor, draw=black, shift={(0,0,0)}]
(0, 0, 0) -- (1, 0, 0) -- (1, 1, 0) -- (0, 1, 0) -- cycle;
\end{tikzpicture}\,}% \in \LFS$ works.
\end{itemize}
For the second row of the first table with $Q = \myvcenter{%
\begin{tikzpicture}
%[x={(-0.216506cm,-0.125000cm)}, y={(0.216506cm,-0.125000cm)}, z={(0.000000cm,0.250000cm)}]
[x={(-0.173205cm,-0.100000cm)}, y={(0.173205cm,-0.100000cm)}, z={(0.000000cm,0.200000cm)}]
\definecolor{facecolor}{rgb}{0.8,0.8,0.8}
\fill[fill=facecolor, draw=black, shift={(0,0,0)}]
(0, 0, 0) -- (0, 1, 0) -- (0, 1, 1) -- (0, 0, 1) -- cycle;
\fill[fill=facecolor, draw=black, shift={(0,1,0)}]
(0, 0, 0) -- (0, 1, 0) -- (0, 1, 1) -- (0, 0, 1) -- cycle;
\end{tikzpicture}\,}%$,
$P$ is strongly $\LFS$-covered
so we have $Q \subseteq Y_0 \subseteq P$
with $Y_0 = \myvcenter{%
\begin{tikzpicture}
%[x={(-0.216506cm,-0.125000cm)}, y={(0.216506cm,-0.125000cm)}, z={(0.000000cm,0.250000cm)}]
[x={(-0.173205cm,-0.100000cm)}, y={(0.173205cm,-0.100000cm)}, z={(0.000000cm,0.200000cm)}]
\definecolor{facecolor}{rgb}{0.800,0.800,0.800}
\fill[fill=facecolor, draw=black, shift={(0,0,0)}]
(0, 0, 0) -- (0, 0, 1) -- (1, 0, 1) -- (1, 0, 0) -- cycle;
\fill[fill=facecolor, draw=black, shift={(0,0,0)}]
(0, 0, 0) -- (1, 0, 0) -- (1, 1, 0) -- (0, 1, 0) -- cycle;
\fill[fill=facecolor, draw=black, shift={(1,0,0)}]
(0, 0, 0) -- (0, 0, 1) -- (1, 0, 1) -- (1, 0, 0) -- cycle;
\end{tikzpicture}\,}% \in \LFS$.
It follows that $X \subseteq \Sigma_1(Y_0) = \myvcenter{%
\begin{tikzpicture}
%[x={(-0.216506cm,-0.125000cm)}, y={(0.216506cm,-0.125000cm)}, z={(0.000000cm,0.250000cm)}]
[x={(-0.173205cm,-0.100000cm)}, y={(0.173205cm,-0.100000cm)}, z={(0.000000cm,0.200000cm)}]
\definecolor{facecolor}{rgb}{0.800,0.800,0.800}
\fill[fill=facecolor, draw=black, shift={(0,0,0)}]
(0, 0, 0) -- (0, 0, 1) -- (1, 0, 1) -- (1, 0, 0) -- cycle;
\fill[fill=facecolor, draw=black, shift={(0,0,0)}]
(0, 0, 0) -- (1, 0, 0) -- (1, 1, 0) -- (0, 1, 0) -- cycle;
\fill[fill=facecolor, draw=black, shift={(1,0,0)}]
(0, 0, 0) -- (0, 0, 1) -- (1, 0, 1) -- (1, 0, 0) -- cycle;
\end{tikzpicture}\,}%$,
so taking $Y = \Sigma_1(Y_0) \in \LFS$ works.
The cases of the second and fourth rows of the second table can be dealt with similarly.

In the last row of the third table with we have $Q = \myvcenter{%
\begin{tikzpicture}
%[x={(-0.216506cm,-0.125000cm)}, y={(0.216506cm,-0.125000cm)}, z={(0.000000cm,0.250000cm)}]
[x={(-0.173205cm,-0.100000cm)}, y={(0.173205cm,-0.100000cm)}, z={(0.000000cm,0.200000cm)}]
\definecolor{facecolor}{rgb}{0.8,0.8,0.8}
\fill[fill=facecolor, draw=black, shift={(0,0,0)}]
(0, 0, 0) -- (0, 0, 1) -- (1, 0, 1) -- (1, 0, 0) -- cycle;
\fill[fill=facecolor, draw=black, shift={(1,0,1)}]
(0, 0, 0) -- (0, 0, 1) -- (1, 0, 1) -- (1, 0, 0) -- cycle;
\end{tikzpicture}\,}%$,
so \myvcenter{%
\begin{tikzpicture}
%[x={(-0.216506cm,-0.125000cm)}, y={(0.216506cm,-0.125000cm)}, z={(0.000000cm,0.250000cm)}]
[x={(-0.173205cm,-0.100000cm)}, y={(0.173205cm,-0.100000cm)}, z={(0.000000cm,0.200000cm)}]
\definecolor{facecolor}{rgb}{0.8,0.8,0.8}
\fill[fill=facecolor, draw=black, shift={(0,0,1)}]
(0, 0, 0) -- (0, 0, 1) -- (1, 0, 1) -- (1, 0, 0) -- cycle;
\fill[fill=facecolor, draw=black, shift={(0,0,0)}]
(0, 0, 0) -- (0, 0, 1) -- (1, 0, 1) -- (1, 0, 0) -- cycle;
\end{tikzpicture}\,}% must appear in $\bfGa$,
which is forbidden by assumption.
This can be seen by using Definition~\ref{def:stepped}
to compute the only possible ``completion'' of $Q$ within $\bfGa$ (shown in dark gray):
\myvcenter{\begin{tikzpicture}
[x={(-0.216506cm,-0.125000cm)}, y={(0.216506cm,-0.125000cm)}, z={(0.000000cm,0.250000cm)}]
\definecolor{facecolor}{rgb}{0.000,1.000,0.000}
\definecolor{facecolor}{rgb}{0.800,0.800,0.800}
\fill[fill=facecolor, draw=black, shift={(0,0,0)}]
(0, 0, 0) -- (0, 0, 1) -- (1, 0, 1) -- (1, 0, 0) -- cycle;
\fill[fill=facecolor, draw=black, shift={(1,0,1)}]
(0, 0, 0) -- (0, 0, 1) -- (1, 0, 1) -- (1, 0, 0) -- cycle;
\definecolor{facecolor}{rgb}{0.3500,0.3500,0.3500}
\fill[fill=facecolor, draw=black, shift={(0,0,1)}]
(0, 0, 0) -- (0, 0, 1) -- (1, 0, 1) -- (1, 0, 0) -- cycle;
\end{tikzpicture}}\,.
In all the remaining cases, $Q$ is a pattern which is not allowed in $\bfGa$ by assumption,
so they can be ignored.
\end{proof}

\subsection{Annuli and dual substitutions}
\label{subsec:annulus}

The proof of the following proposition (by induction) relies on Lemma~\ref{lem:annulus_base} (base case) and on Lemma~\ref{lem:annulus_induction} (induction step).
%The following proposition is proved by induction using Lemma~\ref{lem:annulus_base} (base case) and Lemma~\ref{lem:annulus_induction} (induction step).
We recall that
$\mcU = [\mathbf 0,1]^\star \cup [\mathbf 0,2]^\star \cup [\mathbf 0,3]^\star =  \myvcenter{%
    \begin{tikzpicture}
    %[x={(-0.216506cm,-0.125000cm)}, y={(0.216506cm,-0.125000cm)}, z={(0.000000cm,0.250000cm)}]
    [x={(-0.173205cm,-0.100000cm)}, y={(0.173205cm,-0.100000cm)}, z={(0.000000cm,0.200000cm)}]
    \fill[fill=facecolor, draw=black, shift={(0,0,0)}]
    (0, 0, 0) -- (0, 1, 0) -- (0, 1, 1) -- (0, 0, 1) -- cycle;
    \fill[fill=facecolor, draw=black, shift={(0,0,0)}]
    (0, 0, 0) -- (0, 0, 1) -- (1, 0, 1) -- (1, 0, 0) -- cycle;
    \fill[fill=facecolor, draw=black, shift={(0,0,0)}]
    (0, 0, 0) -- (1, 0, 0) -- (1, 1, 0) -- (0, 1, 0) -- cycle;
    \node[circle,fill=black,draw=black,minimum size=1mm,inner sep=0pt] at (0,0,0) {};
\end{tikzpicture}}$.

\begin{prop}
\label{prop:annulus}
Let $(\Sigma_i)_{n\in\bbN}$ be a sequence with values in $\{\SFSi,\SFSii,\SFSiii\}$
such that $\SFSiii$ occurs infinitely often,
and let $k \in \bbN$ be such that $(\Sigma_1, \ldots, \Sigma_k)$ contains $\SFSiii$ at least four times.

Then for every $\ell \geq 1$,
$\Sigma_1 \cdots\Sigma_{k+\ell}(\mcU) \setminus \Sigma_1 \cdots \Sigma_\ell(\mcU)$
is an $\LFS$-annulus of $\Sigma_1 \cdots \Sigma_\ell(\mcU)$
in the stepped plane $\Sigma_1 \cdots \Sigma_{k+\ell}(\bfGa_{(1,1,1)})$.
\end{prop}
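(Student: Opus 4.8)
The plan is to follow the scheme announced just before the statement: an induction on $\ell$ whose base case is Lemma~\ref{lem:annulus_base} and whose inductive step is Lemma~\ref{lem:annulus_induction}. The first thing I would record is a purely formal reduction. By Proposition~\ref{prop::imgplane} the images of two distinct faces of a stepped plane have no common unit face, so each $\Sigma_i$ acts injectively on the faces of a stepped plane and therefore commutes with set difference: $\Sigma_i(Q' \setminus Q) = \Sigma_i(Q') \setminus \Sigma_i(Q)$ whenever $Q \subseteq Q'$ lie in a stepped plane. Factoring $\Sigma_1 \cdots \Sigma_{k+\ell} = (\Sigma_1 \cdots \Sigma_\ell) \circ (\Sigma_{\ell+1} \cdots \Sigma_{k+\ell})$ and iterating this identity, the pattern in the statement equals
\[
(\Sigma_1 \cdots \Sigma_\ell)\big( \Sigma_{\ell+1} \cdots \Sigma_{k+\ell}(\mcU) \setminus \mcU \big),
\]
and the ambient plane $\Sigma_1 \cdots \Sigma_{k+\ell}(\bfGa_{(1,1,1)})$ is the image under $\Sigma_1 \cdots \Sigma_\ell$ of $\Sigma_{\ell+1} \cdots \Sigma_{k+\ell}(\bfGa_{(1,1,1)})$. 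This reduces the claim to transporting an $\LFS$-annulus of $\mcU$ outward through the dual substitutions $\Sigma_\ell, \Sigma_{\ell-1}, \dots, \Sigma_1$.

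For the base case I would invoke Lemma~\ref{lem:annulus_base} to obtain that $\Sigma_{\ell+1} \cdots \Sigma_{k+\ell}(\mcU) \setminus \mcU$ is an $\LFS$-annulus of $\mcU$ inside $\Sigma_{\ell+1} \cdots \Sigma_{k+\ell}(\bfGa_{(1,1,1)})$. This is where the arithmetic hypothesis is consumed: the corresponding window of the sequence must contain four occurrences of $\SFSiii$, and the four copies of $\SFSiii$ are precisely what force the enlarged pattern to wrap entirely around the seed $\mcU$. It is this wrapping that yields the two topological conditions~\ref{defi:anneauprop3} and~\ref{defi:anneauprop4} of Definition~\ref{defi:annulus}, and not merely the combinatorial covering conditions~\ref{defi:anneauprop1} and~\ref{defi:anneauprop2}. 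Matching this four-threes window to the fixed constant $k$ for every $\ell$ is the bookkeeping point at which the assumption that $\SFSiii$ occurs infinitely often is needed, so that a suitable block of four occurrences is always available.

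The inductive step is a single application of Lemma~\ref{lem:annulus_induction}, iterated: if $A$ is an $\LFS$-annulus of $P$ in a stepped plane $\bfGa$, the lemma gives that $\Sigma_i(A)$ is an $\LFS$-annulus of $\Sigma_i(P)$ in $\Sigma_i(\bfGa)$. Applying this successively with $\Sigma_\ell, \Sigma_{\ell-1}, \dots, \Sigma_1$ carries the base annulus out to the one in the statement. At each stage I must verify the standing hypothesis of Lemma~\ref{lem:annulus_induction}, namely that the stepped plane to which $\Sigma_i$ is applied avoids the four forbidden two-face patterns of Lemmas~\ref{lemm:forbidden} and~\ref{lemm:strongcovFS}. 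This holds throughout the tower: every plane occurring here is $\Sigma_{j+1}\cdots\Sigma_{k+\ell}(\bfGa_{(1,1,1)})$ with at least one (indeed $k \ge 1$) dual substitution already applied, hence by Lemma~\ref{lemm:forbidden} it is an image that contains none of the forbidden patterns, and this property persists under further substitutions.

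The main obstacle is concentrated in the two auxiliary lemmas, not in the assembly. The inductive step (Lemma~\ref{lem:annulus_induction}) is the delicate part: conditions~\ref{defi:anneauprop1} and~\ref{defi:anneauprop2} are delivered by the $\LFS$-covering and strong $\LFS$-covering results (Lemmas~\ref{lemm:covFS} and~\ref{lemm:strongcovFS}), but the topological conditions~\ref{defi:anneauprop3} and~\ref{defi:anneauprop4} require showing that $\Sigma_i$ neither fuses the inner pattern with the exterior nor opens a new contact across the annulus. This is exactly what the strong covering is designed to guarantee — every edge-connected two-face pattern met inside the annulus is completed within it — and it is where the forbidden-pattern analysis of Lemma~\ref{lemm:forbidden} becomes indispensable. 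The other sticking point is the base case itself: one must check, by a direct geometric analysis of how $\SFSiii$ enlarges $\mcU$, that four applications genuinely close the annulus around the seed, since fewer would leave a gap through which $\mcU$ could touch the complement.
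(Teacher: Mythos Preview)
Your reduction and the iterated use of Lemma~\ref{lem:annulus_induction} are fine, but the way you set up the base case is not what the paper does, and it introduces a real gap. You anchor the argument, for each fixed $\ell$, at the inner pattern $\Sigma_{\ell+1}\cdots\Sigma_{k+\ell}(\mcU)\setminus\mcU$ and then invoke Lemma~\ref{lem:annulus_base}. But that lemma requires the product to contain $\SFSiii$ at least four times, and the hypothesis of the proposition only guarantees this for the \emph{fixed} prefix $(\Sigma_1,\ldots,\Sigma_k)$, not for the shifted window $(\Sigma_{\ell+1},\ldots,\Sigma_{k+\ell})$. Your sentence ``a suitable block of four occurrences is always available'' does not rescue this: infinitely many occurrences of $\SFSiii$ do not force every length-$k$ window to contain four of them (take for instance $\Sigma_3^4\Sigma_1^N\Sigma_3^4\cdots$ with $N>k$). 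So for general $\ell$ your base case simply fails.

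The paper avoids this by organising the argument as a genuine induction on $\ell$ with a \emph{single} base case at $\ell=0$: there Lemma~\ref{lem:annulus_base} applies directly because $(\Sigma_1,\ldots,\Sigma_k)$ contains four copies of $\SFSiii$ by hypothesis. The four-threes assumption is consumed once and for all at $\ell=0$; the passage from $\ell$ to $\ell+1$ then relies only on Lemma~\ref{lem:annulus_induction} (together with Lemma~\ref{lemm:forbidden} to ensure the ambient plane avoids the forbidden patterns), and never again needs to count occurrences of $\SFSiii$. In particular, the hypothesis ``$\SFSiii$ occurs infinitely often'' is not used here at all---it is used elsewhere, in the proof of Proposition~\ref{prop:pn_union}. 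If you want to keep your factorisation viewpoint, you should rewrite the induction so that the seed annulus is produced once from $\Sigma_1\cdots\Sigma_k$ and then pushed outward; as written, your scheme asks Lemma~\ref{lem:annulus_base} to do work it cannot do.
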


\begin{proof}
We prove the result by induction on $\ell$.
The case $\ell=0$ (\emph{i.e.}, $\Sigma_1 \cdots\Sigma_k(\mcU) \setminus \mcU$
is an annulus of $\mcU$) is settled by Lemma~\ref{lem:annulus_base}.
Now, assume that the induction property holds for some $\ell \in \bbN$.
The pattern $\Sigma_1 \cdots \Sigma_{k + \ell}(\mcU)$
is contained in the stepped plane $\Sigma_{k + \ell}(\bfGa_{(1,1,1)})$,
so it does not contain any of the patterns forbidden by Lemma~\ref{lemm:forbidden}.
We can then apply Lemma~\ref{lem:annulus_induction}
to deduce that $\Sigma_1 \cdots \Sigma_{k + \ell + 1}(\mcU) \setminus \Sigma_1 \cdots \Sigma_{\ell + 1}(\mcU)$
is an $\LFS$-annulus of $\Sigma_1 \cdots \Sigma_{\ell + 1}(\mcU)$.
\end{proof}

\begin{lemm}
\label{lem:annulus_base}
Let $\Sigma$ be a product of $\Sigma_1$, $\Sigma_2$ and $\Sigma_3$
such that $\Sigma_3$ appears at least four times.
Then $\Sigma(\mcU) \setminus \mcU$ is an $\LFS$-annulus of $\mcU$
in $\Sigma(\bfGa_{(1,1,1)})$.
\end{lemm}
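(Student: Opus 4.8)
The plan is to verify the four defining conditions of an $\LFS$-annulus (Definition~\ref{defi:annulus}) for the triple $P=\mcU$, $A=\Sigma(\mcU)\setminus\mcU$ inside $\bfGa:=\Sigma(\bfGa_{(1,1,1)})$. First I would record the set-theoretic identities that make the conditions tractable: since $\mcU\subseteq\bfGa_{(1,1,1)}$ and $\mcU\subseteq\Sigma(\mcU)$ (the latter exactly as in the proof of Proposition~\ref{prop:pn_tn}), we have $A\cup P=\Sigma(\mcU)$, and because the images of distinct faces of a stepped plane share no unit face (Proposition~\ref{prop::imgplane}), $\bfGa\setminus(A\cup P)=\Sigma(\bfGa_{(1,1,1)}\setminus\mcU)$. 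Condition~(3) (that $A$ and $P$ have no face in common) is then immediate from $A=\Sigma(\mcU)\setminus\mcU$.

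For the covering condition~(1), it suffices to check by direct inspection that $\mcU$ and $\bfGa_{(1,1,1)}\setminus\mcU$ are $\LFS$-covered (the three faces of $\mcU$ are pairwise edge-adjacent and each such pair already belongs to $\LFS$, and the staircase plane of normal $(1,1,1)$ minus its corner $\mcU$ remains covered), and then to push these coverings forward through each factor of $\Sigma$ using Lemma~\ref{lemm:covFS}. This yields that $P$, $A\cup P=\Sigma(\mcU)$ and $\bfGa\setminus(A\cup P)=\Sigma(\bfGa_{(1,1,1)}\setminus\mcU)$ are all $\LFS$-covered.

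The heart of the argument is conditions~(2) and~(4), and this is where the hypothesis that $\SFSiii$ occurs at least four times is used. For strong covering I would first note that $\mcU$ is strongly $\LFS$-covered (each edge-connected pair inside it is itself a member of $\LFS$), that $\bfGa_{(1,1,1)}$ avoids the four patterns forbidden in Lemma~\ref{lemm:forbidden} (two parallel faces cannot be stacked in a plane of normal $(1,1,1)$, as their distinguished vertices would have distinct scalar products with $(1,1,1)$), and hence that these patterns are avoided throughout $\Sigma(\bfGa_{(1,1,1)})$ by Lemma~\ref{lemm:forbidden}; Lemma~\ref{lemm:strongcovFS} then propagates strong covering and gives that $\Sigma(\mcU)$ is strongly $\LFS$-covered. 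It remains to transfer this from $\Sigma(\mcU)$ to the ring $A$: for every $X\in\Ledge$ with $X\subseteq A$ one obtains a completion $Y\in\LFS$ with $X\subseteq Y\subseteq\Sigma(\mcU)$, and one must show $Y\cap\mcU=\varnothing$, i.e. that $X$ lies far enough from $\mcU$ that its completion never dips into $\mcU$. Moreover condition~(4) can be reformulated as the statement that every face of $\bfGa$ touching $\mcU$ (sharing an edge or a vertex) already belongs to $\Sigma(\mcU)$, i.e. that $\Sigma(\mcU)$ contains the entire closed neighbourhood of $\mcU$ in $\bfGa$. Both assertions reduce to one and the same thickness estimate: $\Sigma(\mcU)$ must surround $\mcU$ by several full ``layers''.

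The main obstacle is precisely this thickness/enclosure estimate, which is genuinely geometric rather than formal. Here I would exploit that, among $\SFSi,\SFSii,\SFSiii$, only $\SFSiii$ sends a type-$3$ face to a full corner of three faces and therefore enlarges a neighbourhood of the origin in every direction, whereas $\SFSi$ and $\SFSii$ merely reshape it; four occurrences of $\SFSiii$ then add enough concentric layers to guarantee both that no $\LFS$-completion of a boundary edge-pattern of $A$ can reach $\mcU$ and that every neighbour of $\mcU$ in $\bfGa$ has been produced inside $\Sigma(\mcU)$. Concretely I expect to track, via the explicit preimage formulas of Lemma~\ref{lemm:preimages}, the finitely many faces of $\bfGa$ incident to $\mcU$ and to show by a layer-by-layer analysis of the action of $\SFSiii$ that four applications suffice; this finite but delicate verification is the part I expect to require the most care.
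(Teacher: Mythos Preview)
Your plan is sound and your reductions for conditions~(1), (3), and~(4) are correct; in fact you are spelling out what the paper leaves implicit. The paper's proof consists entirely of a five-node directed graph on explicit patterns $\mcU=Q_0,Q_1,\ldots,Q_4$ (drawn as figures), whose edges record containments of the form $Q_{j'}\subseteq\Sigma_i(Q_j)$. The graph shows that $\Sigma_1,\Sigma_2$ loop at every node while each $\Sigma_3$ advances one step, so any product with at least four factors $\Sigma_3$ gives $Q_4\subseteq\Sigma(\mcU)$. From the picture of $Q_4$ one sees that it already contains a full closed neighbourhood of $\mcU$ in the stepped plane; this is precisely your ``thickness estimate'', and the paper carries it out by drawing the picture rather than by tracking preimages through Lemma~\ref{lemm:preimages}. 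The structural ingredients you list (propagation of covering and of strong covering, the identity $\bfGa\setminus\Sigma(\mcU)=\Sigma(\bfGa_{(1,1,1)}\setminus\mcU)$) are not written out in the paper's proof but are exactly the pieces needed to turn the picture into a formal verification.

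One point deserves more care than you give it. For condition~(2) you plan to transfer strong $\LFS$-covering from $\Sigma(\mcU)$ to $A=\Sigma(\mcU)\setminus\mcU$, and you address the completion clause. But strong covering also requires that $A$ itself be $\LFS$-covered, meaning any two faces of $A$ must be joined by a chain of $\LFS$-patterns lying \emph{entirely in $A$}. The chains produced by Lemma~\ref{lemm:covFS} for $\Sigma(\mcU)$ may well pass through $\mcU$, so this does not follow from the propagation lemmas alone. It does follow once you know $Q_4\subseteq\Sigma(\mcU)$ and verify by hand that $Q_4\setminus\mcU$ is an $\LFS$-covered ring to which the remainder of $A$ attaches; but you should list this among the items to be checked in the final finite verification rather than as a by-product of Lemmas~\ref{lemm:covFS} and~\ref{lemm:strongcovFS}.
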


\begin{proof}
Below, ``$P \stackrel{i}{\rightarrow} Q$'' means that
$Q \subseteq \Sigma_i(P)$ so the result follows.
\vspace{-0.75em}
\begin{center}
\begin{tikzpicture}[x=1cm,y=0.85cm]
\node at (0,0) (0) {\includegraphics[scale=0.85]{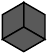}};
\node at (2,0) (1) {\includegraphics[scale=0.85]{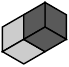}};
\node at (4,0) (2) {\includegraphics[scale=0.85]{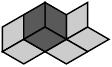}};
\node at (6,0) (3) {\includegraphics[scale=0.85]{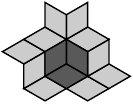}};
\node at (8,0) (4) {\includegraphics[scale=0.85]{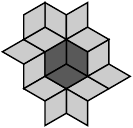}};

\path[->, thick] (0) edge node [above] {\small$1,2,3$} (1);
\path[->, thick] (1) edge node [above] {\small$3$} (2);
\path[->, thick] (2) edge node [above] {\small$3$} (3);
\path[->, thick] (3) edge node [above] {\small$3$} (4);

\draw[->, thick] (1) .. controls +(120:10mm) and +(60:10mm) .. node [above] {\small$1, 2$} (1);
\draw[->, thick] (2) .. controls +(120:10mm) and +(60:10mm) .. node [above] {\small$1, 2$} (2);
\draw[->, thick] (3) .. controls +(120:12mm) and +(60:12mm) .. node [above] {\small$1, 2$} (3);
\draw[->, thick] (4) .. controls +(120:13mm) and +(60:13mm) .. node [above] {\small$1, 2, 3$} (4);
\end{tikzpicture}
\end{center}
\vspace{-2.5em}
\end{proof}

\begin{lemm}
\label{lem:annulus_induction}
Let $\bfGa$ be a stepped plane that avoids
\myvcenter{%
\begin{tikzpicture}
%[x={(-0.216506cm,-0.125000cm)}, y={(0.216506cm,-0.125000cm)}, z={(0.000000cm,0.250000cm)}]
[x={(-0.173205cm,-0.100000cm)}, y={(0.173205cm,-0.100000cm)}, z={(0.000000cm,0.200000cm)}]
\definecolor{facecolor}{rgb}{0.8,0.8,0.8}
\fill[fill=facecolor, draw=black, shift={(0,0,0)}]
(0, 0, 0) -- (0, 1, 0) -- (0, 1, 1) -- (0, 0, 1) -- cycle;
\fill[fill=facecolor, draw=black, shift={(0,1,0)}]
(0, 0, 0) -- (0, 1, 0) -- (0, 1, 1) -- (0, 0, 1) -- cycle;
\end{tikzpicture}\,}%,
\myvcenter{%
\begin{tikzpicture}
%[x={(-0.216506cm,-0.125000cm)}, y={(0.216506cm,-0.125000cm)}, z={(0.000000cm,0.250000cm)}]
[x={(-0.173205cm,-0.100000cm)}, y={(0.173205cm,-0.100000cm)}, z={(0.000000cm,0.200000cm)}]
\definecolor{facecolor}{rgb}{0.8,0.8,0.8}
\fill[fill=facecolor, draw=black, shift={(0,0,0)}]
(0, 0, 0) -- (0, 1, 0) -- (0, 1, 1) -- (0, 0, 1) -- cycle;
\fill[fill=facecolor, draw=black, shift={(0,0,1)}]
(0, 0, 0) -- (0, 1, 0) -- (0, 1, 1) -- (0, 0, 1) -- cycle;
\end{tikzpicture}\,}%,
\myvcenter{%
\begin{tikzpicture}
%[x={(-0.216506cm,-0.125000cm)}, y={(0.216506cm,-0.125000cm)}, z={(0.000000cm,0.250000cm)}]
[x={(-0.173205cm,-0.100000cm)}, y={(0.173205cm,-0.100000cm)}, z={(0.000000cm,0.200000cm)}]
\definecolor{facecolor}{rgb}{0.8,0.8,0.8}
\fill[fill=facecolor, draw=black, shift={(0,0,1)}]
(0, 0, 0) -- (0, 0, 1) -- (1, 0, 1) -- (1, 0, 0) -- cycle;
\fill[fill=facecolor, draw=black, shift={(0,0,0)}]
(0, 0, 0) -- (0, 0, 1) -- (1, 0, 1) -- (1, 0, 0) -- cycle;
\end{tikzpicture}\,}%
and \myvcenter{%
\begin{tikzpicture}
%[x={(-0.216506cm,-0.125000cm)}, y={(0.216506cm,-0.125000cm)}, z={(0.000000cm,0.250000cm)}]
[x={(-0.173205cm,-0.100000cm)}, y={(0.173205cm,-0.100000cm)}, z={(0.000000cm,0.200000cm)}]
\definecolor{facecolor}{rgb}{0.8,0.8,0.8}
\fill[fill=facecolor, draw=black, shift={(0,0,0)}]
(0, 0, 0) -- (1, 0, 0) -- (1, 1, 0) -- (0, 1, 0) -- cycle;
\fill[fill=facecolor, draw=black, shift={(1,1,0)}]
(0, 0, 0) -- (1, 0, 0) -- (1, 1, 0) -- (0, 1, 0) -- cycle;
\end{tikzpicture}\,}%.
Let $A \subseteq \bfGa$ be an $\LFS$-annulus of a pattern $P \subseteq \bfGa$,
and let $\Sigma = \Sigma_i$ for some $i \in \{1,2,3\}$.
Then $\Sigma(A)$ is an $\LFS$-annulus of $\Sigma(P)$ in the stepped plane $\Sigma(\bfGa)$.
\end{lemm}

\begin{proof}
We must prove the following:
\begin{enumerate}
  \item $\Sigma(P)$, $\Sigma(A) \cup \Sigma(P)$ and $\bfGa \setminus (\Sigma(A) \cup \Sigma(P))$ are $\LFS$-covered;
  \item $\Sigma(A)$ is strongly $\LFS$-covered;
  \item $\Sigma(A)$ and $\Sigma(P)$ have no face in common;
  \item $\Sigma(P) \cap \overline{\Sigma(\bfGa) \setminus (\Sigma(P) \cup \Sigma(A))} = \varnothing$.
\end{enumerate}
Conditions~\ref{defi:anneauprop1} and~\ref{defi:anneauprop3} hold
thanks to Lemma~\ref{lemm:coverprop} and Proposition~\ref{prop::imgplane} respectively,
and~\ref{defi:anneauprop2} holds thanks to Lemma~\ref{lemm:strongcovFS}.
It remains to prove that~\ref{defi:anneauprop4} holds.

Suppose that~\ref{defi:anneauprop4} does not hold.
This implies that there exist faces
$
f \in P,
g \in \bfGa \setminus (A \cup P),
f' \in \Sigma(f)$
and $g' \in \Sigma(g)$
such that $f'$ and $g'$ have a nonempty intersection.
Also, $f \cup g$ must be disconnected because $P$ and $\overline{\bfGa \setminus (P \cup A)}$
have empty intersection by hypothesis.

%\begin{figure}[ht]
%\centering
%    \myvcenter{\includegraphics[scale=0.85]{}} \qquad
%    \myvcenter{\includegraphics[scale=0.85]{}} \qquad
%    \myvcenter{\includegraphics[scale=0.85]{}}
%\caption[]{The patterns $A \cup P$ (left) and $\Sigma(A) \cup \Sigma(P)$ (middle or right),
%    under the assumption that Condition~\ref{defi:anneauprop4} does not hold.}
%\label{fig:lemcases}
%\end{figure}

The strategy of the proof is as follows:
we check all the possible patterns $f \cup g$ and $f' \cup g'$ as above,
and for each case we derive a contradiction.
This can be done by inspection of a finite number of cases.
Indeed, there are $36$ possibilities for $f' \cup g'$ up to translation
(the number of connected two-face patterns that share a vertex or an edge),
and each of these patterns has a finite number of two-face preimages.

The first patterns $f' \cup g'$ which have disconnected preimages are
$f' \cup g' = [\mathbf 0, 3]^\star \cup [(1, 1, 0), 3]^\star$
or $[\mathbf 0, 2]^\star \cup [(1, -1, 1), 1]^\star$
or $[\mathbf 0, 2]^\star \cup [(1, 0, 1), 2]^\star$.
%\[
%\begin{array}{rcrcl}
%f' \cup g' & = &
%\input{fig/V33a.tex} & = & [\mathbf 0, 3]^\star \cup [(1, 1, 0), 3]^\star \\
%f' \cup g' & = &
%\myvcenter{\begin{tikzpicture}
%    %[x={(-0.216506cm,-0.125000cm)}, y={(0.216506cm,-0.125000cm)}, z={(0.000000cm,0.250000cm)}]
%    [x={(-0.173205cm,-0.100000cm)}, y={(0.173205cm,-0.100000cm)}, z={(0.000000cm,0.200000cm)}]
%    \definecolor{facecolor}{rgb}{0.8,0.8,0.8}
%    \fill[fill=facecolor, draw=black, shift={(0,0,0)}]
%    (0, 0, 0) -- (0, 1, 0) -- (0, 1, 1) -- (0, 0, 1) -- cycle;
%    \fill[fill=facecolor, draw=black, shift={(-1,1,-1)}]
%    (0, 0, 0) -- (0, 0, 1) -- (1, 0, 1) -- (1, 0, 0) -- cycle;
%    \end{tikzpicture}}
%& = & [\mathbf 0, 2]^\star \cup [(1, -1, 1), 1]^\star \\
%f' \cup g' & = &
%\input{fig/V22a.tex} & = & [\mathbf 0, 2]^\star \cup [(1, 0, 1), 2]^\star
%\end{array}.
%\]
These cases can be ignored thanks to Lemma~\ref{lemm:forbidden}:
the first case (\,\myvcenter{%
\begin{tikzpicture}
%[x={(-0.216506cm,-0.125000cm)}, y={(0.216506cm,-0.125000cm)}, z={(0.000000cm,0.250000cm)}]
[x={(-0.173205cm,-0.100000cm)}, y={(0.173205cm,-0.100000cm)}, z={(0.000000cm,0.200000cm)}]
\definecolor{facecolor}{rgb}{0.8,0.8,0.8}
\fill[fill=facecolor, draw=black, shift={(0,0,0)}]
(0, 0, 0) -- (1, 0, 0) -- (1, 1, 0) -- (0, 1, 0) -- cycle;
\fill[fill=facecolor, draw=black, shift={(1,1,0)}]
(0, 0, 0) -- (1, 0, 0) -- (1, 1, 0) -- (0, 1, 0) -- cycle;
\end{tikzpicture}\,}%) is forbidden by assumption.
In the second case, Definition~\ref{def:stepped} implies that if a stepped plane contains
$f' \cup g'$, then it contains the face $[(0,0,1),2]^\star$ shown in dark gray
\myvcenter{\begin{tikzpicture}
[x={(-0.173205cm,-0.100000cm)}, y={(0.173205cm,-0.100000cm)}, z={(0.000000cm,0.200000cm)}]
%[x={(-0.216506cm,-0.125000cm)}, y={(0.216506cm,-0.125000cm)}, z={(0.000000cm,0.250000cm)}]
\definecolor{facecolor}{rgb}{0.8,0.8,0.8}
\fill[fill=facecolor, draw=black, shift={(1,-1,1)}]
(0, 0, 0) -- (0, 1, 0) -- (0, 1, 1) -- (0, 0, 1) -- cycle;
\fill[fill=facecolor, draw=black, shift={(0,0,0)}]
(0, 0, 0) -- (0, 0, 1) -- (1, 0, 1) -- (1, 0, 0) -- cycle;
\definecolor{facecolor}{rgb}{0.35,0.35,0.35}
\fill[fill=facecolor, draw=black, shift={(0,0,0)}]
(0, 0, 1) -- (0, 0, 2) -- (1, 0, 2) -- (1, 0, 1) -- cycle;
\end{tikzpicture}}\,.
This contains a pattern ruled out by Lemma~\ref{lemm:forbidden},
which settles this case.
The third case can be treated in the same way.

Another possibility is
$f' \cup g' = [\mathbf 0, 2]^\star \cup [(1, -1, 1), 3]^\star$,
%$f' \cup g' =
%\myvcenter{\begin{tikzpicture}
%%[x={(-0.216506cm,-0.125000cm)}, y={(0.216506cm,-0.125000cm)}, z={(0.000000cm,0.250000cm)}]
%[x={(-0.173205cm,-0.100000cm)}, y={(0.173205cm,-0.100000cm)}, z={(0.000000cm,0.200000cm)}]
%\definecolor{facecolor}{rgb}{0.8,0.8,0.8}
%\fill[fill=facecolor, draw=black, shift={(0,0,0)}]
%(0, 0, 0) -- (0, 0, 1) -- (1, 0, 1) -- (1, 0, 0) -- cycle;
%\fill[fill=facecolor, draw=black, shift={(1,-1,1)}]
%(0, 0, 0) -- (1, 0, 0) -- (1, 1, 0) -- (0, 1, 0) -- cycle;
%\end{tikzpicture}}
%= [\mathbf 0, 2]^\star \cup [(1, -1, 1), 3]^\star$,
which admits six disconnected preimages (two for each $\Sigma_i$).
They are shown below (in light gray),
together with their only possible completion within a stepped plane
(in dark gray), which can be deduced from Definition~\ref{def:stepped}:
\[
\Sigma_1 :
\myvcenter{\begin{tikzpicture}
%[x={(-0.216506cm,-0.125000cm)}, y={(0.216506cm,-0.125000cm)}, z={(0.000000cm,0.250000cm)}]
[x={(-0.173205cm,-0.100000cm)}, y={(0.173205cm,-0.100000cm)}, z={(0.000000cm,0.200000cm)}]
\definecolor{facecolor}{rgb}{0.800,0.800,0.800}
\fill[fill=facecolor, draw=black, shift={(1,-1,1)}]
(0, 0, 0) -- (0, 1, 0) -- (0, 1, 1) -- (0, 0, 1) -- cycle;
\fill[fill=facecolor, draw=black, shift={(0,0,0)}]
(0, 0, 0) -- (0, 1, 0) -- (0, 1, 1) -- (0, 0, 1) -- cycle;
\definecolor{facecolor}{rgb}{0.35,0.35,0.35}
\fill[fill=facecolor, draw=black, shift={(0,0,0)}]
(0, 0, 0) -- (0, 0, 1) -- (1, 0, 1) -- (1, 0, 0) -- cycle;
\fill[fill=facecolor, draw=black, shift={(0,0,1)}]
(0, 0, 0) -- (0, 0, 1) -- (1, 0, 1) -- (1, 0, 0) -- cycle;
\end{tikzpicture}}\,, \
\myvcenter{\begin{tikzpicture}
%[x={(-0.216506cm,-0.125000cm)}, y={(0.216506cm,-0.125000cm)}, z={(0.000000cm,0.250000cm)}]
[x={(-0.173205cm,-0.100000cm)}, y={(0.173205cm,-0.100000cm)}, z={(0.000000cm,0.200000cm)}]
\definecolor{facecolor}{rgb}{0.800,0.800,0.800}
\fill[fill=facecolor, draw=black, shift={(-1,0,0)}]
(0, 0, 0) -- (0, 0, 1) -- (1, 0, 1) -- (1, 0, 0) -- cycle;
\fill[fill=facecolor, draw=black, shift={(1,-1,1)}]
(0, 0, 0) -- (0, 1, 0) -- (0, 1, 1) -- (0, 0, 1) -- cycle;
\definecolor{facecolor}{rgb}{0.35,0.35,0.35}
\fill[fill=facecolor, draw=black, shift={(0,0,0)}]
(0, 0, 0) -- (0, 0, 1) -- (1, 0, 1) -- (1, 0, 0) -- cycle;
\fill[fill=facecolor, draw=black, shift={(0,0,1)}]
(0, 0, 0) -- (0, 0, 1) -- (1, 0, 1) -- (1, 0, 0) -- cycle;
\end{tikzpicture}}
\qquad
\Sigma_2 :
\myvcenter{\begin{tikzpicture}
%[x={(-0.216506cm,-0.125000cm)}, y={(0.216506cm,-0.125000cm)}, z={(0.000000cm,0.250000cm)}]
[x={(-0.173205cm,-0.100000cm)}, y={(0.173205cm,-0.100000cm)}, z={(0.000000cm,0.200000cm)}]
\definecolor{facecolor}{rgb}{0.800,0.800,0.800}
\fill[fill=facecolor, draw=black, shift={(0,0,0)}]
(0, 0, 0) -- (0, 0, 1) -- (1, 0, 1) -- (1, 0, 0) -- cycle;
\fill[fill=facecolor, draw=black, shift={(-1,1,1)}]
(0, 0, 0) -- (0, 0, 1) -- (1, 0, 1) -- (1, 0, 0) -- cycle;
\definecolor{facecolor}{rgb}{0.35,0.35,0.35}
\fill[fill=facecolor, draw=black, shift={(0,0,0)}]
(0, 0, 0) -- (0, 1, 0) -- (0, 1, 1) -- (0, 0, 1) -- cycle;
\fill[fill=facecolor, draw=black, shift={(0,0,1)}]
(0, 0, 0) -- (0, 1, 0) -- (0, 1, 1) -- (0, 0, 1) -- cycle;
\end{tikzpicture}}\,, \
\myvcenter{\begin{tikzpicture}
%[x={(-0.216506cm,-0.125000cm)}, y={(0.216506cm,-0.125000cm)}, z={(0.000000cm,0.250000cm)}]
[x={(-0.173205cm,-0.100000cm)}, y={(0.173205cm,-0.100000cm)}, z={(0.000000cm,0.200000cm)}]
\definecolor{facecolor}{rgb}{0.800,0.800,0.800}
\fill[fill=facecolor, draw=black, shift={(0,-1,0)}]
(0, 0, 0) -- (0, 1, 0) -- (0, 1, 1) -- (0, 0, 1) -- cycle;
\fill[fill=facecolor, draw=black, shift={(-1,1,1)}]
(0, 0, 0) -- (0, 0, 1) -- (1, 0, 1) -- (1, 0, 0) -- cycle;
\definecolor{facecolor}{rgb}{0.35,0.35,0.35}
\fill[fill=facecolor, draw=black, shift={(0,0,0)}]
(0, 0, 0) -- (0, 1, 0) -- (0, 1, 1) -- (0, 0, 1) -- cycle;
\fill[fill=facecolor, draw=black, shift={(0,0,1)}]
(0, 0, 0) -- (0, 1, 0) -- (0, 1, 1) -- (0, 0, 1) -- cycle;
\end{tikzpicture}}
\qquad
\Sigma_3 :
\myvcenter{\begin{tikzpicture}
%[x={(-0.216506cm,-0.125000cm)}, y={(0.216506cm,-0.125000cm)}, z={(0.000000cm,0.250000cm)}]
[x={(-0.173205cm,-0.100000cm)}, y={(0.173205cm,-0.100000cm)}, z={(0.000000cm,0.200000cm)}]
\definecolor{facecolor}{rgb}{0.800,0.800,0.800}
\fill[fill=facecolor, draw=black, shift={(-1,1,1)}]
(0, 0, 0) -- (1, 0, 0) -- (1, 1, 0) -- (0, 1, 0) -- cycle;
\fill[fill=facecolor, draw=black, shift={(0,0,0)}]
(0, 0, 0) -- (1, 0, 0) -- (1, 1, 0) -- (0, 1, 0) -- cycle;
\definecolor{facecolor}{rgb}{0.35,0.35,0.35}
\fill[fill=facecolor, draw=black, shift={(0,0,0)}]
(0, 0, 0) -- (0, 1, 0) -- (0, 1, 1) -- (0, 0, 1) -- cycle;
\fill[fill=facecolor, draw=black, shift={(0,1,0)}]
(0, 0, 0) -- (0, 1, 0) -- (0, 1, 1) -- (0, 0, 1) -- cycle;
\end{tikzpicture}}\,, \
\myvcenter{\begin{tikzpicture}
%[x={(-0.216506cm,-0.125000cm)}, y={(0.216506cm,-0.125000cm)}, z={(0.000000cm,0.250000cm)}]
[x={(-0.173205cm,-0.100000cm)}, y={(0.173205cm,-0.100000cm)}, z={(0.000000cm,0.200000cm)}]
\definecolor{facecolor}{rgb}{0.800,0.800,0.800}
\fill[fill=facecolor, draw=black, shift={(-1,1,1)}]
(0, 0, 0) -- (1, 0, 0) -- (1, 1, 0) -- (0, 1, 0) -- cycle;
\fill[fill=facecolor, draw=black, shift={(0,0,-1)}]
(0, 0, 0) -- (0, 1, 0) -- (0, 1, 1) -- (0, 0, 1) -- cycle;
\definecolor{facecolor}{rgb}{0.35,0.35,0.35}
\fill[fill=facecolor, draw=black, shift={(0,0,0)}]
(0, 0, 0) -- (0, 1, 0) -- (0, 1, 1) -- (0, 0, 1) -- cycle;
\fill[fill=facecolor, draw=black, shift={(0,1,0)}]
(0, 0, 0) -- (0, 1, 0) -- (0, 1, 1) -- (0, 0, 1) -- cycle;
\end{tikzpicture}}\,.
\]
The patterns that appear in dark gray are forbidden by Lemma~\ref{lemm:forbidden},
so this case is settled.

The last two possibilities are
$f' \cup g' = [\mathbf 0, 3]^\star \cup [(1, 1, -1), 1]^\star$
or
$f' \cup g' = [\mathbf 0, 3]^\star \cup [(1, 1, -1), 2]^\star$.
%\[
%\begin{array}{rcrcl}
%f' \cup g' & = &
%\myvcenter{\begin{tikzpicture}
%%[x={(-0.216506cm,-0.125000cm)}, y={(0.216506cm,-0.125000cm)}, z={(0.000000cm,0.250000cm)}]
%[x={(-0.173205cm,-0.100000cm)}, y={(0.173205cm,-0.100000cm)}, z={(0.000000cm,0.200000cm)}]
%\definecolor{facecolor}{rgb}{0.8,0.8,0.8}
%\fill[fill=facecolor, draw=black, shift={(1,1,-1)}]
%(0, 0, 0) -- (0, 1, 0) -- (0, 1, 1) -- (0, 0, 1) -- cycle;
%\fill[fill=facecolor, draw=black, shift={(0,0,0)}]
%(0, 0, 0) -- (1, 0, 0) -- (1, 1, 0) -- (0, 1, 0) -- cycle;
%\end{tikzpicture}}
%& = & [\mathbf 0, 3]^\star \cup [(1, 1, -1), 1]^\star \\
%f' \cup g' & = &
%\myvcenter{\begin{tikzpicture}
%%[x={(-0.216506cm,-0.125000cm)}, y={(0.216506cm,-0.125000cm)}, z={(0.000000cm,0.250000cm)}]
%[x={(-0.173205cm,-0.100000cm)}, y={(0.173205cm,-0.100000cm)}, z={(0.000000cm,0.200000cm)}]
%\definecolor{facecolor}{rgb}{0.8,0.8,0.8}
%\fill[fill=facecolor, draw=black, shift={(1,1,-1)}]
%(0, 0, 0) -- (0, 0, 1) -- (1, 0, 1) -- (1, 0, 0) -- cycle;
%\fill[fill=facecolor, draw=black, shift={(0,0,0)}]
%(0, 0, 0) -- (1, 0, 0) -- (1, 1, 0) -- (0, 1, 0) -- cycle;
%\end{tikzpicture}}
%& = & [\mathbf 0, 3]^\star \cup [(1, 1, -1), 2]^\star
%\end{array}.
%\]
Below (in light gray) are all the possible preimages $f \cup g$
(which are the same for the two possibilities),
and in dark gray is shown their only possible completion $X$ within a stepped plane:
\[
\Sigma_1 :
\myvcenter{\begin{tikzpicture}
%[x={(-0.216506cm,-0.125000cm)}, y={(0.216506cm,-0.125000cm)}, z={(0.000000cm,0.250000cm)}]
[x={(-0.173205cm,-0.100000cm)}, y={(0.173205cm,-0.100000cm)}, z={(0.000000cm,0.200000cm)}]
\definecolor{facecolor}{rgb}{0.800000,0.800000,0.800000}
\fill[fill=facecolor, draw=black, shift={(0,0,0)}]
(0, 0, 0) -- (0, 1, 0) -- (0, 1, 1) -- (0, 0, 1) -- cycle;
\definecolor{facecolor}{rgb}{0.800000,0.800000,0.800000}
\fill[fill=facecolor, draw=black, shift={(-1,-1,1)}]
(0, 0, 0) -- (0, 1, 0) -- (0, 1, 1) -- (0, 0, 1) -- cycle;
\definecolor{facecolor}{rgb}{0.35,0.35,0.35}
\fill[fill=facecolor, draw=black, shift={(-1,-1,1)}]
(0, 0, 0) -- (1, 0, 0) -- (1, 1, 0) -- (0, 1, 0) -- cycle;
\fill[fill=facecolor, draw=black, shift={(-1,0,1)}]
(0, 0, 0) -- (1, 0, 0) -- (1, 1, 0) -- (0, 1, 0) -- cycle;
\end{tikzpicture}}\,, \
\myvcenter{\begin{tikzpicture}
%[x={(-0.216506cm,-0.125000cm)}, y={(0.216506cm,-0.125000cm)}, z={(0.000000cm,0.250000cm)}]
[x={(-0.173205cm,-0.100000cm)}, y={(0.173205cm,-0.100000cm)}, z={(0.000000cm,0.200000cm)}]
\definecolor{facecolor}{rgb}{0.800000,0.800000,0.800000}
\fill[fill=facecolor, draw=black, shift={(0,0,0)}]
(0, 0, 0) -- (0, 1, 0) -- (0, 1, 1) -- (0, 0, 1) -- cycle;
\definecolor{facecolor}{rgb}{0.800000,0.800000,0.800000}
\fill[fill=facecolor, draw=black, shift={(-2,-1,1)}]
(0, 0, 0) -- (1, 0, 0) -- (1, 1, 0) -- (0, 1, 0) -- cycle;
\definecolor{facecolor}{rgb}{0.35,0.35,0.35}
\fill[fill=facecolor, draw=black, shift={(-1,-1,1)}]
(0, 0, 0) -- (1, 0, 0) -- (1, 1, 0) -- (0, 1, 0) -- cycle;
\fill[fill=facecolor, draw=black, shift={(-1,0,1)}]
(0, 0, 0) -- (1, 0, 0) -- (1, 1, 0) -- (0, 1, 0) -- cycle;
\end{tikzpicture}}
\qquad
\Sigma_2 :
\myvcenter{\begin{tikzpicture}
%[x={(-0.216506cm,-0.125000cm)}, y={(0.216506cm,-0.125000cm)}, z={(0.000000cm,0.250000cm)}]
[x={(-0.173205cm,-0.100000cm)}, y={(0.173205cm,-0.100000cm)}, z={(0.000000cm,0.200000cm)}]
\definecolor{facecolor}{rgb}{0.800000,0.800000,0.800000}
\fill[fill=facecolor, draw=black, shift={(0,0,0)}]
(0, 0, 0) -- (0, 0, 1) -- (1, 0, 1) -- (1, 0, 0) -- cycle;
\definecolor{facecolor}{rgb}{0.800000,0.800000,0.800000}
\fill[fill=facecolor, draw=black, shift={(-1,-1,1)}]
(0, 0, 0) -- (0, 0, 1) -- (1, 0, 1) -- (1, 0, 0) -- cycle;
\definecolor{facecolor}{rgb}{0.35,0.35,0.35}
\fill[fill=facecolor, draw=black, shift={(-1,-1,1)}]
(0, 0, 0) -- (1, 0, 0) -- (1, 1, 0) -- (0, 1, 0) -- cycle;
\fill[fill=facecolor, draw=black, shift={(0,-1,1)}]
(0, 0, 0) -- (1, 0, 0) -- (1, 1, 0) -- (0, 1, 0) -- cycle;
\end{tikzpicture}}\,, \
\myvcenter{\begin{tikzpicture}
%[x={(-0.216506cm,-0.125000cm)}, y={(0.216506cm,-0.125000cm)}, z={(0.000000cm,0.250000cm)}]
[x={(-0.173205cm,-0.100000cm)}, y={(0.173205cm,-0.100000cm)}, z={(0.000000cm,0.200000cm)}]
\definecolor{facecolor}{rgb}{0.800000,0.800000,0.800000}
\fill[fill=facecolor, draw=black, shift={(0,0,0)}]
(0, 0, 0) -- (0, 0, 1) -- (1, 0, 1) -- (1, 0, 0) -- cycle;
\definecolor{facecolor}{rgb}{0.800000,0.800000,0.800000}
\fill[fill=facecolor, draw=black, shift={(-1,-2,1)}]
(0, 0, 0) -- (1, 0, 0) -- (1, 1, 0) -- (0, 1, 0) -- cycle;
\definecolor{facecolor}{rgb}{0.35,0.35,0.35}
\fill[fill=facecolor, draw=black, shift={(-1,-1,1)}]
(0, 0, 0) -- (1, 0, 0) -- (1, 1, 0) -- (0, 1, 0) -- cycle;
\fill[fill=facecolor, draw=black, shift={(0,-1,1)}]
(0, 0, 0) -- (1, 0, 0) -- (1, 1, 0) -- (0, 1, 0) -- cycle;
\end{tikzpicture}}
\qquad
\Sigma_3 :
\myvcenter{\begin{tikzpicture}
%[x={(-0.216506cm,-0.125000cm)}, y={(0.216506cm,-0.125000cm)}, z={(0.000000cm,0.250000cm)}]
[x={(-0.173205cm,-0.100000cm)}, y={(0.173205cm,-0.100000cm)}, z={(0.000000cm,0.200000cm)}]
\definecolor{facecolor}{rgb}{0.800000,0.800000,0.800000}
\fill[fill=facecolor, draw=black, shift={(0,0,0)}]
(0, 0, 0) -- (1, 0, 0) -- (1, 1, 0) -- (0, 1, 0) -- cycle;
\definecolor{facecolor}{rgb}{0.800000,0.800000,0.800000}
\fill[fill=facecolor, draw=black, shift={(-1,1,-1)}]
(0, 0, 0) -- (1, 0, 0) -- (1, 1, 0) -- (0, 1, 0) -- cycle;
\definecolor{facecolor}{rgb}{0.35,0.35,0.35}
\fill[fill=facecolor, draw=black, shift={(-1,1,-1)}]
(0, 0, 0) -- (0, 0, 1) -- (1, 0, 1) -- (1, 0, 0) -- cycle;
\fill[fill=facecolor, draw=black, shift={(0,1,-1)}]
(0, 0, 0) -- (0, 0, 1) -- (1, 0, 1) -- (1, 0, 0) -- cycle;
\end{tikzpicture}}\,, \
\myvcenter{\begin{tikzpicture}
%[x={(-0.216506cm,-0.125000cm)}, y={(0.216506cm,-0.125000cm)}, z={(0.000000cm,0.250000cm)}]
[x={(-0.173205cm,-0.100000cm)}, y={(0.173205cm,-0.100000cm)}, z={(0.000000cm,0.200000cm)}]
\definecolor{facecolor}{rgb}{0.800000,0.800000,0.800000}
\fill[fill=facecolor, draw=black, shift={(0,0,0)}]
(0, 0, 0) -- (1, 0, 0) -- (1, 1, 0) -- (0, 1, 0) -- cycle;
\definecolor{facecolor}{rgb}{0.800000,0.800000,0.800000}
\fill[fill=facecolor, draw=black, shift={(-1,1,-2)}]
(0, 0, 0) -- (0, 0, 1) -- (1, 0, 1) -- (1, 0, 0) -- cycle;
\definecolor{facecolor}{rgb}{0.35,0.35,0.35}
\fill[fill=facecolor, draw=black, shift={(-1,1,-1)}]
(0, 0, 0) -- (0, 0, 1) -- (1, 0, 1) -- (1, 0, 0) -- cycle;
\fill[fill=facecolor, draw=black, shift={(0,1,-1)}]
(0, 0, 0) -- (0, 0, 1) -- (1, 0, 1) -- (1, 0, 0) -- cycle;
\end{tikzpicture}}\,.
\]
Now, we have $X \subseteq A$ because Condition~\ref{defi:anneauprop4}
for $A$ and $P$ would fail otherwise ($f$ and $g$ cannot touch).
However, this contradicts the fact that strongly $\LFS$-connected.
Indeed, $X \in \Ledge$ but there cannot exist a pattern $Y \in \LFS$ such that $X \subseteq Y \subseteq A$
because then we must have $Y = \myvcenter{%
\begin{tikzpicture}
%[x={(-0.216506cm,-0.125000cm)}, y={(0.216506cm,-0.125000cm)}, z={(0.000000cm,0.250000cm)}]
[x={(-0.173205cm,-0.100000cm)}, y={(0.173205cm,-0.100000cm)}, z={(0.000000cm,0.200000cm)}]
\definecolor{facecolor}{rgb}{0.800,0.800,0.800}
\fill[fill=facecolor, draw=black, shift={(0,0,0)}]
(0, 0, 0) -- (0, 0, 1) -- (1, 0, 1) -- (1, 0, 0) -- cycle;
\fill[fill=facecolor, draw=black, shift={(0,0,0)}]
(0, 0, 0) -- (1, 0, 0) -- (1, 1, 0) -- (0, 1, 0) -- cycle;
\fill[fill=facecolor, draw=black, shift={(1,0,0)}]
(0, 0, 0) -- (0, 0, 1) -- (1, 0, 1) -- (1, 0, 0) -- cycle;
\end{tikzpicture}\,}%$, $\myvcenter{%
\begin{tikzpicture}
%[x={(-0.216506cm,-0.125000cm)}, y={(0.216506cm,-0.125000cm)}, z={(0.000000cm,0.250000cm)}]
[x={(-0.173205cm,-0.100000cm)}, y={(0.173205cm,-0.100000cm)}, z={(0.000000cm,0.200000cm)}]
\definecolor{facecolor}{rgb}{0.800,0.800,0.800}
\fill[fill=facecolor, draw=black, shift={(0,0,0)}]
(0, 0, 0) -- (0, 0, 1) -- (1, 0, 1) -- (1, 0, 0) -- cycle;
\fill[fill=facecolor, draw=black, shift={(1,0,0)}]
(0, 0, 0) -- (1, 0, 0) -- (1, 1, 0) -- (0, 1, 0) -- cycle;
\fill[fill=facecolor, draw=black, shift={(0,0,0)}]
(0, 0, 0) -- (1, 0, 0) -- (1, 1, 0) -- (0, 1, 0) -- cycle;
\end{tikzpicture}\,}%$ or $\myvcenter{%
\begin{tikzpicture}
%[x={(-0.216506cm,-0.125000cm)}, y={(0.216506cm,-0.125000cm)}, z={(0.000000cm,0.250000cm)}]
[x={(-0.173205cm,-0.100000cm)}, y={(0.173205cm,-0.100000cm)}, z={(0.000000cm,0.200000cm)}]
\definecolor{facecolor}{rgb}{0.800,0.800,0.800}
\fill[fill=facecolor, draw=black, shift={(0,0,0)}]
(0, 0, 0) -- (0, 1, 0) -- (0, 1, 1) -- (0, 0, 1) -- cycle;
\fill[fill=facecolor, draw=black, shift={(0,1,0)}]
(0, 0, 0) -- (1, 0, 0) -- (1, 1, 0) -- (0, 1, 0) -- cycle;
\fill[fill=facecolor, draw=black, shift={(0,0,0)}]
(0, 0, 0) -- (1, 0, 0) -- (1, 1, 0) -- (0, 1, 0) -- cycle;
\end{tikzpicture}\,}%$,
so $Y$ must overlap with $f$ or $g$, which is impossible because $f$ and $g$ are not in $A$.
\end{proof}

\begin{proof}[Proof of Proposition~\ref{prop:pn_union}]
Let $\bfv \in \Fthree$.
To prove the proposition, it is enough to prove that $\cup_{n=0}^\infty P_n = \bfGa_\bfv$,
thanks to Remark~\ref{rq:planes}.
Let $P \in \bfGa_\bfv$ be a finite pattern.
The \emph{combinatorial radius} of $P$ is defined to be
the length of the smallest path of edge-connected unit faces from the origin to $\bfGa_\bfv \setminus P$.

Now, since $\bfv \in \Fthree$,
$\SFSiii$ occurs infinitely many often in the sequence $(\Sigma_i)_{i \in \bbN}$
of the dual substitutions associated with the $\bfF$-expansion of $\bfv$.
Hence, we can apply Proposition~\ref{prop:annulus} to prove that there exists $k \in \bbN$
such that for all $\ell \geq 0$, the pattern
$A_\ell = \Sigma_1 \cdots\Sigma_{k+\ell}(\mcU) \setminus \Sigma_1 \cdots \Sigma_\ell(\mcU)$
is an $\LFS$-annulus of $\Sigma_1 \cdots \Sigma_\ell(\mcU)$.

By Condition~\ref{defi:anneauprop1} of Definition~\ref{defi:annulus},
the pattern $A_\ell \cup \Sigma_1 \cdots \Sigma_\ell(\mcU)$ is simply connected for all $\ell \geq 0$,
so its combinatorial radius increases at least by $1$ when $\ell$ is incremented,
thanks to Conditions~\ref{defi:anneauprop3}~\ref{defi:anneauprop4}.
This proves the required property.
\end{proof}

%%%%%%%%%%%%%%%%%%%%
%%%%%%%%%%%%%%%%%%%%
%%%%%%%%%%%%%%%%%%%%

\section{Connectedness at the critical thickness}\label{sec:main}
%Since we do not know whether the set $\{\omega \in \bbR \mid \frakP(\bfv,\Omega(\bfv)) \text{ is $2$-connected} \}$ is closed, nothing ensures that $\frakP(\bfv,\Omega(\bfv))$  is $2$-connected. In fact:
%In general, we do not know if   the set $\{\omega \in \bbR \mid \frakP(\bfv,\omega,\Omega(\bfv)) \text{ is $2$-connected} \}$ is closed.  In fact, $\frakP(\bfv, 0, \Omega(\bfv))$ may be $2$-connected or not.
%\TODO: bla
We can now prove our main result, namely the  characterization of  the normal vectors $\bfv$ for which the
arithmetical discrete plane is $2$-connected at the connecting thickness $\Omega(\bfv)$.
%\TODO: xavier email
%Le "only if" de Thm 5.1 on fait l'hypothèse que lorsque Fully Subtractive amène la première coordonnée à 0 alors les deux autres sont rationnellement dépendantes... ce qui est faux. Par exemple : v = (1,1,phi+1), fully le transforme en v' = (0,1,phi). On a donc que B(v,Omega(v)) est connexe ssi B(v',Omega(v')). Hors, B(v',Omega(v')) c'est un Sturmien ( mot de fibonacci ) et donc, c'est connexe....

\subsection{Technical lemmas}
In the present section, we provide technical properties for  dimensions $d=3,2,1$
which will allow us to  use argumentsof  dimension  reduction.
Let us first provide several notation and definitions.
Let $d \geq 1$ be an integer and let ${\mathcal O^+_d}= \{ \bfv \in \bbZ^d \mid 0 \leq \bfv_1 \leq \dots \leq \bfv_d\}$.
The notion of $2$-connectedness extends  in a natural way to $(d-1)$-connectedness in $\bbZ^d$,
which  induces  the  corresponding  notion of  connecting thickness $\Omega(\bfv)$
for $ \bfv \in {\mathbb Z}^d$, as well as the notion of arithmetical discrete line in $\bbZ^2$.

\begin{ntt}[Extension of  $\gcd$]
Let $(\alpha,\beta) \in \bbR^2_+$ such that $\dim_\bbQ(\bfv_1,\bfv_2) = 1$.
There exists $\gamma \in \bbR_+$ such that $(\gamma \alpha, \gamma \beta) \in \bbN^2$ and we set
$\gcd( \alpha,\beta ) = \dfrac{\gcd(\gamma \alpha, \gamma \beta) }{\gamma}$.
One checks that this definition does not depend on the choice of $\gamma$.
\end{ntt}
%We recall that $\Omega(\bfv) $ refers to   $2$-connectedness, that is, to  $d=3$.\TODO

%Two distinct elements $\bfx$ and $\bfy$ in $\bbZ^d$ are \emph{$(d-1)$-adjacent} (or \emph{adjacent}, for short) if $\| \bfx - \bfy \|_\infty \leq 1$ and $\| \bfx - \bfy \|_1 = 1$. A %non-empty subset $A \subseteq \bbZ^d$ is \emph{$(d-1)$-connected} if for every $\bfx,\bfy \in A$, there exist $\bfx_1, \dots, \bfx_n \in A$ such that $\bfx_i$ and $\bfx_{i+1}$ %are $(d-1)$-adjacent for all $i \in \{1,\dots,n-1\}$, with $\bfx_1=\bfx$ and $\bfx_n=\bfy$.
%Furthermore, let $\bfv \in \bbZ$ and $\omega \in \bbR$, the \emph{arithmetical discrete hyperplane} $\frakP(\bfv,\omega)$ with normal vector $\bfv$ and thickness $\omega$ is %the subset of $\bbZ^d$ satisfying:
%\begin{equation*}
%\frakP(\bfv,\omega) = \left\{\bfx \in \bbZ^d \mid 0 \leq \langle \bfv,\bfx \rangle < \omega \right\}
%\end{equation*}
%where $\langle \bfv,\bfx \rangle = \sum^d_{i=1}{\bfv_i \bfx_i}$ is the usual scalar product in $\bbZ^d$. If $d=3$ (resp $d=2$) then the set $\frakP(\bfv,\omega)$ is commonly called \emph{arithmetical discrete plane} (resp. \emph{arithmetical discrete line}).

%%%%%%%%%%%%%%%%%%%%%%%%%%%%%%%%%%%%%%%%%%%%%%%%%%%%%%%%%%%%%%%%%%%%%%%%%%%%%%%%%%%%%%%%%%%%%%%%
%%%%%%%%%%%%%%%%%%%%%%%%%%%%%%% lemme de projection %%%%%%%%%%%%%%%%%%%%%%%%%%%%%%%%%%%%%%%%%%%%
%%%%%%%%%%%%%%%%%%%%%%%%%%%%%%%%%%%%%%%%%%%%%%%%%%%%%%%%%%%%%%%%%%%%%%%%%%%%%%%%%%%%%%%%%%%%%%%%

Lemma~\ref{lem::projection} allows us to restrict further the domain of investigated normal vectors $\bfv$
to the ones with only non-zero coordinates.

\begin{lemm}\label{lem::projection}
Let $\bfv \in \Otpd{d}$ and $\omega \in \bbR$. Assume that $\bfv_1=0$
and let $\bfv' =(\bfv_2,\dots,\bfv_d)\in \Otpd{d-1}$.
Then $\frakP(\bfv,\omega)$ is $(d-1)$-connected in $\bbZ^d$ if and only if
$\frakP(\bfv',\omega)$ is $(d-2)$-connected in $\bbZ^{d-1}$.
Consequently, $\Omega(\bfv) = \Omega(\bfv')$.
\end{lemm}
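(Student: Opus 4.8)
The plan is to exploit the fact that, since $\bfv_1 = 0$, the first coordinate of a point plays no role in the scalar product $\langle \bfx, \bfv \rangle = \sum_{i=2}^d \bfx_i \bfv_i$. Writing $\pi : \bbZ^d \to \bbZ^{d-1}$ for the projection forgetting the first coordinate, one obtains at once that $\bfx \in \frakP(\bfv, \omega)$ if and only if $\pi(\bfx) \in \frakP(\bfv', \omega)$, so that $\frakP(\bfv, \omega) = \bbZ \times \frakP(\bfv', \omega)$ as a subset of $\bbZ \times \bbZ^{d-1} = \bbZ^d$. In particular $\frakP(\bfv, \omega)$ is empty if and only if $\frakP(\bfv', \omega)$ is, which takes care of the non-emptiness requirement built into connectedness; from now on I would assume both are non-empty.

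First I would prove the implication from $\bbZ^{d-1}$ to $\bbZ^d$. Given that $S := \frakP(\bfv', \omega)$ is $(d-2)$-connected, and two points $(a, \bfs), (b, \bft) \in \bbZ \times S$, I would build a connecting path in two stages: move along the first axis by unit steps $\pm\bfe_1$ from $(a, \bfs)$ to $(b, \bfs)$, every intermediate point remaining in $\bbZ \times S$ because $\bfs \in S$ is unchanged and the first coordinate is free; then lift a $(d-2)$-connecting path from $\bfs$ to $\bft$ in $S$ into the fibre $\{b\} \times S$, using that unit steps in $\bbZ^{d-1}$ correspond exactly to unit steps in coordinates $2, \dots, d$ of $\bbZ^d$.

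For the converse, given that $\bbZ \times S$ is $(d-1)$-connected and $\bfs, \bft \in S$, I would take a connecting path between $(0, \bfs)$ and $(0, \bft)$ and apply $\pi$. Since $\pi$ is $1$-Lipschitz for $\|\cdot\|_1$, consecutive images lie at $\ell^1$-distance $0$ or $1$; after deleting repeated consecutive vertices (those arising from steps along $\bfe_1$) this yields a $(d-2)$-connecting path from $\bfs$ to $\bft$ inside $S$. This establishes the equivalence. The equality $\Omega(\bfv) = \Omega(\bfv')$ then follows immediately: the equivalence shows that the sets $\{\omega : \frakP(\bfv, \omega) \text{ is $(d-1)$-connected}\}$ and $\{\omega : \frakP(\bfv', \omega) \text{ is $(d-2)$-connected}\}$ coincide, hence so do their infima.

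The argument is essentially routine; the only points requiring care are the bookkeeping of the adjacency relation across dimensions (checking that $(d-1)$-adjacency in $\bbZ^d$ is the relation $\|\bfx - \bfy\|_1 = 1$, which both restricts to and projects onto the corresponding relation in $\bbZ^{d-1}$) and the collapsing of free-axis steps under $\pi$, together with the non-emptiness convention for connected sets. I do not anticipate a genuine obstacle here.
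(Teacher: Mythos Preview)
Your proposal is correct and follows essentially the same approach as the paper: both establish the product decomposition $\frakP(\bfv,\omega) = \bbZ \times \frakP(\bfv',\omega)$, then build paths by combining moves along the free first coordinate with lifts of paths in $\frakP(\bfv',\omega)$, and handle the converse by projecting a path and discarding the $\bfe_1$-steps. If anything, you are slightly more careful than the paper in explicitly addressing the non-emptiness convention and the collapsing of repeated vertices under~$\pi$.
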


\begin{proof}
Let $\frakP=\frakP(\bfv,\omega)$ and $\frakP'=\frakP(\bfv',\omega)$. We have
\begin{eqnarray*}
  \frakP &=& \{\bfx \in \bbZ^d \mid 0 \leq \langle \bfv,\bfx\rangle < \omega \}\\
  &=& \{(\bfx_1,\bfx') \in \bbZ^d \mid 0 \leq \langle \bfv',\bfx'\rangle < \omega \}\\
  &=& \bbZ \times \frakP',
\end{eqnarray*}
by writing $\bfx=(\bfx_1,\bfx')$ with $ \bfx_1 \in \bbZ$ and $\bfx' \in \frakP'$.

Assume that $\frakP'$ is $(d-2)$-connected, and let $\bfx,\bfy \in \frakP$.
We write $\bfx=(\bfx_1,\bfx')$ and $\bfy=(\bfy_1,\bfy')$, where $\bfx_1,\bfy_1 \in \bbZ$ and $\bfx',\bfy' \in \frakP'$.
There exists a $(d-2)$-connected path $(\bfz' _{1},\dots,\bfz' _{n})$ from $\bfx'$ to $\bfy'$ in $\frakP'$.
We assume that $\bfy_1\geq \bfx_1$ (otherwise, we exchange $\bfx$ and $\bfy$).
The path
$((\bfx_1,\bfz' _{1}),\dots,(\bfx_1,\bfz'_{n}),(\bfx_1+1,\bfz' _{n}),(\bfx_1+2,\bfz' _{n}),\dots,(\bfy_1,\bfz' _{n}))$
is a $(d-1)$-connected path between $\bfx$ and $\bfy$ in $\frakP$.
%Indeed, for each $i=1,\dots,n-1$, we have $\|(\bfx_1,\bfz'_{i+1})-(\bfx_1,\bfz'_i)\|_1 = \|\bfz'_{i+1}-\bfz'_i\|_1 \leq 1$, and for each $\bfz=\bfx_1,\dots,\bfy_1-1$, we have $\|(\bfz%+1,\bfz'_n)-(\bfz,\bfz'_n)\|_1 = 1$.

Assume now that $\frakP$ is connected and let $\bfx',\bfy' \in \frakP'$.
We have $(0,\bfx') \in \frakP$ and $(0,\bfy') \in \frakP$.
There exists a $(d-1)$-connected path $(\bfz _{1},\dots,\bfz _{n})$ between $(0,\bfx')$ and $(0,\bfy')$ in $\frakP$.
For each $i=1,\dots,n$, write $\bfz_i = (\bfz_{i,1},\bfz'_i)$.
Then the sequence $(\bfz'_1,\dots,\bfz'_n)$ is a $(d-2)$-connected path between $\bfx'$ and $\bfy'$ in $\frakP'$.
%Indeed, for each $i=1,\dots,n-1$, we have $\|\bfz'_{i+1}-\bfz'_i\|_1 \leq \min(d-1,\|\bfz_{i+1}-\bfz_i\|_1) \leq 1 $.
\end{proof}

Let us deal with the special case where the normal vector $\bfv$ has exactly one non-zero coordinate.
We characterize completely the set of thicknesses $\omega$ for which $\frakP(\bfv,\omega)$ is  connected.
\begin{lemm}\label{lem::d=1}
Let $d=2,3$.
Let $\bfv \in \Otpd{d}$ and $\omega \in \bbR$.
If $\bfv$ has exactly one non-zero coordinate, then $\frakP(\bfv,\omega)$ is $(d-1)$-connected
as soon as $\omega > 0$. Consequently, $\Omega(\bfv) = 0$.
\end{lemm}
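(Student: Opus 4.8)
The plan is to make everything completely explicit, since the hypothesis forces $\frakP(\bfv,\omega)$ to be a very simple ``slab''. First I would use the ordering of the coordinates: because $\bfv \in \Otpd{d}$ satisfies $0 \leq \bfv_1 \leq \dots \leq \bfv_d$, the assumption that exactly one coordinate is non-zero forces $\bfv_1 = \dots = \bfv_{d-1} = 0$ and $\bfv_d > 0$, i.e.\ $\bfv = \bfv_d\,\bfe_d$. Consequently $\langle \bfx, \bfv \rangle = \bfv_d\,\bfx_d$ for every $\bfx \in \bbZ^d$, so membership in $\frakP(\bfv,\omega)$ depends only on the last coordinate:
\[
\frakP(\bfv,\omega) = \{\bfx \in \bbZ^d : 0 \leq \bfx_d < \omega/\bfv_d\} = \bbZ^{d-1} \times K,
\quad K = \{k \in \bbZ : 0 \leq k < \omega/\bfv_d\}.
\]
Since $\omega > 0$ and $\bfv_d > 0$, the set $K$ is a non-empty block of consecutive integers $\{0,1,\dots,m\}$ with $m = \lceil \omega/\bfv_d \rceil - 1 \geq 0$; in particular $\mathbf 0 \in \frakP(\bfv,\omega)$, so the set is non-empty as required by the definition of connectedness.

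Next I would prove $(d-1)$-connectedness by exhibiting an explicit path between any two points. Given $\bfx = (\bfx', a)$ and $\bfy = (\bfy', b)$ with $\bfx', \bfy' \in \bbZ^{d-1}$ and $a, b \in K$, I would first join $(\bfx', a)$ to $(\bfy', a)$ by a sequence of $\pm \bfe_i$ steps with $i < d$, transforming $\bfx'$ into $\bfy'$ one coordinate at a time; every intermediate point keeps its last coordinate equal to $a \in K$, hence lies in $\frakP(\bfv,\omega)$. Then I would join $(\bfy', a)$ to $(\bfy', b)$ by $\pm \bfe_d$ steps; because $K$ is an interval of consecutive integers, all intermediate values of the last coordinate remain in $K$, so again every point belongs to $\frakP(\bfv,\omega)$. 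Concatenating the two legs yields a $(d-1)$-connected path, which establishes connectedness for every $\omega > 0$.

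Finally, since $\frakP(\bfv,\omega)$ is $(d-1)$-connected for every $\omega > 0$, the infimum defining $\Omega(\bfv)$ equals $0$. There is no real obstacle here: the only points requiring care are checking that the ordering hypothesis pins the non-zero coordinate to the index $d$, and observing that $K$ being a contiguous integer interval is precisely what lets the last-coordinate leg of the path stay inside the set. The argument is uniform in $d \in \{2,3\}$, so the two cases need not be separated.
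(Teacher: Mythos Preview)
Your proof is correct and follows essentially the same idea as the paper: identify $\frakP(\bfv,\omega)=\bbZ^{d-1}\times K$ with $K$ a contiguous block of integers, and conclude connectedness. The only difference is cosmetic: the paper invokes Lemma~\ref{lem::projection} to strip off the zero coordinates and reduce to the one-dimensional interval $\frakP(\bfv_d,\omega)\subseteq\bbZ$, whereas you unfold that reduction by hand and build the path explicitly.
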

\begin{proof}
The only non-zero coordinate of $\bfv$ is $\bfv_d$.
According to Lemma~\ref{lem::projection}, $\frakP(\bfv,\omega)$ is $(d-1)$-connected if and only if
$\frakP(\bfv_d,\omega)$ is $0$-connected.
Now, $\frakP(\bfv_d,\omega) = \{x \in \bbZ \mid 0 \leq \bfv_d\,x < \omega\}$ is an interval of $\bbZ$.
It is $(d-1)$-connected  as soon as it is  not empty  (by definition) which means ${\omega} > 0$.
\end{proof}

\begin{lemm}[An upper-bound for $\Omega(\bfv)$]
\label{lem::omega-upper-bound}
Let $\bfv \in \Otp$ and $\omega \in \bbR$.
We assume $\bfv \neq 0$. We set $\xi(\bfv) = \min\{|\bfv_i| \mid \bfv_i \neq 0\}$.
If $\omega \geq \|\bfv\|_\infty+\xi(\bfv)$, then $\frakP(\bfv,\omega)$ is $2$-connected.
Consequently, $\Omega(\bfv) \leq \|\bfv\|_\infty + \xi(\bfv)$.
\end{lemm}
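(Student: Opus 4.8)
The plan is to realise $\frakP(\bfv,\omega)$ as a union of vertical columns and to show that the adjacency graph of these columns is connected. Throughout write $h(\bfx)=\langle\bfx,\bfv\rangle=\bfv_1x+\bfv_2y+\bfv_3z$, and note that $\bfv\neq 0$ forces $\bfv_3=\|\bfv\|_\infty>0$. For $(x,y)\in\bbZ^2$ set $C_{x,y}=\{(x,y,z):z\in\bbZ\}$ and $Z(x,y)=\{z\in\bbZ:(x,y,z)\in\frakP(\bfv,\omega)\}$. Since $h$ is affine and strictly increasing in $z$ with slope $\bfv_3>0$, the set $Z(x,y)$ is the intersection of $\bbZ$ with a half-open real interval of length $\omega/\bfv_3\geq 1$; hence it is a nonempty block of consecutive integers, so $C_{x,y}\cap\frakP(\bfv,\omega)$ is $2$-connected through $\pm\bfe_3$ steps. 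Two columns adjacent in $\bbZ^2$ (differing by $\bfe_1$ or $\bfe_2$) meet in $\frakP(\bfv,\omega)$ as soon as their $Z$-sets share a common level $z$, since the two corresponding points then differ by a single $\bfe_1$ or $\bfe_2$. Consequently it suffices to prove that the graph on $\bbZ^2$ carrying these ``shared-level'' edges is connected: this yields the $2$-connectedness of $\frakP(\bfv,\omega)$ (nonempty since $\mathbf 0\in\frakP(\bfv,\omega)$), and therefore $\Omega(\bfv)\leq\|\bfv\|_\infty+\xi(\bfv)$.

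First I would treat the $\bfe_1$-direction. The real intervals defining $Z(x,y)$ and $Z(x+1,y)$ are translates of one another by $\bfv_1/\bfv_3$, so their overlap is a half-open interval of length $(\omega-\bfv_1)/\bfv_3$. As $\bfv_1\leq\xi(\bfv)$ (with equality if $\bfv_1>0$, and $\bfv_1=0\leq\xi(\bfv)$ otherwise), the hypothesis gives $\omega\geq\bfv_3+\bfv_1$, so this overlap has length at least $1$ and hence contains an integer. Thus every pair of $\bfe_1$-adjacent columns shares a level, and all columns of a fixed row $\{(x,y):x\in\bbZ\}$ lie in a single component.

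The hard part will be the $\bfe_2$-direction, because the analogous overlap for $Z(x,y)$ and $Z(x,y+1)$ has length $(\omega-\bfv_2)/\bfv_3$, which may be below $1$, so a shared level at a \emph{fixed} $x$ need not exist. Here I would use the extra freedom in the $x$ variable: columns $C_{x,y}$ and $C_{x,y+1}$ share a level precisely when there is $(x,z)\in\bbZ^2$ with $0\leq \bfv_1x+\bfv_2y+\bfv_3z<\omega-\bfv_2$, that is, when the subgroup $H=\bfv_1\bbZ+\bfv_3\bbZ$ of $\bbR$ meets the half-open interval $[-\bfv_2y,\,-\bfv_2y+(\omega-\bfv_2))$ of length $\omega-\bfv_2$. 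Now $H$ is either dense in $\bbR$ (in which case it meets every interval of positive length) or equal to $g\bbZ$ for some $g>0$; in the latter case $\bfv_3\in H$, and also $\bfv_1\in H$ if $\bfv_1>0$, so $g\leq\bfv_3$ always and $g\leq\bfv_1$ when $\bfv_1>0$. A short case check using $\bfv_1\leq\bfv_2\leq\bfv_3$ then shows $\omega-\bfv_2\geq g$: if $\bfv_1>0$ then $\xi(\bfv)=\bfv_1$ and $\omega-\bfv_2\geq\bfv_3+\bfv_1-\bfv_2\geq\bfv_1\geq g$, while if $\bfv_1=0$ then $g=\bfv_3$ and $\omega-\bfv_2\geq\bfv_3=g$ (whether or not $\bfv_2=0$). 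Since a half-open interval of length at least $g$ always contains a point of $g\bbZ$, the group $H$ meets the interval, so rows $y$ and $y+1$ are joined by at least one $\bfe_2$-edge.

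Combining the three steps, the column graph has every horizontal edge and at least one edge between each pair of consecutive rows, hence is connected; therefore $\frakP(\bfv,\omega)$ is $2$-connected for every $\omega\geq\|\bfv\|_\infty+\xi(\bfv)$, which gives the stated bound on $\Omega(\bfv)$. I would add that the fully degenerate situations (only one or two nonzero coordinates) are already subsumed by this argument, and alternatively follow directly from Lemma~\ref{lem::d=1} and Lemma~\ref{lem::projection}. The single delicate point is the inequality $\omega-\bfv_2\geq g$ bounding the interval length below by the generator of $H$, and this is exactly where the sharper threshold $\|\bfv\|_\infty+\xi(\bfv)$, rather than the coarser $\|\bfv\|_1$, is essential.
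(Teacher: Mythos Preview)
Your argument is correct. The column decomposition, the $\bfe_1$-row connectivity, and the subgroup dichotomy for $H=\bfv_1\bbZ+\bfv_3\bbZ$ in the $\bfe_2$-step all go through; the case analysis giving $\omega-\bfv_2\geq g$ is exactly the right inequality and is checked accurately.

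Your route differs from the paper's in organisation. The paper proceeds by induction on the ambient dimension $d$: it slices $\frakP(\bfv,\omega)$ into the hyperplane layers $\frakP_z=\{x_d=z\}$, invokes the induction hypothesis to declare each $\frakP_z$ connected (with parameter $\bfv'=(\bfv_1,\dots,\bfv_{d-1})$), and then links consecutive layers using the subgroup $\bfv_1\bbZ+\cdots+\bfv_{d-1}\bbZ$. You instead slice all the way down to one-dimensional columns in a single pass, handle their connectivity trivially, and then connect the resulting $\bbZ^2$ column graph in two explicit steps, invoking the subgroup $\bfv_1\bbZ+\bfv_3\bbZ$ rather than $\bfv_1\bbZ+\bfv_2\bbZ$. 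Both arguments hinge on the same dense-versus-discrete dichotomy for a rank-$2$ subgroup of $\bbR$; the paper's version has the advantage of working uniformly in every dimension, while yours is self-contained for $d=3$ and avoids setting up the induction.
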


\begin{proof}
Assume first $d=1$.
Since $\bfv \neq \vect{0}$, we have $v_1 > 0$.
One has $\frakP(\bfv,\omega)$  $0$-connected  as soon as  $\omega > 0$, hence,
in particular, if $\omega \geq\xi(\bfv)+\|\bfv\|_\infty = 2\,v_1$.

Let us now assume that $d=2$ or $3$  and that the result holds for all $d' <d$.
We prove this result by induction on $d$.

If $\bfv_1=0$ then we set $\bfv' = (\bfv_2,\ldots,\bfv_d)$ and in this case,
thanks to Lemma~\ref{lem::projection}, $\frakP(\bfv,\omega)$ is $(d-1)$-connected in $\bbZ^d$
if $\frakP(\bfv',\omega)$ is $(d-2)$-connected in $\bbZ^{d-1}$.
Then we get the result by the induction hypothesis because
$\xi(\bfv') = \xi(\bfv)$ and $\|\bfv'\|_\infty = \|\bfv\|_\infty$.

If $\bfv_1 > 0$ then we set $\bfv'=(\bfv_1,\ldots,\bfv_{d-1})$.
We have $\xi(\bfv') = \xi(\bfv) = \bfv_1$ and $\|\bfv'\|_\infty = \bfv_{d-1} \leq \bfv_d = \|\bfv\|_\infty$.
For all $\bfz \in \bbZ$, let
$\frakP_\bfz = \{(\bfx_1,\ldots,\bfx_{d-1},\bfz)\mid (\bfx_1,\ldots,\bfx_{d-1}) \in \bbZ^{d-1}, \ 0\leq \bfv_1\,\bfx_1+\cdots+\bfv_{d-1}\,\bfx_{d-1}+\bfv_d\,\bfz < \omega\}$.
We have $\frakP(\bfv,\omega) = \bigcup_{\bfz\in\bbZ} \frakP_\bfz$.

If $\frakP_\bfz$ is connected in $\bbZ^d$, then the projection $\frakP'_\bfz$ of $\frakP_\bfz$
on the $d-1$ first coordinates is  connected.
If $\omega\geq \xi(\bfv)+\|\bfv\|_\infty$ then $\omega \geq \xi(\bfv')+\|\bfv'\|_\infty$,
and, by the induction hypothesis, $\frakP'_\bfz$ is connected; therefore, $\frakP_\bfz$ is connected.
We are left to prove that the $\frakP_\bfz$'s are adjacent, that is,
\[
\forall \bfz \in \bbZ,~\exists \bfx_1,\dots,\bfx_{d-1} \in\bbZ,~(\bfx_1,\dots,\bfx_{d-1},\bfz) \in \frakP_\bfz, \  (\bfx_1,\dots,\bfx_{d-1},\bfz+1) \in \frakP_{\bfz+1}
\]
or equivalently
\begin{equation}\label{eqn::dn+d1}
  \forall \bfz \in \bbZ,~\exists \bfx_1,\dots,\bfx_{d-1} \in\bbZ,~0 \leq \bfv_1\,\bfx_1+\cdots+\bfv_{d-1}\,\bfx_{d-1}+\bfv_d\,\bfz < \omega - \bfv_d.
\end{equation}
We distinguish two cases according to whether $\dim_\bbQ(\bfv_1,\ldots,\bfv_{d-1}) = 1$
or $\dim_\bbQ(\bfv_1,\ldots,\bfv_{d-1}) \geq 2$.
\begin{itemize}
\item
If $\dim_\bbQ(\bfv_1,\ldots,\bfv_{d-1}) = 1$ then let $\gamma = \gcd(\bfv_1,\ldots,\bfv_{d-1})$.
We have $\bfv_1\,\bbZ+\cdots+\bfv_{d-1}\,\bbZ = \gamma\,\bbZ$ and Condition~(\ref{eqn::dn+d1}) is equivalent to
\[
  \forall \bfz \in \bbZ,~\exists u \in\bbZ,~0 \leq \gamma\,u+\bfv_d\,z < \omega - \bfv_d
  \iff
  \forall \bfz \in \bbZ,~\omega-\bfv_d > (v_d\,\bfz) \bmod \gamma,
\]
which is satisfied as soon as $\omega \geq \bfv_d+\gamma$,
and especially if $\omega \geq \bfv_d+\bfv_1$ because $\bfv_1 \geq \gamma$.
\item
If $\dim_\bbQ(\bfv_1,\ldots,\bfv_{d-1}) \geq 2$ then we must have $d\geq 3$
and $\bfv_1\,\bbZ+\cdots+\bfv_{d-1}\,\bbZ$ is dense in $\bbR$.
Condition~(\ref{eqn::dn+d1}) is satisfied as soon as $\omega-\bfv_d > 0$,
and especially if $\omega \geq \bfv_d+\bfv_1$ because $\bfv_1>0$.
\end{itemize}
\end{proof}

\begin{lemm}\label{cor::not_in_F3}
Let $\bfv \in \Otp$ with $\dim_\bbQ(\bfv_1,\bfv_2,\bfv_3)>1$.
If $\bfv_1+\bfv_2 \leq \bfv_3$ and $\bfv^{(n)}_1>0$ for all $n \in \bbN$, then $\Omega(\bfv) = \| \bfv \|_\infty$.
In particular, the arithmetical discrete plane $\frakP( \bfv,\Omega(\bfv))$ is not $2$-connected.
\end{lemm}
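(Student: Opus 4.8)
The plan is to prove the two assertions in turn: first that $\Omega(\bfv) = \|\bfv\|_\infty = \bfv_3$, and then that the naive plane $\frakP(\bfv,\bfv_3)$ fails to be $2$-connected. Since $\bfv \in \Otp$ we have $\|\bfv\|_\infty = \bfv_3$, so the two claims concern exactly the plane $\frakP(\bfv,\bfv_3)$.

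For the first assertion I would simply reduce to Theorem~\ref{theo:algo_cc}. The hypothesis $\bfv_1+\bfv_2 \leq \bfv_3$ means that the defining inequality of $\Fthree$, namely $\bfv^{(n)}_1 + \bfv^{(n)}_2 > \bfv^{(n)}_3$, already fails at $n=0$; hence $\bfv \notin \Fthree$. For such $\bfv$, Theorem~\ref{theo:algo_cc} gives that $\Omega(\bfv)$ equals the output of the connecting thickness algorithm, and because $\bfv_1+\bfv_2 \leq \bfv_3$ that algorithm returns $\bfv_3$ on its very first call. Therefore $\Omega(\bfv) = \bfv_3 = \|\bfv\|_\infty$.

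For the second assertion, the key observation is a layer‑invariance property of the plane. Because $\bfv_3 = \|\bfv\|_\infty$, for any $\bfx \in \frakP(\bfv,\bfv_3)$ the value $\langle \bfx \pm \bfe_3, \bfv\rangle = \langle \bfx,\bfv\rangle \pm \bfv_3$ cannot lie in $[0,\bfv_3)$: the point $\langle\bfx,\bfv\rangle$ already lies there and the shift $\pm\bfv_3$ equals the length of this half‑open interval. Thus no move of the form $\pm\bfe_3$ keeps us inside the plane, so every $2$‑adjacency within $\frakP(\bfv,\bfv_3)$ is realized by a $\pm\bfe_1$ or $\pm\bfe_2$ step, which leaves the third coordinate unchanged. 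Consequently $\bfx_3$ is constant along every $2$‑connected path contained in the plane. This verification that no $\pm\bfe_3$ step is admissible is the one genuinely load‑bearing step; everything else is bookkeeping.

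It then remains to exhibit two plane points in distinct layers. I would take the origin $\mathbf 0$, with $\langle \mathbf 0,\bfv\rangle = 0$ and third coordinate $0$, and the point $-\bfe_1+\bfe_3 = \svect{-1}{0}{1}$, whose scalar product is $\bfv_3-\bfv_1 \in [0,\bfv_3)$ because $0 < \bfv_1 \leq \bfv_3$ (here $\bfv_1 = \bfv^{(0)}_1 > 0$ comes from the hypothesis $\bfv^{(n)}_1 > 0$ at $n=0$, and $\bfv_1 \leq \bfv_3$ from $\bfv \in \Otp$). Both points lie in $\frakP(\bfv,\bfv_3) = \frakP(\bfv,\Omega(\bfv))$ but have third coordinates $0$ and $1$, hence belong to different $2$‑connected components, so the plane is not $2$‑connected. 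Note that only $\bfv_1 > 0$ is used from the positivity hypothesis, and $\dim_\bbQ(\bfv_1,\bfv_2,\bfv_3) > 1$ plays no role in this argument; these assumptions merely record the setting in which the lemma will later be applied.
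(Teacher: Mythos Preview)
Your proof is correct, and it is markedly more direct than the paper's.

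For $\Omega(\bfv)=\bfv_3$, you simply read off the value from the connecting-thickness algorithm in Theorem~\ref{theo:algo_cc}: since $\bfv_1+\bfv_2\leq\bfv_3$ the algorithm halts at the first line, and $\bfv\notin\Fthree$ because the defining inequality fails at $n=0$. The paper instead runs a squeeze argument: it iterates the relation $\Omega(\bfv)=\Omega(\bfv^{(n)})+\sum_{i<n}\bfv^{(i)}_1$, brackets $\Omega(\bfv^{(n)})$ between $\|\bfv^{(n)}\|_\infty$ and $\|\bfv^{(n)}\|_\infty+\bfv^{(n)}_1$ (using Lemma~\ref{lem::omega-upper-bound}), and passes to the limit using $\bfv^{(n)}\to(0,0,\bfv_3-\bfv_1-\bfv_2)$. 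That limit genuinely needs the hypothesis $\bfv^{(n)}_1>0$ for all $n$ (so that the two-dimensional subtractive Euclid on $(\bfv_1,\bfv_2)$ never terminates and hence drives both to $0$), which is why those hypotheses appear in the paper's statement; your route, as you note, uses only $\bfv_1>0$.

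For non-$2$-connectedness, you give the standard layer argument for naive planes directly (no $\pm\bfe_3$ move stays in $\frakP(\bfv,\bfv_3)$, and $\mathbf 0$ and $-\bfe_1+\bfe_3$ lie in distinct layers), whereas the paper invokes the equivalence in Theorem~\ref{theo:algo_cc} together with the non-connectedness of $\frakP(\bfv^{(n)},\|\bfv^{(n)}\|_\infty)$ cited from~\cite{JT09}. Both are fine; yours is self-contained.
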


\begin{proof}
For all $n \in \bbN^\star$, we set $\omega^{(n)} = \sum^{n-1}_{i=0}{v^{(i)}_1}$.
Then, for all $n \in \bbN^\star$, $\Omega(\bfv)=\Omega(\bfv^{(n)} )+ \omega^{(n)}$.
We set $\xi(\bfv) = \min\{|\bfv_i| \mid \bfv_i \neq 0\}$.
One has by~\cite[Lemma12]{JT09}  together with Lemma~\ref{lem::omega-upper-bound}
\[
\|\bfv\|_\infty \leq \Omega(\bfv) \leq \|\bfv\|_\infty + \xi(\bfv),
\]
and thus
\begin{equation*}
\| \bfv^{(n)} \|_\infty \leq \Omega(\bfv^{(n)} )\leq \| \bfv^{(n)} \|_\infty + \omega^{(n+1)} - \omega^{(n)},
\end{equation*}
or equivalently
\begin{equation*}
\| \bfv^{(n)} \|_\infty + \omega^{(n)}  \leq \Omega(\bfv)
    = \Omega(\bfv^{(n)}) + \omega^{(n)} \leq \| \bfv^{(n)} \|_\infty + \omega^{(n+1)}.
\end{equation*}
Since $\bfv_1+\bfv_2 \leq \bfv_3$, then $\bfv^{(n)}_1+\bfv^{(n)}_2 \leq \bfv^{(n)}_3$, for all $n \in \bbN^\star$.
It follows that $\lim_{n \to \infty}{ \bfv^{(n)}  }= (0,0,\bfv_3-(\bfv_1+\bfv_2))$
and $\lim_{n \to \infty}{  \omega^{(n)}   }= \bfv_1+\bfv_2$.
The non-$2$-connectedness of $\frakP(\vect{v},\Omega(\bfv))$ follows from the fact that
$\frakP(\vect{v}^{(n)}, \| \bfv^{(n)} \|_\infty )$ is not $2$-connected by~\cite{JT09}.
\end{proof}

Now it becomes natural to investigate the critical thickness of
normal vectors $\bfv$ for which $\dim_\bbQ(\bfv_1,\bfv_2,\bfv_3)=2$, $\bfv_1+\bfv_2 \leq \bfv_3$
and there exists $n_0 \in \bbN$ such that $\bfv^{(n_0)}_1=0$.
Note that in that case, $\dim_\bbQ(\bfv_1,\bfv_2)=1$.
Indeed  $\bfv_1+\bfv_2 \leq \bfv_3$ implies  that $\bfv^{(n)}_1+\bfv^{(n)}_2 \leq \bfv^{(n)}_3$ for all $n \in \bbN$.
% and the $2$-connectivity of $\frakP(\bfv,\Omega(\bfv))$ for such vectors. Let us notice that, in that case, $\dim_\bbQ(\bfv_1,\bfv_2,\bfv_3)<3$. With no loss of generality, one can suppose that $(\bfv_1,\bfv_2) \in \bbZ^2$ with $\gcd(\bfv_1,\bfv_2)=1$.

\begin{lemm}\label{cor::dim2}
Let $\bfv \in \Otp$ with $\dim_\bbQ(\bfv_1,\bfv_2,\bfv_3)=2$ and $\bfv_1+\bfv_2 \leq \bfv_3$.
Let $n_0 \in \bbN$  be such that $\bfv^{(n_0)}_1 = 0$.
The arithmetical discrete plane $\frakP( \bfv,\Omega(\bfv) )$ is $2$-connected
and $\Omega(\bfv) = \bfv_3+\gcd(\bfv_1,\bfv_2) $.
\end{lemm}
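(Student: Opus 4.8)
The plan is to reduce this three-dimensional statement to a computation for an arithmetical discrete line, by iterating $\bfF$ until the first coordinate vanishes and then projecting. First I would analyse the $\bfF$-expansion. Since $\bfv_1+\bfv_2\le\bfv_3$, the quantity $\delta:=\bfv_3-\bfv_1-\bfv_2\ge 0$ is invariant under $\bfF$ (immediate from the case analysis defining $\bfF$): $\bfv^{(n)}_3-\bfv^{(n)}_1-\bfv^{(n)}_2=\delta$ for all $n$. Hence $\bfv^{(n)}_3-\bfv^{(n)}_1=\bfv^{(n)}_2+\delta\ge\max(\bfv^{(n)}_1,\bfv^{(n)}_2-\bfv^{(n)}_1)$, so the third coordinate always remains the largest and $i_n\in\{1,2\}$ for every $n$. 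In other words the third coordinate is inert, while the pair $(\bfv^{(n)}_1,\bfv^{(n)}_2)$ evolves exactly under the ordered subtractive Euclidean algorithm. As this pair reaches a zero first coordinate at step $n_0$, and every step is unimodular, the existence of $n_0$ forces $\dim_\bbQ(\bfv_1,\bfv_2)=1$ and the surviving coordinate to be the (extended) $\gcd$: writing $g=\gcd(\bfv_1,\bfv_2)$ one gets $\bfv^{(n_0)}=(0,g,g+\delta)$. Moreover $\dim_\bbQ(\bfv_1,\bfv_2,\bfv_3)=2$ forces $\bfv_3\notin\bbQ g$, hence $\delta\notin\bbQ g$ and $\delta>0$; thus the planar vector $(g,g+\delta)$ is rationally independent with $0<g<g+\delta$.

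Next I would transfer everything through the two reduction tools. Applying Theorem~\ref{theo:algo_cc} $n_0$ times, $\frakP(\bfv,\Omega(\bfv))$ is $2$-connected if and only if $\frakP(\bfv^{(n_0)},\Omega(\bfv^{(n_0)}))$ is, and $\Omega(\bfv)=\Omega(\bfv^{(n_0)})+\sum_{i=0}^{n_0-1}\bfv^{(i)}_1$. Since subtraction removes the minimum from two coordinates, $\bfv^{(i)}_1=\tfrac12\big(\|\bfv^{(i)}\|_1-\|\bfv^{(i+1)}\|_1\big)$, and this telescopes to $\sum_{i=0}^{n_0-1}\bfv^{(i)}_1=\tfrac12\big((\bfv_1+\bfv_2+\bfv_3)-(2g+\delta)\big)=\bfv_1+\bfv_2-g$. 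Finally, as $\bfv^{(n_0)}$ has zero first coordinate, Lemma~\ref{lem::projection} identifies $\frakP(\bfv^{(n_0)},\omega)$ with $\bbZ\times\frakP((g,g+\delta),\omega)$ and yields $\Omega(\bfv^{(n_0)})=\Omega((g,g+\delta))$, the connecting thickness of an arithmetical discrete line (i.e. $1$-connectedness in $\bbZ^2$).

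The heart of the argument is therefore the planar statement: for a rationally independent pair $0<\alpha\le\beta$ one has $\Omega((\alpha,\beta))=\alpha+\beta=\|(\alpha,\beta)\|_1$, with $\frakP((\alpha,\beta),\alpha+\beta)$ being $1$-connected. I would prove this by a column (cut) argument. Group the points of $\frakP((\alpha,\beta),\omega)$ by their first coordinate $c$: within column $c$ the integers $x_2$ with $\alpha c+\beta x_2\in[0,\omega)$ form an interval of length $\omega/\beta$, a connected vertical $e_2$-path, non-empty once $\omega>\beta$. The only edges joining column $c$ to column $c+1$ are the $e_1$-edges at rows $r$ with both $\alpha c+\beta r$ and $\alpha(c+1)+\beta r$ in $[0,\omega)$, i.e. with $r$ in a half-open interval of length $(\omega-\alpha)/\beta$ that depends on $c$ only through a shift by $-\alpha/\beta$. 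If $\omega\ge\alpha+\beta$ this length is $\ge 1$, so every such interval contains an integer and all columns are linked into one component, giving $1$-connectedness; if $\beta<\omega<\alpha+\beta$ the length is $<1$, and since $\alpha/\beta$ is irrational, Kronecker's density theorem produces a $c$ whose interval contains no integer, so no edge crosses between the non-empty parts $\{x_1\le c\}$ and $\{x_1\ge c+1\}$ and $\frakP$ is disconnected. Thus the infimum equals $\alpha+\beta$ and is attained. This is the main obstacle, being the only place where irrationality, rather than the substitutive combinatorics of the previous sections, is used.

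Assembling the pieces with $(\alpha,\beta)=(g,g+\delta)$ gives $\Omega(\bfv^{(n_0)})=\Omega((g,g+\delta))=2g+\delta$, and $\frakP(\bfv^{(n_0)},\Omega(\bfv^{(n_0)}))$ is $2$-connected because the corresponding standard line is $1$-connected. By the equivalence from Theorem~\ref{theo:algo_cc} established above, $\frakP(\bfv,\Omega(\bfv))$ is then $2$-connected, and $\Omega(\bfv)=(\bfv_1+\bfv_2-g)+(2g+\delta)=\bfv_3+g=\bfv_3+\gcd(\bfv_1,\bfv_2)$, as claimed.
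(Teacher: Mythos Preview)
Your proof is correct and follows the same route as the paper: iterate $\bfF$ until the first coordinate vanishes, use Lemma~\ref{lem::projection} to reduce to a two-dimensional discrete line, identify its connecting thickness as $\|\cdot\|_1$, and recover $\Omega(\bfv)$ via the invariant $\|\bfv^{(n)}\|_1-2\Omega(\bfv^{(n)})$ (which you express equivalently as a telescoping sum). The only substantive difference is that the paper takes the planar fact $\Omega((\alpha,\beta))=\alpha+\beta$ for irrational $\alpha/\beta$ as known, while you supply a self-contained column/Kronecker argument; this fills a detail the paper leaves implicit but does not change the approach.
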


\begin{proof}
According to  Lemma~\ref{lem::projection}, the arithmetical discrete plane $\frakP(\bfv,\Omega(\bfv))$ is $2$-connected
if  $\frakP(\bfv^{(n_0)},\Omega(\bfv^{(n_0)}))$ is $1$-connected in $\bbZ^2$.
In fact,  $\frakP(\bfv^{(n_0)},\Omega(\bfv^{(n_0)}))$ is a standard discrete line in $\bbZ^2$.
Thus, $\Omega(\bfv^{(n_0)})) = \| \bfv^{(n_0)} \|_1$ and $\frakP(\bfv^{(n_0)},\Omega(\bfv^{(n_0)}))$
is $1$-connected, which yields $\frakP(\bfv,\Omega(\bfv))$ $2$-connected.

One checks that, for all $n \in \bbN$,
\[
\bfv_3-(\bfv_1+\bfv_2)
    = \bfv^{(n)}_3-(\bfv^{(n)}_1+\bfv^{(n)}_2)
    = \bfv^{(n_0)}_3-\bfv^{(n_0)}_2,
\]
and
\[
\| \bfv \|_1-2\Omega(\bfv)
    = \| \bfv^{(n)} \|_1-2\Omega(\bfv^{(n)})
    = \| \bfv^{(n_0)} \|_1-2\Omega(\bfv^{(n_0)})
    = - \| \bfv^{(n_0)} \|_1.
\]
Hence, $\Omega(\bfv) = \dfrac{\| \bfv \|_1 + \| \bfv^{(n_0)} \|_1}{2}$.
It remains to express $\| \bfv^{(n_0)} \|_1$ in terms of $\bfv_1$, $\bfv_2$ and $\bfv_3$.
Since $\bfv_1+\bfv_2 \leq \bfv_3$, we deduce that
$\bfv^{(n)}_1+\bfv^{(n)}_2 \leq \bfv^{(n)}_3$ for all $n \in \bbN$, $\bfv^{(n_0)}_2 = \gcd (\bfv_1,\bfv_2)$
and $\bfv^{(n_0)}_3 = \bfv_3-(\bfv_1+\bfv_2)-\gcd (\bfv_1,\bfv_2)$.
It follows that $\Omega(\bfv) =  \bfv_3+\gcd(\bfv_1,\bfv_2)$.
\end{proof}

%%%%%%%%%%%%%%%%%%%%%%%%%%%%%%%%%%%%%%%%%%%%%%%%%%%%%%%%%%%%%%%%%%%%%%%%%%%%%%%%%%%%%%%%%%%%%%%%
%%%%%%%%%%%%%%%%%%%%%%%%%%%%%%%%%%%%%%%%%%%%%%%%%%%%%%%%%%%%%%%%%%%%%%%%%%%%%%%%%%%%%%%%%%%%%%%%
%%%%%%%%%%%%%%%%%%%%%%%%%%%%%%%%%%%%%%%%%%%%%%%%%%%%%%%%%%%%%%%%%%%%%%%%%%%%%%%%%%%%%%%%%%%%%%%%
%%%%%%%%%%%%%%%%%%%%%%%%%%%%%%%%%%%%%%%%%%%%%%%%%%%%%%%%%%%%%%%%%%%%%%%%%%%%%%%%%%%%%%%%%%%%%%%%

\subsection{Critical connectedness}

\begin{theo}
\label{thm::main}
Let $\bfv \in \Otp$ with  $\bfv \neq 0$.
The arithmetical discrete plane $\frakP(\bfv,\Omega(\bfv))$ is $2$-connected if and only if
one of the following  two conditions holds:
\begin{enumerate}
\item either $\bfv \in \Fthree$;
\item or there exists $n \in \bbN$ such that $\bfv^{(n)}_1 = 0$ with $\dim_\bbQ(\bfv^{(n)}_2,\bfv^{(n)}_3 ) = 2$.
\end{enumerate}
\end{theo}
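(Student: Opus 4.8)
The plan is to prove both implications by a case analysis driven by $r=\dim_\bbQ(\bfv_1,\bfv_2,\bfv_3)$ and by the behaviour of the first coordinate $\bfv^{(n)}_1$ along the $\bfF$-expansion, reducing each case to one of the technical lemmas via the recursion of Theorem~\ref{theo:algo_cc}. First I would record two standing reductions. Since the matrices $\MFSx$ are unimodular, $\dim_\bbQ$ is invariant under $\bfF$; and a zero first coordinate is absorbing, because $\bfv^{(m)}_1=0$ forces $\bfF(\bfv^{(m)})=\bfv^{(m)}$, so the stabilised value is the same for $\bfv$ and for every $\bfv^{(m)}$, whence Condition~2 holds for $\bfv$ if and only if it holds for any $\bfv^{(m)}$. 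Moreover, if $\bfv\notin\Fthree$ then by Lemma~\ref{lem:3infty} the expansion contains only finitely many $3$'s, so some $m$ satisfies $\bfv^{(m)}_1+\bfv^{(m)}_2\leq\bfv^{(m)}_3$ (a relation preserved by $\bfF$); iterating Theorem~\ref{theo:algo_cc} shows $\frakP(\bfv,\Omega(\bfv))$ is $2$-connected iff $\frakP(\bfv^{(m)},\Omega(\bfv^{(m)}))$ is, so I may replace $\bfv$ by $\bfv^{(m)}$ and assume $\bfv_1+\bfv_2\leq\bfv_3$ whenever $\bfv\notin\Fthree$.

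For the ``if'' direction, suppose first that $\bfv\in\Fthree$ (Condition~1). The sets $\bfT_n$ form an increasing chain containing $\mathbf 0$, each $2$-connected by Proposition~\ref{prop::Tn_conn}, so $\bigcup_n\bfT_n$ is $2$-connected; by Proposition~\ref{prop::T_n} it lies in $\frakP(\bfv,\Omega(\bfv))$, and by Propositions~\ref{prop:pn_tn} and~\ref{prop:pn_union} it contains the naive plane $\frakP(\bfv,\|\bfv\|_\infty)$. It then remains to connect an arbitrary $\bfx\in\frakP(\bfv,\Omega(\bfv))$ to this skeleton: since $\dim_\bbQ=3$ by Proposition~\ref{prop:F3} all coordinates are positive, so whenever $\langle\bfx,\bfv\rangle\geq\|\bfv\|_\infty=\bfv_3$ the point $\bfx-\bfe_3$ remains in $\frakP(\bfv,\Omega(\bfv))$ with strictly smaller height; iterating this $2$-adjacent descent reaches a point of height $<\|\bfv\|_\infty$, i.e. of the naive plane, in finitely many steps. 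Hence $\frakP(\bfv,\Omega(\bfv))$ is $2$-connected. If instead Condition~2 holds, with $\bfv^{(n)}_1=0$ and $\dim_\bbQ(\bfv^{(n)}_2,\bfv^{(n)}_3)=2$, then $\dim_\bbQ(\bfv^{(n)})=2$ and $\bfv^{(n)}_1+\bfv^{(n)}_2=\bfv^{(n)}_2\leq\bfv^{(n)}_3$, so Lemma~\ref{cor::dim2} applied to $\bfv^{(n)}$ (with $n_0=0$) gives that $\frakP(\bfv^{(n)},\Omega(\bfv^{(n)}))$ is $2$-connected, and Theorem~\ref{theo:algo_cc} lifts this to $\frakP(\bfv,\Omega(\bfv))$.

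For the ``only if'' direction I assume neither condition holds, so $\bfv\notin\Fthree$ and I may assume $\bfv_1+\bfv_2\leq\bfv_3$, and I split on $r$. If $r=3$, no $\bfv^{(n)}_1$ can vanish (a zero coordinate would give a rational relation), so $\bfv^{(n)}_1>0$ for all $n$ and Lemma~\ref{cor::not_in_F3} shows $\frakP(\bfv,\Omega(\bfv))$ is not $2$-connected. If $r=2$ and some $\bfv^{(n)}_1=0$, then $\dim_\bbQ(\bfv^{(n)}_2,\bfv^{(n)}_3)=2$, which is exactly Condition~2 and is excluded; hence again $\bfv^{(n)}_1>0$ for all $n$ and Lemma~\ref{cor::not_in_F3} applies. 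There remains $r=1$: after rescaling the coordinates to $\bbN$, the map $\bfF$ strictly decreases $\bfv_1+\bfv_2+\bfv_3$ as long as the minimal coordinate is positive, so the expansion reaches $\bfv^{(n_0)}_1=0$, with $\dim_\bbQ(\bfv^{(n_0)}_2,\bfv^{(n_0)}_3)=1$. By Theorem~\ref{theo:algo_cc} and Lemma~\ref{lem::projection} it then suffices to show that the planar discrete line $\frakP((\bfv^{(n_0)}_2,\bfv^{(n_0)}_3),\Omega)$, whose normal has rationally dependent coordinates, is not $1$-connected at its connecting thickness. If one of the two coordinates is zero this is Lemma~\ref{lem::d=1} (then $\Omega=0$ and the line is empty, hence not connected by the non-emptiness convention); otherwise, rescaling to a coprime pair $(p,q)$, the level-$0$ points form the lattice $\bbZ(q,-p)$ whose consecutive elements are at $\ell^1$-distance $p+q>1$, and a short band argument shows that confining a unit-step path to the admissible levels forbids realising the kernel shift, so every connected component is finite while the line is infinite; thus $\Omega=p+q-1$ is not attained and $\frakP((\bfv^{(n_0)}_2,\bfv^{(n_0)}_3),\Omega)$ is disconnected.

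I expect the case $r=1$ to be the main obstacle: the cases $r=3$ and $r=2$ are bookkeeping on top of Theorem~\ref{theo:algo_cc} and the already-proved Lemmas~\ref{cor::not_in_F3} and~\ref{cor::dim2} (the genuinely hard $\Fthree$-connectivity having been absorbed into Sections~\ref{sec:trans}--\ref{sec:gen}), whereas the disconnection of a rational planar line at its critical thickness is supplied by none of these lemmas. The delicate point is the band argument: one must argue rigorously that the $p+q-1$ admissible levels cannot accommodate the net kernel displacement $(q,-p)$, which is precisely what separates the rational case $r=1$ (critical thickness $p+q-1<p+q=\|\cdot\|_1$, disconnected) from the standard line underlying Lemma~\ref{cor::dim2} (where $\dim_\bbQ=2$ makes the critical line a genuinely $1$-connected standard line).
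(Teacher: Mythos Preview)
Your argument follows the paper's almost exactly: the same case split on $\dim_\bbQ$ and on whether some $\bfv^{(n)}_1$ vanishes, the same use of Propositions~\ref{prop::T_n}, \ref{prop::Tn_conn}, \ref{prop:pn_tn}, \ref{prop:pn_union} for the $\Fthree$ case, and the same appeals to Lemmas~\ref{cor::not_in_F3} and~\ref{cor::dim2} and to the recursion of Theorem~\ref{theo:algo_cc} elsewhere. (In the $\Fthree$ case the paper descends by $\bfe_1$ rather than $\bfe_3$; either works, and a single step suffices since $\|\bfv\|_1/2-\bfv_i<\bfv_3$ for $i=1,3$.)

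The one place where you diverge is the case $r=1$, which you flag as the obstacle. The paper avoids your band argument by iterating the dimension drop once more: after reaching $\bfv^{(n_0)}_1=0$ with $\dim_\bbQ(\bfv^{(n_0)}_2,\bfv^{(n_0)}_3)=1$, apply Lemma~\ref{lem::projection} to pass to the discrete line in $\bbZ^2$, then run the $2$-dimensional analogue of Theorem~\ref{theo:algo_cc} (this is just the subtractive Euclidean algorithm on $(\bfv^{(n_0)}_2,\bfv^{(n_0)}_3)$, and the connectedness recursion holds for the same reasons, cf.~\cite{JT09}) until the smaller coordinate vanishes. A second application of Lemma~\ref{lem::projection} lands in $\bbZ^1$, where Lemma~\ref{lem::d=1} gives $\Omega=0$ and $\frakP(\cdot,0)=\varnothing$, which is not connected by convention. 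This is cleaner than arguing directly that the rational line at thickness $p+q-1$ is disconnected, and it is what the paper does.
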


\begin{proof}
Let $\bfv \in \Otp$.
\begin{itemize}
\item
We first  suppose that $\bfv \in \Fthree$ and let $\bfx \in \frakP(\bfv,\Omega(\bfv))$.
Thanks to Theorem~\ref{theo:algo_cc} we have $\Omega(\bfv) = \|\bfv\|_1/2$.
If $\| \bfv \|_\infty \leq \langle\bfx,\bfv \rangle < \|\bfv\|_1/2$, then
$\| \bfv \|_\infty - \bfv_1 \leq \langle\bfx-\bfe_1,\bfv \rangle
    < \|\bfv\|_1/2 - \bfv_1
    < \|\bfv \|_\infty$, $\bfx-\bfe_1 \in \frakP(\bfv, \| \bfv \|_\infty)$.
In other words, an element $\bfx$ of $\frakP(\bfv,\Omega(\bfv))$
either belongs to $\frakP(\bfv,\| \bfv \|_\infty)$ or is $2$-adjacent to an element of $\frakP(\bfv,\| \bfv \|_\infty)$.

Now, given $\bfy \in \frakP(\bfv,\Omega(\bfv))$, both $\bfx$ and $\bfy$ belong,
or are adjacent to $\frakP(\bfv,\| \bfv \|_\infty)$, so they are $2$-connected in $\frakP(\bfv,\Omega(\bfv))$ because:
    \begin{itemize}
    \item
    $\frakP(\bfv,\| \bfv \|_\infty) \subseteq \cup_{n=0}^\infty \bfT_n$,
    thanks to Propositions~\ref{prop:pn_tn} and~\ref{prop:pn_union},
    \item
    $\cup_{n=0}^\infty \bfT_n$ is $2$-connected: it is an increasing union of sets $\bfT_n$
    which are $2$-connected thanks to Proposition~\ref{prop::Tn_conn},
    \item
    $\cup_{n=0}^\infty \bfT_n \subseteq \frakP(\bfv,\Omega(\bfv))$, thanks to Proposition~\ref{prop::T_n}.
    \end{itemize}
\item
Assume now $\bfv  \not \in \Fthree$ and  there exists $n \in \bbN$ such that $\bfv^{(n)}_1 = 0$
with $\dim_\bbQ(\bfv^{(n)}_2,\bfv^{(n)}_3 ) = 2$.
The arithmetical discrete plane $\frakP(\bfv,\Omega(\bfv))$ is $2$-connected
if so is $\frakP(\bfv^{(n)},\Omega(\bfv^{(n)}))$, by Theorem~\ref{theo:algo_cc}.
We conclude by noticing that $\frakP(\bfv^{(n)},\Omega(\bfv^{(n)})$ is $2$-connected thanks to Lemma~\ref{cor::dim2}.
\end{itemize}
We now prove the converse implication.
We thus assume  that the  arithmetical discrete plane $\frakP(\bfv,\Omega(\bfv))$ is $2$-connected.
\emph{A priori}, several cases occur.
\begin{enumerate}
\item
Suppose $\dim_{\bbQ}(\bfv_1,\bfv_2,\bfv_3)=1$.
We first assume $\bfv \in \bbZ^3$ and $\gcd(\bfv_1,\bfv_2,\bfv_3)=1$.
Let $n_1 \in \bbN$ be  such that $\bfv^{(n_1)} _1 = 0$,
and let $n_2 \in \bbN$ such that $\bfv^{(n_2)} _2=0$.
Then, $\frakP(\bfv,\Omega(\bfv))$   $2$-connected  implies that $\frakP(\bfv^{(n)} _3,\Omega(\bfv^{(n_)} _3))$
is connected for $n \geq n_2$, by Theorem~\ref{theo:algo_cc} together with  Lemma~\ref{lem::projection}.
Now, from Lemma~\ref{lem::d=1}, $\Omega(\bfv^{(n)} _3) = 0$ and $\frakP(\bfv^{(n)} _3,0)$ is empty,
hence not connected (by definition), a contradiction.
This also implies   that  we cannot  have $2$-connectedness if $\dim_{\bbQ}(\bfv_1,\bfv_2,\bfv_3)=1$
(even if $\bfv \not \in \bbZ^3$   and $\gcd(\bfv_1,\bfv_2,\bfv_3)\neq 1$).
\item
Suppose $\dim_{\bbQ}(\bfv_1,\bfv_2,\bfv_3)>1$ with  $\bfv \not \in \Fthree$.
Moreover, suppose $\bfv^{(n)}_1 >0$ for all $n \in \bbN$.
Let $n \in \bbN$ such that $\bfv^{(n)} _1 + \bfv^{(n)} _2 \leq \bfv^{(n)} _3$.
One has  $\frakP(\bfv^{(n)},\Omega(\bfv^{(n)}))$ $2$-connected by Theorem~\ref{theo:algo_cc}.
According to  Lemma~\ref{cor::not_in_F3}, $\Omega(\bfv^{(n)}) = \| \bfv^{(n)} \|_\infty$
and the plane  $\frakP(\bfv^{(n)},\Omega(\bfv^{(n)}))$ is not $2$-connected, a contradiction.
\end{enumerate}
By Proposition~\ref{prop:F3}, the two remaining possible cases  are either $\bfv \in \Fthree$,
or $\dim_\bbQ(\bfv^{(n)}_2,\bfv^{(n)}_3 ) = 2$ and there exists $n \in \bbN$ such that $\bfv^{(n)}_1 = 0$.
\end{proof}

%%%
%%%
%%%

%%%%%%%%%%%%%%%%%%%%
%%%%%%%%%%%%%%%%%%%%
%%%%%%%%%%%%%%%%%%%%

\section*{Acknowledgment}
The authors warmly thank \'Eric Domenjoud and Laurent Vuillon for many helpful discussions.
This research has been partly driven by some computer experiments using the Sage software~\cite{Sage}.
This  work is  supported by \emph{ANR KIDICO}, through contract \texttt{ANR-2010-BLAN-0205},
and by \emph{ANR Dyna3S}, through contract \texttt{ANR-13-BS02-0003}.

\bibliographystyle{amsalpha}
\bibliography{biblio}

\providecommand{\bysame}{\leavevmode\hbox to3em{\hrulefill}\thinspace}
\providecommand{\MR}{\relax\ifhmode\unskip\space\fi MR }
% \MRhref is called by the amsart/book/proc definition of \MR.
\providecommand{\MRhref}[2]{%
  \href{http://www.ams.org/mathscinet-getitem?mr=#1}{#2}
}
\providecommand{\href}[2]{#2}
\begin{thebibliography}{BDJP14}

\bibitem[And03]{And03}
{\'E}ric Andres, \emph{Discrete linear objects in dimension n: the standard
  model}, Graphical Models \textbf{65} (2003), no.~1-3, 92--111.

\bibitem[AAS97]{AAS97}
{\'E}ric Andres, Raj Acharya, and Claudio Sibata, \emph{Discrete analytical
  hyperplanes}, CVGIP: Graphical Model and Image Processing \textbf{59} (1997),
  no.~5, 302--309.

\bibitem[ABI02]{ABI02}
Pierre Arnoux, Val{\'e}rie Berth{\'e}, and Shunji Ito, \emph{Discrete planes,
  {$\mathbb Z^2$}-actions, {J}acobi-{P}erron algorithm and substitutions}, Ann.
  Inst. Fourier (Grenoble) \textbf{52} (2002), no.~2, 305--349.

\bibitem[ABS04]{ABS04}
Pierre Arnoux, Val{\'e}rie Berth{\'e}, and Anne Siegel, \emph{Two-dimensional
  iterated morphisms and discrete planes}, Theoret. Comput. Sci. \textbf{319}
  (2004), no.~1-3, 145--176.

\bibitem[AI01]{AI01}
Pierre Arnoux and Shunji Ito, \emph{Pisot substitutions and {R}auzy fractals},
  Bull. Belg. Math. Soc. Simon Stevin \textbf{8} (2001), no.~2, 181--207.

\bibitem[AS13]{ASt13}
Pierre Arnoux and {\v{S}}t{\v{e}}p{\'a}n Starosta, \emph{The {R}auzy gasket},
  Further Developments in Fractals and Related Fields (Julien Barral and
  St\'ephane Seuret, eds.), Trends in Mathematics, Birkh\"auser Boston, 2013,
  pp.~1--23.

\bibitem[AD13]{Av-Del13}
Artur Avila and Vincent Delecroix, Pisot property for Brun and fully
  subtractive algorithm, Preprint, 2013.

\bibitem[BDJP14]{Jam12}
Val\'erie Berth\'e, {\'E}ric Domenjoud, Damien Jamet, and Xavier Proven\c{c}al,
  \emph{Fully subtractive algorithm, tribonacci numeration and connectedness of
  discrete planes}, RIMS Lecture note `Kokyuroku Bessatu', to appear, 2014.

\bibitem[BJS12]{BJS}
Val{\'e}rie Berth{\'e}, Timo Jolivet, and Anne Siegel, \emph{Substitutive
  {A}rnoux-{R}auzy sequences have pure discrete spectrum}, Unif. Distrib.
  Theory \textbf{7} (2012), no.~1, 173--197.

\bibitem[BJJP13]{BJJP2013}
Valérie Berthé, Damien Jamet, Timo Jolivet, and Xavier Provençal,
  \emph{Critical connectedness of thin arithmetical discrete planes}, DGCI,
  Lecture Notes in Computer Science, vol. 7749, Springer Berlin Heidelberg,
  2013, pp.~107--118.

\bibitem[BB04]{BB04}
Valentin~E. Brimkov and Reneta~P. Barneva, \emph{Connectivity of discrete
  planes}, Theor. Comput. Sci. \textbf{319} (2004), no.~1-3, 203--227.

\bibitem[DJT09]{DJT09}
{\'E}ric Domenjoud, Damien Jamet, and Jean-Luc Toutant, \emph{On the connecting
  thickness of arithmetical discrete planes}, DGCI, 2009, pp.~362--372.

\bibitem[DPV14]{DomProVui14}
{\'E}ric Domenjoud, Xavier Proven\c{c}al, and Laurent Vuillon, \emph{Facet
  connectedness of discrete hyperplanes with zero intercept: the general case},
  DGCI, 2014.

\bibitem[DV12]{DomV2012}
{\'E}ric Domenjoud and Laurent Vuillon, \emph{Geometric palindromic closure},
  Uniform Distribution Theory \textbf{7} (2012), no.~2, 109--140.

\bibitem[Fer05]{Fer06}
Thomas Fernique, \emph{Bidimensional sturmian sequences and substitutions},
  Developments in Language Theory, 2005, pp.~236--247.

\bibitem[FKN11]{FokKraaiNak}
Robert Fokkink, Cor Kraaikamp, and Hitoshi Nakada, \emph{On {S}chweiger's
  problems on fully subtractive algorithms}, Israel J. Math. \textbf{186}
  (2011), 285--296. \MR{2852324 (2012m:11104)}

\bibitem[IO93]{IO93}
Shunji Ito and Makoto Ohtsuki, \emph{Modified {J}acobi-{P}erron algorithm and
  generating {M}arkov partitions for special hyperbolic toral automorphisms},
  Tokyo J. Math. \textbf{16} (1993), no.~2, 441--472.

\bibitem[JT06]{JT06}
Damien Jamet and Jean-Luc Toutant, \emph{On the connectedness of rational
  arithmetic discrete hyperplanes}, DGCI, 2006, pp.~223--234.

\bibitem[JT09]{JT09}
\bysame, \emph{Minimal arithmetic thickness connecting discrete planes},
  Discrete Applied Mathematics \textbf{157} (2009), no.~3, 500--509.

\bibitem[KM95]{KM95}
Cor Kraaikamp and Ronald Meester, \emph{Ergodic properties of a dynamical
  system arising from percolation theory}, Ergodic Theory Dynam. Systems
  \textbf{15} (1995), no.~4, 653--661.

\bibitem[Mee89]{Mee89}
Ronald W.~J. Meester, \emph{An algorithm for calculating critical probabilities
  and percolation functions in percolation models defined by rotations},
  Ergodic Theory Dynam. Systems \textbf{9} (1989), no.~3, 495--509.

\bibitem[MN89]{MN89}
Ronald W.~J. Meester and Tomasz Nowicki, \emph{Infinite clusters and critical
  values in two-dimensional circle percolation}, Israel J. Math. \textbf{68}
  (1989), no.~1, 63--81.

\bibitem[Rev91]{Rev91}
Jean-Pierre Reveill\`es, \emph{G\'eom\'etrie discr\`ete, calcul en nombres
  entiers et algorithmique}, {Th\`ese d'\'etat}, Universit\'e Louis Pasteur,
  Strasbourg, 1991.

\bibitem[Sag]{Sage}
The Sage{~}Development{~}Team, \emph{{S}age {M}athematics {S}oftware},
  \texttt{http://www.sagemath.org}.

\bibitem[Sch00]{SCH}
Fritz Schweiger, \emph{Multidimensional continued fractions}, Oxford Science
  Publications, Oxford University Press, Oxford, 2000.

\bibitem[TZ03]{TijZam}
R.~Tijdeman and L.~Zamboni, \emph{Fine and {W}ilf words for any periods},
  Indag. Math. (N.S.) \textbf{14} (2003), no.~1, 135--147. \MR{2015604
  (2004i:68182)}

\end{thebibliography}

%%%%%%%%%%%%%%%%%%%%
%%%%%%%%%%%%%%%%%%%%
%%%%%%%%%%%%%%%%%%%%

\end{document}